\documentclass[12pt]{article}

\usepackage[T1]{fontenc}
\usepackage[utf8]{inputenc}

\usepackage[left=1in,top=1in,right=1in,bottom=1in,head=.1in]{geometry}

\usepackage[bookmarksopen=true,bookmarks=true,pdfencoding=auto,psdextra,colorlinks,
linkcolor={blue!80!black},
citecolor={green!40!black},
urlcolor={red!50!black}
]{hyperref}
\usepackage{bookmark}
\usepackage[usenames,svgnames,dvipsnames]{xcolor}
\usepackage[font=small,labelfont=bf]{caption}
\captionsetup{belowskip=-10pt}
\usepackage{booktabs}
\usepackage{multicol}
\usepackage{multirow}
\usepackage{bm}

\usepackage[normalem]{ulem}

\usepackage{algorithm}
\usepackage{algorithmicx}
\RequirePackage{algpseudocode}

\usepackage{tikz}
\usetikzlibrary{arrows.meta,arrows}

\usepackage[many]{tcolorbox}  
\definecolor{main}{HTML}{5989cf}    
\definecolor{sub}{HTML}{cde4ff}     

\usepackage{amsmath,amssymb,amsthm}
\usepackage{dsfont}
\usepackage{enumerate}
\usepackage{natbib}
\usepackage{graphicx}
\usepackage{subfigure}
\usepackage{soul}
\setstcolor{red}

\newcommand{\beginsupplement}{
    \appendix
    \setcounter{section}{0}
    \renewcommand{\thesection}{S\arabic{section}}
    \setcounter{table}{0}
    \renewcommand{\thetable}{S\arabic{table}}
    \setcounter{figure}{0}
    \renewcommand{\thefigure}{S\arabic{figure}}
    \newcounter{SIfig}
    \renewcommand{\theSIfig}{S\arabic{SIfig}}}

\usepackage{mystyle}
\usepackage{bibunits}
\allowdisplaybreaks

\newcommand{\titletext}{Extrapolated cross-validation for randomized ensembles}

\newcommand{\removelinebreaks}[1]{%
      \def\\{\relax}#1}
\def\titleRLB{\removelinebreaks{\titletext}}

\title{\titletext\bigskip}

\newcommand{\footremember}[2]{%
    \footnote{#2}
    \newcounter{#1}
    \setcounter{#1}{\value{footnote}}%
}
\newcommand{\footrecall}[1]{%
    \footnotemark[\value{#1}]%
}

\setcounter{footnote}{1}

\author{\normalsize
Jin-Hong Du\footremember{cmustats}{Department of Statistics and Data Science, Carnegie Mellon University, Pittsburgh, PA 15213, USA.}\footremember{cmuml}{Machine Learning Department, Carnegie Mellon University, Pittsburgh, PA 15213, USA.} 
\and \normalsize  Pratik Patil\footremember{berkeleystats}{Department of Statistics, University of California, Berkeley, CA 94720, USA.}
\and {\normalsize Kathryn Roeder\footrecall{cmustats}} 
\and {\normalsize Arun Kumar Kuchibhotla\footrecall{cmustats}}
}

\date{\bigskip\normalsize\today}

\begin{document}

\maketitle

\begin{abstract}    
    Ensemble methods such as bagging and random forests are ubiquitous in various fields, from finance to genomics. 
    Despite their prevalence, the question of the efficient tuning of ensemble parameters has received relatively little attention.
    This paper introduces a cross-validation method, ECV (Extrapolated Cross-Validation), for tuning the ensemble and subsample sizes in randomized ensembles.
    Our method builds on two primary ingredients: initial estimators for small ensemble sizes using out-of-bag errors and a novel risk extrapolation technique that leverages the structure of prediction risk decomposition.
    By establishing uniform consistency of our risk extrapolation technique over ensemble and subsample sizes, we show that ECV yields $\delta$-optimal (with respect to the oracle-tuned risk) ensembles for squared prediction risk.
    Our theory accommodates general predictors, only requires mild moment assumptions, and allows for high-dimensional regimes where the feature dimension grows with the sample size.
    As a practical case study, we employ ECV to predict surface protein abundances from gene expressions in single-cell multiomics using random forests under a computational constraint on the maximum ensemble size.
    Compared to sample-split and $K$-fold cross-validation, ECV achieves higher accuracy by avoiding sample splitting. 
    Meanwhile, its computational cost is considerably lower owing to the use of the risk extrapolation technique.
\end{abstract}

\bigskip
\textbf{Keywords:}
Ensemble learning; Bagging; Random forest; Risk extrapolation; Tuning and model selection; Distributed learning.

\setcounter{tocdepth}{2}

\setcounter{tocdepth}{3}
\clearpage

\begin{bibunit}[apalike]
\section{Introduction}

Bagging and its variants are popular randomized ensemble methods in statistics and machine learning. These methods combine multiple models, each fitted on different bootstrapped or subsampled datasets, to improve prediction accuracy and stability \citep{breiman_1996,pugh_2002}, which is well-suited for large-scale distributed computation.
The success of these methods lies in the careful tuning of key parameters: the \emph{ensemble size} $M$ and the \emph{subsample size} $k$ \citep{hastie2009elements}.
As $M$ grows, the predictive accuracy improves while prediction variance decreases and stabilizes, a concept known as algorithmic convergence \citep{lopes2019estimating,lopes2020measuring}.
However, in the era of big data \citep{politis2023}, achieving a precise approximation in the infinity ensemble is challenging due to computational costs.
This necessitates the selection of a suitable value of $M$ to strike a balance between data-dependent considerations and budget constraints, without the requirement for it to scale proportionally with the sample size.
Further, the number of subsampled/bootstrapped observations, $k$, used for each predictor plays a crucial role in ensemble learning \citep{martinez2010out}.
In low-dimensional scenarios, a smaller $k$ can yield consistent results \citep{politis1994large,bickel1997resampling}; however, in high-dimensional scenarios, the prediction risk may not have a straightforward monotonic relationship with subsample size, exhibiting instead multiple descent behaviors \citep{hastie2022surprises,chen_min_belkin_karbasi_2020,patil2022mitigating}.
By restricting the subsample size, there has been some theoretical work on random forests showing that consistency and/or asymptotic normality can be achieved under certain regularity conditions \citep{scornet2015consistency,mentch2016quantifying,peng2022rates,wager2018estimation}.
Selecting the right subsample size is thus also of paramount importance for optimal predictive performance.

Several strategies have been proposed to tune either the ensemble size $M$ or the subsample size $k$.
For instance, to choose an ensemble size $M$, a variance stabilization strategy is proposed by \citet{lopes2019estimating} and \citet{lopes2020measuring}.
This approach relies on the convergence rate of variance or quantile estimators, using these metrics to gauge the point at which the ensemble's performance stabilizes as $M$ approaches infinity.
Such an approach effectively helps reduce computational expenses.
On the other hand, determining the optimal subsample size $k$ is a more difficult task and generally tuned by standard cross-validation (CV) methods.
As one of the most basic of the CV methods, sample-split CV estimates the predictive risk of every predictor associated with a configuration of parameters using independent hold-out observations \citep{patil2022bagging}.
Another commonly used CV method, $K$-fold CV, repeatedly fits each candidate predictor on $K$ different subsets of the data and uses their average to estimate the prediction risk.

While these aforementioned methods offer ways to tune $M$ and $k$, they have several drawbacks.
In terms of turning over the ensemble size $M$, the specialized method proposed by \citet{lopes2019estimating} and \citet{lopes2020measuring} involves monitoring the variability of the test errors as a function of $M$.
However, as this approach focuses solely on \emph{the scale of variance of the risk} rather than the \emph{prediction risk} itself, it does not provide any suboptimality guarantee compared to the optimal risk of an infinite ensemble \citep{lejeune2020implicit}.
Furthermore, the method does not provide any estimators for the prediction risks for any finite ensemble size $M$.
In terms of tuning the subsample size $k$, the traditional CV methods such as \splitcv and $K$-fold CV can be significantly impacted by finite-sample effects due to sample splitting, particularly in high-dimensional scenarios \citep{wang2018approximate,rad_maleki_2020,patil2022bagging}.
Furthermore, these generic CV methods must evaluate every possible ensemble and subsample size within an arbitrarily chosen search space, which often requires exploring larger ensemble sizes, thus demanding more computational resources. 
Yet, even with these, certifying any optimality outside this predefined search is not generally possible. These drawbacks highlight the need for more efficient tuning methods for ensemble learning.

\begin{figure}[!t]
    \centering
    \includegraphics[width=\textwidth]{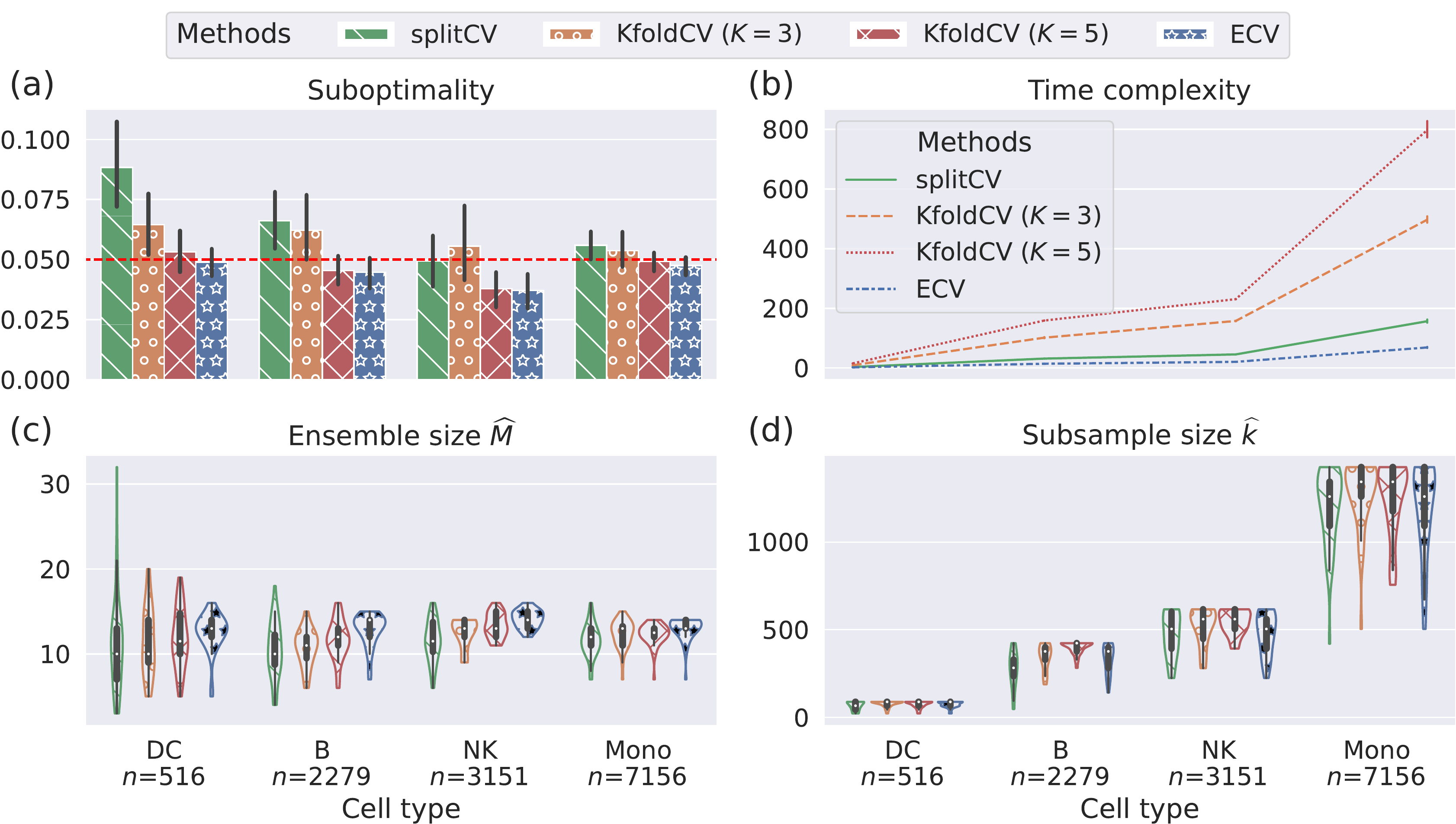}
    \caption{Overview of different cross-validation methods for predicting 50 surface proteins based on 5,000 genes in the single-cell sequencing multiomic datasets from \citet{hao2021integrated} using random forests.
    The proposed \ECV method estimates the prediction risks based on 20 trees. 
    Each panel shows cell types DC, B, NK, and Mono.
    {(a)} The out-of-sample suboptimality compared to the optimal random forest with $50$ trees.
    The CV predictors are tuned to be $\delta$-optimal in terms of normalized mean squared errors.
    The error bars show the standard deviations over 50 proteins, and the red dashed line indicates the optimality threshold $\delta=0.05$.
    {(b)} The time consumption in seconds.
    {(c)} The cross-validated ensemble size $\hat{M}$.
    {(d)} The cross-validated subsample size $\hat{k}$.}
    \label{fig:oobcv_overview}
\end{figure}

This paper seeks to address both the theoretical and practical challenges associated with ensemble parameter tuning.
To this end, we develop a CV method that can efficiently and consistently tune both the ensemble and subsample sizes.
We focus primarily on randomized ensembles such as bagging and subagging \citep{breiman_1996,buhlmann2002analyzing} and rely on out-of-bag observations to estimate the conditional prediction risk.
Our proposed method, termed ECV ({E}xtrapolated {C}ross-{V}alidation), enjoys several advantages over the previously mentioned approaches.
We highlight two of them below:
(1) \emph{Statistical consistency}: ECV is versatile and model-agnostic and is applicable to general ensemble predictors.
It provides uniform consistency in estimating the actual prediction risk of ensembles over all ensemble and subsample sizes under mild conditions.
It also notably outperforms standard CV methods in finite samples, especially in high-dimensional settings.
(2) \emph{Computational efficiency}: 
\ECV operates by estimating the risk of ensembles with small ensemble sizes ($M=1,2$) using out-of-bag observations and then extrapolates the risk estimates to arbitrary ensemble sizes.
Unlike sample-split and $K$-fold CV, our method does not require fitting an ensemble for every ensemble size or explicitly estimating prediction risks for every ensemble size, and can serve as a valuable tool for assessing the fitness of ensemble predictors.
Though our focus in the current paper is on ensemble and subsample size tuning, \ECV is flexible for tuning other hyperparameters efficiently by risk extrapolation.
As a result, ECV significantly reduces the computational burden, making it an efficient method for ensemble parameter tuning.

Before delving into the details of our method, we take the opportunity to demonstrate these key points through a real-world example. We apply \ECV on four single-cell datasets and aim to select a $\delta$-optimal random forest in \Cref{sec:app-sc}, so that its prediction risk is no more than $\delta=0.05$ away from the best random forest consisting of 50 trees.
In \Cref{fig:oobcv_overview}, we compare the performance of the sample-split CV \citep{patil2022bagging}, $K$-fold CV with $K=3,5$, and our method \ECV applied to four datasets, each corresponding to a different cell type obtained from \citep{hao2021integrated}. 
Further details regarding this application can be found in \Cref{sec:app-sc}.
Both of the commonly used CV methods require estimating the risks with all possible choices of ensemble size $M$ and subsample size $k$. 
To ensure a fair comparison, we use the same search space of $(M,k)$ for all methods.
Because sample splitting introduces additional randomness and the reduced sample size has significant finite sample effects, we observe from \Cref{fig:oobcv_overview}(a) that sample-split CV does not control the out-of-sample error within the specified tolerance of $\delta=0.05$ away from the best possible error.
On the other hand, even though $K$-fold CV gives the valid error control as \ECV, it costs extra computational time, which significantly increases as the sample size increases; see \Cref{fig:oobcv_overview}(b).
Overall, the distributions of tuned ensemble parameters $(\hat{M},\hat{k})$ are similar between $K$-fold CV and \ECV; see \Cref{fig:oobcv_overview}(c)-(d).
These results demonstrate the practical effectiveness of \ECV in addressing the above drawbacks for ensemble parameter tuning.

We next provide an overview of the main results and delineate the structure for the remainder of the paper.
In \Cref{sec:bagging}, we outline the setup of the tuning problem in the context of randomized ensemble learning.
We also provide a comprehensive review of prior work on cross-validation and tuning and contrast our method to earlier work.
In \Cref{sec:oobcv}, we lay the foundation for our proposed method by constructing the theoretical ingredients essential to our main algorithm, in particular, the extrapolation risk estimation strategy. 
Through a non-asymptotic analysis, we further demonstrate the convergence rate and uniform consistency of the proposed risk estimator for general predictors and data distributions under a mild moment condition (see \Cref{lem:consistency-oobcv}).
        
In \Cref{subsec:risk-extrapolation}, we introduce our main algorithm and discuss its theoretical properties and various practical considerations.
We prove that the resulting tuned ensemble obtains the best possible ensemble risk over all ensemble and subsample sizes up to a specified tolerance of $\delta$ (see \Cref{thm:limiting-oobcv-for-arbitrary-M-cond}).
In \Cref{sec:simulation}, we examine \ECV's generality and effectiveness with various types of predictors. 
In \Cref{sec:app-sc}, comparisons with sample-split and $K$-fold CV in the protein prediction problem highlight the statistical and computational benefits of \ECV on low- and high-dimensional datasets, under a computational budget for the maximum ensemble size.
In \Cref{sec:conclusion}, we conclude the paper with a brief discussion that acknowledges some limitations of the method and provides avenues for future work.
The code to replicate all our experiments can be obtained from \url{https://jaydu1.github.io/overparameterized-ensembling/ecv} and the Python package implementing the ECV method can be found on the GitHub repository \url{https://github.com/jaydu1/ensemble-cross-validation}.

\section{Randomized ensembles}\label{sec:bagging}

    We consider a supervised regression setup.
    Suppose $\cD_n = \{(\bx_1, y_1), \ldots, (\bx_n, y_n)\}$ represents a dataset with independent and identically distributed random vectors from $\RR^p\times\mathbb{R}$.
    We will not assume any specific data model, only that the second moment of the response is finite, i.e., $\EE(y_1^2)<\infty$.
    A prediction procedure $\hf(\cdot; \cdot)$ is defined as a map from $\RR^p\times \mathscr{P}(\cD_n) \to \RR$, where $\sP(A)$, for any set $A$, represents the power set of $A$. 

    Bagging (\smash{as in {b}ootstrap-{agg}regat{ing}}) traditionally refers to computing predictors multiple times based on bootstrapped data~\citep{breiman_1996}, which can involve repeated observations. 
    There is another version of bagging called subagging (\smash{as in {sub}sample-{agg}regat{ing}}) in \citet[Section 3.2]{buhlmann2002analyzing} where we sample observations without replacement.
    Our method and analysis apply to both of these sampling strategies. Formally, these can be understood as simple random samples from a finite set, commonly used in survey sampling.
    Fix any $k\in[n]$, we define the indices $\{I_{\ell}\}_{\ell=1}^M$ to be $M$ independent samples with replacement from $\cI_k$ (denoted by $\{I_{\ell}\}_{\ell=1}^M\overset{\SRS}{\sim}\cI_k$). Here $\cI_k$ is defined for bagging and subagging as
    \begin{align}
        \cI_k= 
        \begin{cases}
            \{\{i_1, i_2, \ldots, i_k\}:\, 1\le i_1 \le i_2 \le \cdots \le i_k \le n
            \} ,&\text{(bagging)}\\
            \{\{i_1, i_2, \ldots, i_k\}:\, 1\le i_1 < i_2 < \cdots < i_k \le n\}.&\text{(subagging)}\\
        \end{cases}\label{eq:I_M}
    \end{align}
    For bagging, $\cI_k$ represents the set of all possible independent draws from $[n]$ with replacement, and there are $n^k$ many of them.
    For subagging, $\cI_k$ represents the set of all $k$ subset choices from $[n]$, and there are $n!/(k!(n-k)!)$ many of them.
    Throughout the paper, we mainly focus on subagging but the results apply equally well to bagging. 
    For this reason, we do not distinguish different choices of $\cI_k$.
    For any $I\in\cI_k$, let $\cD_I$ and the corresponding subsampled predictor be defined as $\mathcal{D}_{I} = \{(\bx_j, y_j):\, j \in I\}$ and $\hf(\bx; \mathcal{D}_I) = \hf(\bx; \{(\bx_j, y_j):\, j\in I\})$.
    Then the randomized ensemble using either bootstrap or subsampling is defined as follows:
    \begin{align}
        \widetilde{f}_{M,k}(\bx; \{\mathcal{D}_{I_{\ell}}\}_{\ell = 1}^M) &=  \frac{1}{M}\sum_{\ell = 1}^M \hf(\bx; \mathcal{D}_{I_{\ell}})\quad\mbox{with}\quad I_{1}, \ldots, I_M\overset{\SRS}{\sim} \cI_k.\label{eq:bagged-predictor-bagging}
    \end{align}
    In the context where we want to highlight the size of bootstrap/subsample, $k$, we write $I_{k,1}, \ldots, I_{k,M}$ instead of $I_1, \ldots, I_M$.

    We are interested in the performance of our predictors (computed on $\cD_n$) on future data from the same distribution $P$.
    We consider the behavior of the predictors conditional on $\cD_n$ and $\{I_{\ell}\}_{\ell=1}^M$.
    More specifically, for a predictor $\widehat{f}$ fitted on $\cD_n$ and its bagged predictor $\tf_{M,k}$ fitted on $\{\mathcal{D}_{I_{\ell}}\}_{\ell=1}^M$, with $\{I_{\ell}\}_{\ell=1}^M\overset{\SRS}{\sim}\cI_k$, the data and subsample conditioned risks are defined as:    
    \begin{align}
        R(\widehat{f}; \mathcal{D}_n) 
        &= 
        \int
        (y-\widehat{f}(\bx; \cD_n))^2
        \, \mathrm{d}P(\bx, y), \notag\\        
        R(\tf_{M,k} ; \, \cD_n,
        \{I_{\ell}\}_{\ell = 1}^{M}) 
        &= 
        \int
        \left(
        y-\tf_{M, k}(\bx; \{ \cD_{I_\ell} \}_{\ell = 1}^{M})
        \right)^2 
        \, \mathrm{d}P(\bx, y). \label{eq:conditional-risk}
    \end{align}
    The two critical quantities for ensemble learning are the ensemble size $M$ and the subsample size $k$.
    Different values of $M$ trade-off model stability and computational burden.
    As $M$ increases, the bagged predictors get more stabilized while requiring more time to fit them.
    On the other hand, the subsample size trades off the bias and variance of the bagged predictors.
    A smaller subsample size has a considerable bias but may reduce the variance.
    For example, in the context of subagging minimum norm least square predictors with $M=\infty$, a properly chosen subsample size $k$ strictly less than $n$ can have a higher variance reduction compared to the inflation of bias.
    This raises the question: how to efficiently choose both the ensemble size ($M$) and the subsample size ($k$) to minimize prediction risk \eqref{eq:conditional-risk} for general predictors. 
    We address this question in the next section.
    Before that, we review some related work on cross-validation and situate our work in the context of other related work.

    There is extensive literature on cross-validation (CV) approaches; see \Cref{sec:related-work-general-cv} for a detailed survey.
    In the context of bagging and subagging, 
    \citet{liu2019reducing} study parameters selection for bagging in sparse regression based on the derived error bound.
    For subsample size tuning, a sample-split CV method is proposed in \citep{patil2022mitigating,patil2022bagging}.
    To estimate the prediction risk without sample splitting, the other line of research uses the out-of-bag (OOB) observations \citep{breiman2001random}.
    For example, the algorithmic variance of ensemble regression functions at a fixed test point is studied in \citep{oshiro2012many,wager2014confidence};
    in \cite{lopes2019estimating,lopes2020measuring}, the authors extrapolate the algorithmic variance and quantile of random forests for classification and regression problems, respectively.
    Their extrapolated estimators based on the heuristic scaling improve computation empirically, but theoretically, the statistical property is still unclear.
    \citet{politis2023} discuss scalable subbagging estimator when the subsample size $k$ and the ensemble size $M$ scale with the sample size $n$.
    
    The current paper differs from the previously mentioned works in two significant aspects.
    First, the consistency of \splitcv is shown in \citet{patil2022bagging} for a fixed ensemble size $M$ for subagging, which suffers from the finite-sample effects because of sample splitting. 
    Additionally, their results rely on stringent assumptions that require the asymptotic risk to satisfy certain analytic properties and do not provide any convergence rates.
    In contrast, our work establishes uniform consistency over both the ensemble size $M$ and the subsample size $k$ for bagging as well as subagging.
    More importantly, we also characterize the proposed estimators' convergence rate and require much milder assumptions on the risk, in the form of certain moment conditions.
    Second, \citet{lopes2019estimating} and \citet{lopes2020measuring} rely on the convergence rate of variance to extrapolate the fluctuations of the estimates and require the ensemble size $M$ to approach infinity. In contrast, \ECV directly estimates the extrapolated risk (not just the scale of variances or quantiles) for an arbitrary range of ensemble sizes in a consistent manner.
    Additionally, while their papers only focus on tuning the ensemble size $M$, we also tune the subsample size $k$, which is crucial to minimizing the predictive risks, especially in high-dimensional scenarios, as alluded to in the introduction.

\section{Method motivation}\label{sec:oobcv}

In this section, we derive preliminary results that serve as the foundation for our extrapolated cross-validation method in \Cref{subsec:risk-extrapolation}.
Our approach relies on utilizing out-of-bag (OOB) observations to tune both the ensemble size $M$ and the subsample size $k$.
Let us now describe the main components behind our methodology.

\subsection{Decomposition and risk estimation}

To begin with, we will fix $k$ and focus on tuning over $M\in\NN$.
Recall the conditional risk $R(\tf_{M,k} ; \, \cD_n,\{I_{\ell}\}_{\ell = 1}^{M})$ associated with an $M$-bagged predictor $\tf_{M,k}$, as defined in \eqref{eq:conditional-risk}.
The subsequent proposition demonstrates that $R(\tf_{M,k}; \, \cD_n,\{I_{\ell}\}_{\ell = 1}^{M})$ can be expressed as a linear combination of the conditional risks for $M=1$ and $M=2$.

    \begin{proposition}[Squared conditional risk decomposition]\label{prop:squared_risk_decom}
        The conditional prediction
        risk defined in \eqref{eq:conditional-risk}
        for a bagged predictor $\tf_{M,k}$ decomposes into
        \begin{align}
            R(\tf_{M,k} ; \, \cD_n,\{I_{\ell}\}_{\ell = 1}^{M}) &= -\left(1-\frac{2}{M}\right)a_{1,M} + 2\left(1 - \frac{1}{M}\right) a_{2,M},\label{eq:risk-decomp-M}\\
            \text{where }\qquad a_{1,M} = \frac{1}{M} \sum_{\ell=1}^{M} R(\tf_{1,k}; &\cD_n, \{ I_\ell \}),~ 
            a_{2,M} = \frac{1}{M(M-1)} \sum\limits_{\substack{\ell,m\in[M]\\\ell\neq m}}  R(\tf_{2,k} ; \, \cD_n, \{I_{\ell},I_m\}). \qquad \notag
        \end{align}
    \end{proposition}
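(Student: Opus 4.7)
The plan is to expand the squared loss around the ensemble mean, reduce everything to pairwise terms, and then match coefficients. For brevity, introduce the shorthand $g_\ell(\bx) := \widehat{f}(\bx; \mathcal{D}_{I_\ell})$ so that $\widetilde{f}_{M,k} = \bar{g} := M^{-1}\sum_\ell g_\ell$, and write $r_\ell := R(\widetilde{f}_{1,k}; \mathcal{D}_n, \{I_\ell\})$ and $r_{\ell m} := R(\widetilde{f}_{2,k}; \mathcal{D}_n, \{I_\ell, I_m\})$. Then $a_{1,M} = M^{-1}\sum_{\ell=1}^M r_\ell$ and $a_{2,M} = [M(M-1)]^{-1}\sum_{\ell\neq m} r_{\ell m}$, and the goal reduces to a pointwise identity for the squared loss together with some combinatorial bookkeeping over index sums.

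The first step is the standard ANOVA identity, obtained by writing $(y - g_\ell) = (y-\bar{g}) - (g_\ell - \bar{g})$, squaring, and averaging over $\ell$: the cross term vanishes because $\sum_\ell (g_\ell - \bar{g}) = 0$, leaving
\begin{equation*}
(y - \bar{g}(\bx))^2 \;=\; \frac{1}{M}\sum_{\ell=1}^M (y - g_\ell(\bx))^2 \;-\; \frac{1}{2M^2}\sum_{\ell, m=1}^M \bigl(g_\ell(\bx) - g_m(\bx)\bigr)^2,
\end{equation*}
where I have used the elementary rewriting $M^{-1}\sum_\ell (g_\ell - \bar{g})^2 = (2M^2)^{-1}\sum_{\ell,m}(g_\ell - g_m)^2$. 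Integrating against $P$ (the diagonal terms $\ell = m$ vanish) gives
\begin{equation*}
R\bigl(\widetilde{f}_{M,k}; \mathcal{D}_n, \{I_\ell\}_{\ell=1}^M\bigr) \;=\; a_{1,M} \;-\; \frac{1}{2M^2}\sum_{\ell \neq m} \|g_\ell - g_m\|_{L^2(P)}^2.
\end{equation*}

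The second step converts the $L^2$ norms on the right-hand side into the observable pair risks via a polarization identity: expanding the square in $r_{\ell m} = \int (y - (g_\ell + g_m)/2)^2\, \mathrm{d}P$ and comparing with $(r_\ell + r_m)/2$ yields $\|g_\ell - g_m\|^2 = 2(r_\ell + r_m) - 4 r_{\ell m}$. Substituting this into the previous display and using the combinatorial identities $\sum_{\ell \neq m}(r_\ell + r_m) = 2(M-1)\sum_\ell r_\ell = 2M(M-1)\,a_{1,M}$ and $\sum_{\ell \neq m} r_{\ell m} = M(M-1)\,a_{2,M}$ produces
\begin{equation*}
R\bigl(\widetilde{f}_{M,k}; \mathcal{D}_n, \{I_\ell\}_{\ell=1}^M\bigr) \;=\; a_{1,M} - \frac{2(M-1)}{M}\, a_{1,M} + \frac{2(M-1)}{M}\, a_{2,M},
\end{equation*}
which rearranges to the asserted $-(1 - 2/M)\, a_{1,M} + 2(1 - 1/M)\, a_{2,M}$.

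There is no substantive conceptual obstacle: every step is exact algebra. The only delicate point is to keep careful track of ordered versus unordered index sums when tallying $\sum_{\ell \neq m}$, in particular to verify that each $r_\ell$ appears $2(M-1)$ times (not $(M-1)$) in $\sum_{\ell \neq m}(r_\ell + r_m)$, since the symmetry between the two summands in $r_\ell + r_m$ is easy to miscount.
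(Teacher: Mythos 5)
Your proof is correct: every identity checks out, and the final bookkeeping ($\sum_{\ell\neq m}(r_\ell+r_m)=2M(M-1)a_{1,M}$, $\sum_{\ell\neq m}r_{\ell m}=M(M-1)a_{2,M}$) is handled properly, so the coefficients $-(1-2/M)$ and $2(1-1/M)$ come out exactly as stated. Your route is a mild reorganization of the paper's: the paper expands $(y-\bar g)^2$ directly, keeps the off-diagonal cross terms $\int (y-g_\ell)(y-g_m)\,\mathrm{d}P$, and converts each one via the polarization identity $ab=\{4(a/2+b/2)^2-a^2-b^2\}/2$ into $2r_{\ell m}-\tfrac12 r_\ell-\tfrac12 r_m$; you instead first pass through the ANOVA (ambiguity) decomposition $R_{M,k}=a_{1,M}-\tfrac{1}{2M^2}\sum_{\ell\neq m}\|g_\ell-g_m\|_{L^2(P)}^2$ and then polarize the diversity term as $\|g_\ell-g_m\|^2=2(r_\ell+r_m)-4r_{\ell m}$. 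The two are algebraically equivalent — the same polarization is doing the work in both — but your version makes the variance-reduction interpretation explicit (the risk equals the average single-model risk minus a nonnegative diversity term), which is a nice conceptual bonus, while the paper's direct expansion avoids the intermediate step and reads as a single chain of equalities.
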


    The proof of \Cref{prop:squared_risk_decom} can be found in \Cref{app:prop:squared_risk_decom}.
    The statement follows due to a special decomposition that governs the squared risk of the $M$-bagged predictor.
    The components $a_{1,M}$ and $a_{2,M}$ in the decomposition \eqref{eq:risk-decomp-M} are the averages of the 1-bagged and 2-bagged conditional risks, respectively.    
    Conditional on $\cD_n$, note that $a_{1,M}$ and $a_{2,M}$ are $U$-statistic based on i.i.d.\ elements $\{I_{\ell}\}_{\ell=1}^M$ sampled from $\cI_k$.
    
    Towards motivating our \ECV method, let us assume that there exist constants $\mathfrak{R}_{1,k}$ and $\mathfrak{R}_{2,k}$ such that as $n\rightarrow\infty$, both $a_{1,M} - \mathfrak{R}_{1,k}$ and $a_{2,M} - \mathfrak{R}_{2,k}$ converge almost surely to $0$.
    Then \Cref{prop:squared_risk_decom} implies that the conditional prediction risk of $\tf_{M,k}$ can be approximated asymptotically 
    by $-(1-2/M)\mathfrak{R}_{1,k} + 2(1-1/M)\mathfrak{R}_{2,k}$.
    This serves as the basis for the concept of \ECV, where consistent estimation of $\mathfrak{R}_{1,k}$ and $\mathfrak{R}_{2,k}$ allows for a consistent estimator of the risk of $\tf_{M,k}$ for every $M\in\NN$, hence justifying the name ``extrapolated'' cross-validation.
       
    To consistently estimate the basic components $\mathfrak{R}_{1,k}$ and $\mathfrak{R}_{2,k}$ in \eqref{eq:risk-decomp-M}, we first leverage the OOB risk estimator for an arbitrary predictor $\hf$ fitted on $\cD_I$ based on the OOB test dataset $\cD_{I^c}=\cD_n\setminus\cD_I$.
    One can simply consider the average of the squared loss (\AVG) on OOB observations in $\cD_{I^c}$ as a risk estimator.
    This choice, however, is not suitable for heavy-tailed data and hence we also consider a median-of-means (\MOM) estimator:
    \begin{align}\label{eq:h}
        \hR(\hf , \cD_{I^c}) &= \begin{cases}
            \frac{1}{|I^c|}\sum_{i\in I^c}(y_i-\hf(\bx_i))^2 ,& \text{ if \CEN=\xspace\AVG},\\
            \mathop{\mathrm{median}}\left\{\frac{1}{|I^{(b)}|}\sum_{i\in I^{(b)}}(y_i-\hf(\bx_i))^2,\ b\in[B]\right\},& \text{ if \CEN=\xspace\MOM},
        \end{cases}
    \end{align}
    with $B = \lceil 8 \log(1/\eta) \rceil$ and $I^{(1)},\ldots,I^{(B)}$ being $B= \lceil 8 \log(1/\eta) \rceil$ random splits of $I^c$ for some $\eta>0$.
    The median-of-means estimator was developed for heavy-tailed mean estimation and is commonly used in robust statistics \citep{lugosi2019mean}.
    
    We will provide a condition to certify the pointwise consistency of risk estimates $\hR$ under certain assumptions on the data distribution.
    Towards that end, for any non-negative loss function $\cL$ and a given test observation $(\bx_0,y_0)$ from $P$, define the conditional $\psi_1$-Orlicz norm of $\cL(y_0,\hf(\bx_0))$ given $\cD_I$ as
    $\| \cL(y_0,\hf(\bx_0)) \|_{\psi_1 \mid \cD_I} = \inf \big\{ C > 0: \, \mathbb{E} \big[
                \exp\big( \cL(y_0,\hf(\bx_0)) C^{-1} \big) \mid \mathcal{D}_I
            \big] 
            \le 2
        \big\}.$
    Similarly, for $r \ge 1$, define the conditional $L_r$-norm as
        $\|\cL(y_0,\hf(\bx_0))\|_{L_r \mid \cD_I}
        =
        \big(
             \mathbb{E}[\cL(y_0,\hf(\bx_0))^{r} \mid \cD_I]\big)^{1/r}.$
    See \citet[Chapter 2]{vershynin_2018} for more details.
    The following proposition provides the condition for consistency.
    \begin{proposition}[Consistent risk estimators]
        \label{prop:bounded-variance-error-control-mul-form}
        Let $\hf(\cdot;\cD_I)$ be any predictor trained on $\cD_I\subset\cD_n$,
        and $\CEN=\AVG$ or $\CEN = \MOM$ with $\eta = n^{-A}$, where $A\in(0,\infty)$ is a fixed constant.
        Define $\hsigma_I=\|(y_0- \hf(\bx_0;\cD_I))^2 \|_{\psi_1 \mid \cD_I}$.
        If $\hsigma_I/\sqrt{|I^c|/\log n}\rightarrow 0$ in probability, then $|\hR(\hf , \cD_{I^c}) - R(\hf;\cD_I) | \rightarrow 0$ in probability.  
        The conclusion remains true for $\CEN = \MOM$ even when the conditional $\psi_1$-Orlicz norm in the definition of $\hsigma_I$ is replaced with $\|\cdot\|_{L_2 \mid \cD_I}$.
    \end{proposition}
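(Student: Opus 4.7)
\textbf{Proof proposal for \Cref{prop:bounded-variance-error-control-mul-form}.}
The plan is to condition on $\cD_I$ throughout. Given $\cD_I$, the out-of-bag observations $\{(\bx_i,y_i):i\in I^c\}$ are i.i.d.\ copies of $(\bx_0,y_0)\sim P$, independent of $\hf(\cdot;\cD_I)$, so the squared losses $L_i := (y_i-\hf(\bx_i;\cD_I))^2$ are i.i.d.\ with conditional mean $R(\hf;\cD_I)$ and (respectively) conditional $\psi_1$-norm or $L_2$-norm bounded by $\hsigma_I$. The proof reduces to a pointwise concentration bound on $\hR(\hf,\cD_{I^c})-R(\hf;\cD_I)$ whose width is of order $\hsigma_I\sqrt{\log n/|I^c|}$ with conditional probability at least $1-n^{-A}$, after which the hypothesis $\hsigma_I/\sqrt{|I^c|/\log n}\to 0$ in probability delivers the result by a standard truncation argument (work on the event $\{\hsigma_I\le M_n\}$ for $M_n = o(\sqrt{|I^c|/\log n})$ of probability $\to 1$, apply the conditional bound with $M_n$ in place of $\hsigma_I$, then integrate).

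For $\CEN=\AVG$ under the $\psi_1$ assumption, I would invoke Bernstein's inequality for sums of i.i.d.\ sub-exponential variables (e.g.\ \citet[Thm.~2.8.1]{vershynin_2018}) to obtain, conditionally on $\cD_I$,
\[
\Pr\!\left(\left|\hR(\hf,\cD_{I^c})-R(\hf;\cD_I)\right|>t \,\Big|\, \cD_I\right) \;\le\; 2\exp\!\left(-c\,|I^c|\min\!\left(\frac{t^2}{\hsigma_I^2},\,\frac{t}{\hsigma_I}\right)\right).
\]
Taking $t = C\hsigma_I\sqrt{A\log n/|I^c|}$ makes the RHS at most $n^{-A}$ for a suitable constant $C$, which is the desired pointwise bound.

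For $\CEN=\MOM$, I would follow the classical median-of-means analysis \citep{lugosi2019mean}. With $B = \lceil 8\log(1/\eta)\rceil$ and $\eta=n^{-A}$, each block $I^{(b)}$ has size $\asymp |I^c|/B$, and the block mean $\bar L^{(b)} = |I^{(b)}|^{-1}\sum_{i\in I^{(b)}} L_i$ satisfies
\[
\Pr\!\left(|\bar L^{(b)}-R(\hf;\cD_I)|>t\,\big|\,\cD_I\right) \;\le\; \tfrac14,\qquad t\;=\;c\,\hsigma_I\sqrt{B/|I^c|},
\]
by Chebyshev under the $L_2$ assumption (using $\mathrm{Var}(\bar L^{(b)}\mid\cD_I)\le \hsigma_I^2/|I^{(b)}|$) and by Bernstein under the $\psi_1$ assumption. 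Writing the event $\{|\mathrm{median}_b\bar L^{(b)}-R(\hf;\cD_I)|>t\}$ as $\{\sum_b \mathbf{1}\{|\bar L^{(b)}-R(\hf;\cD_I)|>t\}\ge B/2\}$ and applying Hoeffding's inequality to the i.i.d.\ Bernoulli$(\le 1/4)$ indicators yields
\[
\Pr\!\left(\left|\hR(\hf,\cD_{I^c})-R(\hf;\cD_I)\right|>t\,\big|\,\cD_I\right) \;\le\; \exp(-B/8)\;\le\;\eta\;=\;n^{-A},
\]
with $t \asymp \hsigma_I\sqrt{\log n/|I^c|}$. This simultaneously handles the $\psi_1$ and $L_2$ variants.

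The only non-routine step is the passage from conditional to unconditional convergence: since $\hsigma_I$ is itself random, one cannot simply substitute it into a conditional bound. The clean way to handle this is to introduce the truncation event $E_n = \{\hsigma_I \le M_n\}$ with $M_n$ chosen so that $\Pr(E_n)\to 1$ and $M_n\sqrt{\log n/|I^c|}\to 0$ (possible by the hypothesis), apply the conditional concentration inequality above on $E_n$ with $\hsigma_I$ replaced by the deterministic upper bound $M_n$, and conclude that $\hR(\hf,\cD_{I^c})-R(\hf;\cD_I)\to 0$ in probability. This bookkeeping, rather than any analytic difficulty, is the main thing to get right.
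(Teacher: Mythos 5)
Your proposal is correct and takes essentially the same route as the paper: the paper also reduces to a conditional Bernstein/median-of-means tail bound (cited from Lemmas 2.9--2.10 of \citet{patil2022bagging}) giving $\PP\big(\Delta_n \ge C\hsigma_I\max\big\{\sqrt{A\log n/|I^c|},\, A\log n/|I^c|\big\}\big)\le n^{-A}$ for $\Delta_n=|\hR(\hf,\cD_{I^c})-R(\hf;\cD_I)|$, and then invokes the hypothesis $\hsigma_I=o_p(\sqrt{|I^c|/\log n})$ to conclude. The only cosmetic difference is the final bookkeeping: because $\hsigma_I$ is $\cD_I$-measurable, the conditional bound integrates directly into an unconditional bound with the \emph{random} threshold $\kappa_n=C\hsigma_I\max\{\cdots\}=o_p(1)$, and the paper simply writes $\PP(\Delta_n>\epsilon)\le \PP(\Delta_n\ge\kappa_n)+\PP(\kappa_n>\epsilon)\le n^{-A}+o(1)$, so your deterministic truncation sequence $M_n$, while perfectly valid, is not actually needed.
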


    The proof of \Cref{prop:bounded-variance-error-control-mul-form} can be found in \Cref{app:prop:bounded-variance-error-control-mul-form}. It follows from the results in \citet[Lemma 2.9, Lemma 2.10]{patil2022bagging}, which are used to demonstrate the strong consistency of \splitcv in \citep{patil2022mitigating,patil2022bagging} for subsample tuning.
    However, our analysis will utilize the finite-sample tail bounds that underlie \Cref{prop:bounded-variance-error-control-mul-form} to obtain convergence~rates.
    
    Recall from~\eqref{eq:bagged-predictor-bagging} that $\tf_{1,k}$ is computed from one dataset $\cD_{I_1}$ and $\tf_{2,k}$ is computed from two datasets $\cD_{I_1}$ and $\cD_{I_2}$ or equivalently that $\tf_{2,k}$ is computed on $\cD_{I_1\cup I_2}$. Hence, \Cref{prop:bounded-variance-error-control-mul-form} can be applied to $\tf_{1,k}$ with $I = I_1$ and to $\tf_{2,k}$ with $I = I_1\cup I_2$ to consistently estimate the conditional prediction risks of $\tf_{1,k}$ and $\tf_{2,k}$. To obtain consistency, we need to ensure that the assumption on $\hsigma_I$ becomes reasonable if $|I^c|/\log n \to \infty$ as $n\to\infty.$
    For subagging predictors $\tf_{1,k}$ and $\tf_{2,k}$, we have $|I_1^c| = n(1-k/n)$ and by~\Cref{lem:i0_mean}, $|(I_1\cup I_2)^c|\approx n(1 - k/n)^2$, asymptotically. On the other hand, for bagging predictors $\tf_{1,k}$ and $\tf_{2,k}$, we have $|I_1^c| \approx n\exp(-k/n)$ and $|(I_1\cup I_2)^c| \approx n\exp(-2k/n)$, asymptotically. Hence, collectively, assuming $k \le n(1 - 1/\log n)$ implies that $|I^c| \gtrsim n/\log^2n$, asymptotically.
    
    Under the assumption $k \le n(1 - 1/\log n)$, the risks of $\tf_{1,k}$ and $\tf_{2,k}$ computed on $\cD_{I_1}$ and $\cD_{I_1\cup I_2}$, respectively, can be estimated consistently. 
    Further, if we assume that the limiting conditional risks $(\mathfrak{R}_{1,k}$ and $\mathfrak{R}_{2,k}$) of $\tf_{1,k}$ and $\tf_{2,k}$ do not depend on the particular subsets $I_1, I_2$, then $\widehat{R}(\tf_{1,k}, \cD_{I_1^c})$ and $\widehat{R}(\tf_{2,k}, \cD_{(I_1\cup I_2)^c})$ consistently estimate $\mathfrak{R}_{1,k}$ and $\mathfrak{R}_{2,k}$, as $n\to\infty$.
    Note, however, that because the limiting conditional risks do not depend on specific subsets $I_1, I_2$, we can reduce the variance in our estimates of $\mathfrak{R}_{1,k}$ and $\mathfrak{R}_{2,k}$ by averaging the estimated risks over several subsets $I_{\ell}$'s. 
    This observation suggests the out-of-bag risk estimates for $M=1,2$ as
    \begin{align}\label{eq:Roob-M-12}
        \hRoob_{M,k} &= 
        \begin{cases}
            \frac{1}{M_0}\sum\limits_{\ell=1}^{M_0} \hR(\tf_{1,k}(\cdot;\cD_n,\{I_{\ell}\}), \cD_{I_{\ell}^c}),&M=1,\\
            \frac{1}{M_0(M_0-1)}\sum\limits_{\substack{\ell,m\in[M_0]\\\ell\neq m}} \hR(\tf_{2,k}(\cdot;\cD_n,\{I_{\ell},I_{m}\}), \cD_{(I_{\ell}\cup I_{m})^c}),&M=2,
        \end{cases}
    \end{align}
    where $I_1, \ldots, I_{M_0}$ are i.i.d.\ samples from $\cI_k$ and $M_0\geq 2$ is a pre-specified natural number. Increasing $M_0$ improves estimates but also increases computation.

    As hinted above, the risk decomposition \eqref{eq:risk-decomp-M}, along with the component risk estimation \eqref{eq:Roob-M-12}, suggests a natural estimator for the $M$-bagged risk:
    \begin{align} \label{eq:Roob-M}
        \hRoob_{M,k}=-\left(1-\frac{2}{M}\right) \hRoob_{1,k} + 2\left(1 - \frac{1}{M}\right)\hRoob_{2,k},\quad M>2.
    \end{align}
    We call this ``extrapolated'' risk estimation according to \eqref{eq:risk-decomp-M} because the $M$-bagged risk is \emph{extrapolated} from the $1$- and $2$-bagged risks.
    If the prediction risk of $M=1,2$ can be consistently estimated, the extrapolated estimates \eqref{eq:Roob-M} are also pointwise consistent over~$M\in\NN$ for $k$ fixed.

\subsection{Uniform risk consistency}
    The next step is then to tune both $M$ and $k$.
    To tune $k$, we define $\cK_n\subset[n]$ to be a grid of subsample sizes. 
    In practice, we would like $\cK_n$ to cover the full range of $n$ asymptotically (in the sense that $\cK_n/n \approx [0, 1]$), and one simple choice is to set $\cK_n=\{0, k_0,2k_0,\ldots. \lfloor n (1 - (\log n)^{-1}) / k_0\rfloor k_0\}$ where the minimum subsample size is $k_0=\lfloor n^{\nu}\rfloor$ for some $\nu\in(0,1)$.
    Here we adopt the convention that when $k=0$, the ensemble predictor reduces to the \emph{null predictor} that always outputs zero.
    To facilitate our theoretical results for general predictors, we make the following two assumptions. 
    The results are stated asymptotically as $n$ tends to infinity, where we view both $k$ and $p$ as sequences $\{k_n\}$ and $\{p_n\}$ indexed by $n$, and assume $k$ diverges with the sample size $n$ (except when $k\equiv 0$), but the feature dimension $p$ may or may not diverge.
    
    \begin{assumption}[Variance proxy]\label{ass:kappa}
         For $k\in\cK_n$, assume for all $\{I_1,I_2\}\overset{\SRS}{\sim}\cI_k$, as $n\to\infty$,
         \[
         \frac{\log n}{\sqrt{n}}\widehat{\sigma}_{I_1} \to 0,\quad\mbox{and}\quad \frac{(\log n)^{3/2}}{\sqrt{n}}\widehat{\sigma}_{I_1\cup I_2} \to 0,
         \]
         in probability, where the variance proxies $\widehat{\sigma}_{I_1}$ and $\widehat{\sigma}_{I_1\cup I_2}$ are defined in \Cref{prop:bounded-variance-error-control-mul-form}.
    \end{assumption}
    \begin{assumption}[Convergence of asymptotic risks]\label{cond:conv-risk-M12}
        For $k\in\cK_n$, assume for all $\{I_1,I_2\}\overset{\SRS}{\sim}\cI_k$ and for $M=1,2$, there exist constants $\epsilon\in(0,1)$, $C_0>0$, $\eta_0\geq 1$, $\gamma_{M,n}=o(n^{-\epsilon})$, and $\sR_{M,k}\geq 0$, such that the following holds:
        \begin{align}
            \limsup_{n\rightarrow\infty} \sup_{\eta\geq\eta_0}\eta^{1/\epsilon}\PP(\gamma_{M,n}^{-1}| R_{M,k}(\tf_{M,k}; \, \cD_n, \{I_{\ell}\}_{\ell = 1}^{M}) - \sR_{M,k} |
            \leq \eta) \leq C_0.\label{eq:subsample_cond_risk_M12}
        \end{align}
    \end{assumption}
    \Cref{ass:kappa} is used to show consistent risk estimation with $M = 1, 2$ in~\Cref{app:prop:bounded-variance-error-control-mul-form}. \Cref{cond:conv-risk-M12} formalizes the assumption that the limiting values of the conditional risks $R_{M,k}(\tf_{M,k}; \cD_n, \{I_{\ell}\}_{\ell = 1}^M)$ do not depend on $\{I_{\ell}\}_{\ell = 1}^M$. This assumption also requires certain rate and tail assumptions, which are used to provide the rate of consistency of our ECV procedure.
    In \Cref{cond:conv-risk-M12}, $\gamma_{M,n}$ for $M=1,2$ represent the lower bounds of the rates of convergence over $k\in\cK_n$, which are typically on a scale of $n^{-\alpha}$ for some constant $\alpha>0$.
    Condition \eqref{eq:subsample_cond_risk_M12} is also known as the weak moment norm condition \citep{rio2017constants,guo2019berry}, which ensures that the expected differences between the risks and the limits also converge to zero in certain rates.
    Under classical linear models with fixed-$X$ design and Gaussian noises, the risk of linear predictor concentrates around its mean and satisfies \eqref{eq:subsample_cond_risk_M12}; see, e.g., \citet[Lemma 3.1]{bellec2018optimal}.
    Another sufficient condition for \eqref{eq:subsample_cond_risk_M12} is the strong moment condition that $\EE(\gamma_{M,n}^{-1/\epsilon}| R_{M,k}(\tf_{M,k}; \, \cD_n, \{I_{\ell}\}_{\ell = 1}^{M}) - \sR_{M,k} |^{1/\epsilon})$ is bounded.

    From now on, we shall write $R_{M,k} = R(\tf_{M,k}; \, \cD_n, \{I_{\ell}\}_{\ell = 1}^{M}) $ to indicate the dependency only on $M$ and $k$ and to simplify the notations.
    In \Cref{app:ridge}, we present an example of ridge regression where the convergence is under the proportional asymptotics (i.e., both the \emph{data aspect ratio} $p/n$ and the \emph{subsample aspect ratio} $p/k$ converge to fixed constants), and Assumptions~\ref{ass:kappa}-\ref{cond:conv-risk-M12} are satisfied.
    The following theorem guarantees uniform consistency over both $M$ and $k$.
    \begin{theorem}[Uniform consistency of risk extrapolation]\label{lem:consistency-oobcv}
        Suppose \Cref{ass:kappa,cond:conv-risk-M12} hold for all $k\in\cK_n$, then \ECV estimates defined in \eqref{eq:Roob-M} satisfy that 
        \begin{align*}
            \sup_{M\in\NN,k\in\cK_n}\left|\hRoob_{M,k} - R_{M,k}\right| =\cO_p(\zeta_n),
        \end{align*}
        where $\zeta_n=\hsigma_n \log n /n^{1/2}+ n^{\epsilon}(\gamma_{1,n}+ \gamma_{2,n})$, and $\hsigma_n= \max_{m,\ell\in[M_0],k\in\cK_n}\hsigma_{I_{k,\ell}\cup I_{k,m}}$.
    \end{theorem}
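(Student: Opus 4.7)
The plan is to exploit the fact that $R_{M,k}$ and $\hRoob_{M,k}$ share the same linear form in their building blocks. By \Cref{prop:squared_risk_decom} and the definition \eqref{eq:Roob-M},
\begin{align*}
\hRoob_{M,k} - R_{M,k}
= -\Bigl(1-\tfrac{2}{M}\Bigr)\bigl(\hRoob_{1,k} - a_{1,M}\bigr) + 2\Bigl(1 - \tfrac{1}{M}\Bigr)\bigl(\hRoob_{2,k} - a_{2,M}\bigr),
\end{align*}
and since $|1-2/M|\leq 1$ and $|1-1/M|\leq 1$ for every $M\geq 1$, the triangle inequality gives
\begin{align*}
|\hRoob_{M,k} - R_{M,k}|
\leq \sum_{j=1}^{2} j\bigl(|\hRoob_{j,k} - \sR_{j,k}| + |a_{j,M} - \sR_{j,k}|\bigr).
\end{align*}
It therefore suffices to bound (a) $\sup_{k\in\cK_n}|\hRoob_{j,k} - \sR_{j,k}|$ and (b) $\sup_{M\in\NN,\,k\in\cK_n}|a_{j,M} - \sR_{j,k}|$ for $j=1,2$; note that the first has no $M$-dependence.

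For (a), insert the average $\widetilde{R}_{j,k}$ of the \emph{true} conditional risks over the same $M_0$ (resp.\ $M_0(M_0-1)$) OOB index sets used in \eqref{eq:Roob-M-12}, splitting the error into an estimation part $\hRoob_{j,k} - \widetilde{R}_{j,k}$ and a Monte Carlo part $\widetilde{R}_{j,k} - \sR_{j,k}$. The estimation part is handled by applying the finite-sample tail bounds underlying \Cref{prop:bounded-variance-error-control-mul-form} to each of the finitely many OOB summands, using $|I^c|\gtrsim n/\log^2 n$ (which follows from $k\leq n(1-1/\log n)$) together with \Cref{ass:kappa}; a union bound over $\cK_n$ is absorbed by the sub-exponential/polynomial tails of the AVG/MOM estimators and produces $\cO_p(\hsigma_n\log n/\sqrt{n})$. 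The Monte Carlo part is handled by applying the weak $L^{1/\epsilon}$ tail in \eqref{eq:subsample_cond_risk_M12} to each summand, union-bounding over $|\cK_n|=O(n)$, and choosing the threshold $\eta\asymp n^{\epsilon}$, which gives $\cO_p(n^{\epsilon}\gamma_{j,n})$.

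For (b), the crucial observation is that conditional on $\cD_n$, $a_{1,M}$ is the Cesàro mean of the i.i.d.\ random variables $R(\tf_{1,k};\cD_n,\{I_\ell\})$, and $a_{2,M}$ is an analogous U-statistic of order two. Applying a Hájek--Rényi inequality (equivalently, Doob's $L^p$ maximal inequality for the reverse martingale formed by these averages) conditionally on $\cD_n$ with any $p\in(1,1/\epsilon)$ bounds $\sup_M|a_{j,M}-\bar R_{j,k}|$ by a constant multiple of the conditional $p$-th moment of a single summand, where $\bar R_{j,k}=\EE[a_{j,M}\mid\cD_n]$. The weak-moment hypothesis \eqref{eq:subsample_cond_risk_M12} yields $\EE|R(\tf_{j,k};\cD_n,\{I_\ell\}_{\ell=1}^j)-\sR_{j,k}|^p\lesssim_p\gamma_{j,n}^p$, and Jensen's inequality gives $\|\bar R_{j,k}-\sR_{j,k}\|_{L^p}\lesssim_p\gamma_{j,n}$. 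Combining yields $\sup_M|a_{j,M}-\sR_{j,k}|=\cO_p(\gamma_{j,n})$ pointwise in $k$; union-bounding over $\cK_n$ exactly as in step (a) promotes this to $\cO_p(n^{\epsilon}\gamma_{j,n})$.

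Summing the contributions in (a) and (b) gives the advertised rate $\zeta_n=\hsigma_n\log n/\sqrt{n}+n^{\epsilon}(\gamma_{1,n}+\gamma_{2,n})$. The main obstacle I anticipate is the supremum over the unbounded index set $M\in\NN$: since \Cref{cond:conv-risk-M12} supplies only a weak polynomial tail on each individual $R(\tf_{j,k};\cD_n,\{I_\ell\}_{\ell=1}^j)$, a naive pointwise maximum would diverge, so the averaging structure inside $a_{j,M}$ must be exploited via a genuine conditional maximal inequality rather than a union bound. A secondary technical point is carefully tracking the $n^{\epsilon}$ inflation produced by union-bounding the polynomial tail over the $O(n)$ elements of $\cK_n$ — this is exactly the price of working under a weak-moment rather than sub-exponential assumption, and it is what forces the exponent $\epsilon$ from \Cref{cond:conv-risk-M12} to appear in $\zeta_n$.
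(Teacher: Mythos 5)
Your plan follows the paper's architecture quite closely — the componentwise triangle-inequality decomposition through $\sR_{1,k},\sR_{2,k}$ and your step (a) are essentially the paper's \Cref{prop:consistency-M12} and \Cref{prop:consistency-M} — but there is a genuine gap in step (b), and it lies in the order of the maximal inequality over $M$ and the union bound over $k$. The union bound in step (a) works only because each individual summand still carries the weak-$L^{1/\epsilon}$ tail of \Cref{cond:conv-risk-M12}: at threshold $\eta\, n^{\epsilon}\gamma_{j,n}$ each of the $|\cK_n|\lesssim n$ terms contributes $C_0 n^{-1}\eta^{-1/\epsilon}$, and the factor $n^{-1}$ exactly absorbs the cardinality of the grid. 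In step (b) you first apply a conditional $L^p$ (Hájek–Rényi/Doob) maximal inequality with $p\in(1,1/\epsilon)$ — which is forced, since the weak-moment hypothesis gives finite $p$-th moments only for $p<1/\epsilon$ — and this degrades the tail exponent from $1/\epsilon$ to $p$. Union-bounding that degraded tail over $\cK_n$ at threshold $\eta\, n^{\epsilon}\gamma_{j,n}$ gives $|\cK_n|\cdot C\gamma_{j,n}^p/(\eta\, n^{\epsilon}\gamma_{j,n})^p \asymp n^{1-\epsilon p}\eta^{-p}$, which diverges because $\epsilon p<1$. The best your ordering can deliver is $\cO_p(n^{1/p}\gamma_{j,n})$ for any fixed $p<1/\epsilon$, i.e.\ $n^{\epsilon+\delta'}\gamma_{j,n}$ for every $\delta'>0$, which falls short of the stated rate $\zeta_n$; so the sentence ``union-bounding over $\cK_n$ exactly as in step (a) promotes this to $\cO_p(n^{\epsilon}\gamma_{j,n})$'' does not go through as written.

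The repair is precisely the paper's route (\Cref{lem:conv:data-risk-M12} together with \Cref{lem:conv:risk-M}): reverse the order by putting the supremum over $k$ \emph{inside} the averaged kernel. Bound $\sup_{M,k}|a_{j,M}-\sR_{j,k}|$ by $\sup_M$ of the $U$-statistic whose kernel is $B_{j,n}=\sup_{k\in\cK_n}\bigl|R_{j,k}-\sR_{j,k}\bigr|$ for a single draw of index sets. Integrating the union-bounded tail of $B_{j,n}$ (only the first moment is needed, and it is finite because $1/\epsilon>1$) gives $\EE[B_{j,n}]=\cO(n^{\epsilon}\gamma_{j,n})$, and then a single application of the weak-type ($r=1$) maximal inequality for the reverse martingales formed by these $U$-statistics, conditional on $\cD_n$, controls $\sup_{M\in\NN}$ with no further union bound — yielding exactly $\cO_p\bigl(n^{\epsilon}(\gamma_{1,n}+\gamma_{2,n})\bigr)$. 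You correctly identified that the unbounded index $M$ requires a genuine maximal inequality rather than a union bound; the missing idea is to let that maximal inequality act on the $k$-supremized kernel rather than uniformizing over $k$ afterwards, since the weak-moment assumption cannot survive the $L^p\to$ tail conversion with the exponent intact.
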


    The result in \Cref{lem:consistency-oobcv} is of paramount importance to establish the convergence rate of the CV-tuned estimator returned by our algorithm.
    In words, the theorem says that the extrapolated error depends on three factors: the cross-validation error and the rates for $M=1,2$.
    While the convergence rates of asymptotic risks of $M=1,2$ usually depend on the chosen predictor, the non-asymptotic analysis in \Cref{lem:consistency-oobcv} allows us to derive convergence rates even for general predictors. This is particularly advantageous when compared to the consistency results established in \citep{patil2022bagging} for subagged ridge predictors.
    
    The proof \Cref{lem:consistency-oobcv} is rather involved and can be found in \Cref{app:risk}.
    For the convenience of the readers, we provide a schematic of the whole proof in \Cref{fig:proof}. We will now explain the key ideas involved in the proof. The proof strategy relies on deriving concentration results of varied random quantities to their limits in a specific order.
    First, we establish the uniform consistency of the risk estimates $\hRoob_{M,k}$ over $k\in\cK_n$ to the risks $R_{M,k}$ for $M=1,2$ (see \Cref{prop:consistency-M12}).
    Building upon this result and the risk decomposition presented in \Cref{prop:squared_risk_decom}, we then derive the uniform consistency of the risk estimates $\hRoob_{M,k}$ over $(M,k)\in\NN\times \cK_n$ to the deterministic limits $\sR_{M,k}$ (\Cref{prop:consistency-M}).
    On the other hand, to establish the concentration for subsample conditional risks over $M\in\NN$ and $k\in\cK_n$, we first establish the concentration for the expected conditional risk $\EE[R_{M,k}\mid\cD_n]$ over $k\in\cK_n$ (\Cref{lem:conv:data-risk-M12}). We then apply the reverse martingale concentration bound (\Cref{lem:conv:risk-M}).

\section{Main proposal: Extrapolated cross-validation}\label{subsec:risk-extrapolation}

    Based on the previous discussion in \Cref{sec:oobcv}, we present the proposed cross-validation algorithm for tuning the ensemble parameters without sample splitting in \Cref{alg:cross-validation}.
    The procedure requires a dataset $\cD_n$ of $n$ observations, a base prediction procedure $\hf$, a natural number $M_0$ for risk estimation, and some other parameters.
    It first constructs the grid of subsample sizes $\cK_n$ and fits only $M_0$ base predictors accordingly.
    Then, the prediction risk for $M$-bagged predictors can be estimated based on the OOB observations through \eqref{eq:Roob-M-12} and \eqref{eq:Roob-M}.
    Observe that the optimal risk of \eqref{eq:Roob-M} for any $k$ is obtained at $\hRoob_{\infty,k}=2\hRoob_{2,k} -\hRoob_{1,k} $ when $M=\infty$.
    Thus, to tune $k$, it suffices to perform a grid search to minimize $\hRoob_{\infty,k}$ over $k\in\cK_n$, because the optimal ensemble size happens to be infinity from the previous results \citep{lopes2019estimating,patil2022bagging}.
    However, calculating it is prohibitive in practice.
    Thus, we pick the smallest $\widehat{M}$ such that $\hRoob_{\widehat{M},\hat{k}}$ is close to $\hRoob_{\infty,\hat{k}}$ within $\delta$ error, where $\delta$ is the suboptimality parameter.
    Finally, \Cref{alg:cross-validation} returns a $\hat{M}$-bagged predictor using subsample size $\hat{k}$.
    Note that \Cref{alg:cross-validation} naturally applies to other randomized ensemble methods, such as random forests, when fixing other hyperparameters.

    \begin{algorithm}[!ht]
    \caption{Tuning of ensemble and subsample sizes without sample splitting}\label{alg:cross-validation}
        \begin{algorithmic}[1]
        \Require a dataset $\cD_n = \{ (\bx_i, y_i) \in \RR^{p} \times \RR : 1 \le i \le n \}$, a base prediction procedure $\hf$, 
        a real number $\nu \in (0, 1)$ (subsample size unit parameter),
        a ensemble size $M_0\geq 2$ for risk estimation,
        a centering procedure $\CEN \in \{ \AVG, \MOM \}$,
        a real number $A$ used to compute $\eta$ when $\CEN = \MOM$, 
        and optimality tolerance parameter $\delta$.

        \State\label{algo:2}
        Construct a grid $\cK_n = \{0,k_{0}, 2k_{0},\ldots, \lfloor n(1-1/\log n)/k_0 \rfloor k_0\}$ where $k_{0}=\lfloor n^\nu \rfloor$.
        
        \State Build ensembles $\tf_{M_0, k}(\cdot) = \tf_{M_0}(\cdot; \{ \cD_{I_{k,\ell}} \}_{\ell=1}^{M_0})$ with $M_0$ base predictors, where $I_{k,1},\ldots,I_{k,M_0}\overset{\SRS}{\sim}\cI_k$ for each $k \in \cK_n$.
        
        \State Estimate the conditional prediction risk on OOB observations of $\tf_{M_0,k}$ with $\hRoob_{M,k}$ defined in \eqref{eq:Roob-M-12} for $k \in \cK_n$ and $M=1,2$.

        \State\label{alg:line-risk-extrapolation}
        Extrapolate the risk estimations $\hRoob_{M,k}$ using~\eqref{eq:Roob-M}.

        \State Select a subsample size $\hat{k} \in \cK_n$ that minimizes the extrapolated estimates using
        \begin{align*}
            \hat{k} \in \argmin_{k\in\cK_n}  2\hRoob_{2,k} - \hRoob_{1,k}.
        \end{align*}

        \State Select an ensemble size $\hat{M}\in\NN$ for the $\delta$-optimal risk with a plug-in estimator:
        \begin{align*}
            \hat{M} = \left\lceil\frac{2}{\max\{\delta, n^{-1/2}\}} ( \hRoob_{1,\hat{k}} - \hRoob_{2,\hat{k}} )  \right\rceil.
        \end{align*}
        
        \State If $\hat{M}>M_0$, fit a $\hat{M}$-bagged predictor $\tf_{\hat{M},\hat{k}} = \tf_{\hat{M},\hat{k}}(\cdot;  \{\cD_{I_{\hat{k},\ell}}\}_{\ell=1}^{\hat{M}})$.
        
        \Ensure Return the \ECV-tuned predictor $\tf_{\hat{M},\hat{k}}$, and the risk estimators $\hRoob_{M,k}$ for all~$M,k$.
        \end{algorithmic}
    \end{algorithm}

    As a byproduct, \Cref{alg:cross-validation} also gives the ECV risk estimates $\hRoob_{M,k}$ for all $M\in\NN$ and $k\in\cK_n$.
    This risk profile in $(M,k)$ is helpful for users to investigate whether the given base predictor $\hf$ well fits the dataset $\cD_n$ or not.
    For instance, one can tune the ensemble predictors under a computational budget on the maximum ensemble size $M_{\max}$. 
    This gives rise to practical considerations presented later in \Cref{subsec:practice}.

\subsection{Theoretical guarantees}
\label{subsec:theory}
    By combining the ingredients in \Cref{sec:oobcv}, our main theorem states that~\Cref{alg:cross-validation} yields an ensemble predictor whose risk is at most $\delta$ away from the best ensemble predictor. Further, it provides an estimator of the risk of the selected ensemble predictor. 
    
    \begin{theorem}[Optimality of OOB estimate and \ECV-tuned risk]\label{thm:limiting-oobcv-for-arbitrary-M-cond}
        Under the assumed conditions in \Cref{lem:consistency-oobcv}, for any $\delta> 0$, the OOB estimate and the \ECV-tuned risk output by \Cref{alg:cross-validation} satisfy the following properties respectively:
        \begin{align}\label{eq:risk-minimization-estimate}
            \left|\hRoob_{\hat{M},\hat{k}} - R_{\hat{M},\hat{k}} \right| &= \cO_p(\zeta_n),\qquad 
            \left| R_{\hat{M},\hat{k}} - \inf_{M\in\NN,k\in\cK_n}R_{M,k} \right| = \delta + \cO_p(\zeta_n),
        \end{align}
        where $\zeta_n$ is the quantity defined in~\Cref{lem:consistency-oobcv}.
    \end{theorem}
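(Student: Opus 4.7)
The two claims follow from combining Lemma~\ref{lem:consistency-oobcv} with algebraic properties of the extrapolation formula~\eqref{eq:Roob-M} and the two-step greedy selection in Algorithm~\ref{alg:cross-validation}. The first bound is essentially immediate, since $(\hat{M},\hat{k}) \in \NN \times \cK_n$ implies
\begin{equation*}
\bigl|\hRoob_{\hat{M},\hat{k}} - R_{\hat{M},\hat{k}}\bigr| \le \sup_{M \in \NN,\, k \in \cK_n} \bigl|\hRoob_{M,k} - R_{M,k}\bigr| = \cO_p(\zeta_n).
\end{equation*}

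For the oracle comparison, the central algebraic identity is to rewrite~\eqref{eq:Roob-M} as
\begin{equation*}
\hRoob_{M,k} = \hRoob_{\infty,k} + \tfrac{2}{M}\bigl(\hRoob_{1,k} - \hRoob_{2,k}\bigr), \qquad \hRoob_{\infty,k} := 2\hRoob_{2,k} - \hRoob_{1,k},
\end{equation*}
which is nonincreasing in $M$ whenever $\hRoob_{1,k} \geq \hRoob_{2,k}$. Jensen's inequality applied to the squared loss forces $\sR_{1,k} \geq \sR_{2,k}$ pointwise, and Lemma~\ref{lem:consistency-oobcv} together with Assumption~\ref{cond:conv-risk-M12} transfers this inequality to the empirical quantities with probability $1-o(1)$ uniformly in $k \in \cK_n$. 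Hence
\begin{equation*}
\inf_{M \in \NN,\, k \in \cK_n} \hRoob_{M,k} = \inf_{k \in \cK_n} \hRoob_{\infty,k} = \hRoob_{\infty,\hat{k}},
\end{equation*}
by the definition of $\hat{k}$ in Step~5 of the algorithm.

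To complete the chain, observe that the definition of $\hat{M}$ in Step~6 guarantees $\hat{M}\,\max\{\delta, n^{-1/2}\} \geq 2(\hRoob_{1,\hat{k}} - \hRoob_{2,\hat{k}})$, so the identity above yields $\hRoob_{\hat{M},\hat{k}} - \hRoob_{\infty,\hat{k}} \leq \delta + n^{-1/2}$. For any $(M^\star, k^\star) \in \NN \times \cK_n$, chaining these bounds gives
\begin{equation*}
R_{\hat{M},\hat{k}} \leq \hRoob_{\hat{M},\hat{k}} + \cO_p(\zeta_n) \leq \hRoob_{\infty,\hat{k}} + \delta + n^{-1/2} + \cO_p(\zeta_n) \leq \hRoob_{M^\star,k^\star} + \delta + \cO_p(\zeta_n) \leq R_{M^\star,k^\star} + \delta + \cO_p(\zeta_n),
\end{equation*}
using $n^{-1/2} = \cO(\zeta_n)$ (valid since $\zeta_n \gtrsim \log n / \sqrt n$ in typical regimes) and Lemma~\ref{lem:consistency-oobcv}. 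Taking the infimum over $(M^\star, k^\star)$ and using the trivial lower bound $R_{\hat{M},\hat{k}} \geq \inf_{M,k} R_{M,k}$ establishes the second claim. The main subtlety is the monotonicity step that equates $\inf_M \hRoob_{M,k}$ with $\hRoob_{\infty,\hat{k}}$: it requires controlling the tail probability of the bad event $\{\hRoob_{1,k} < \hRoob_{2,k} \text{ for some } k \in \cK_n\}$, which must be shown to vanish via a uniform-in-$k$ transfer of $\sR_{1,k} \geq \sR_{2,k}$; everything else is careful bookkeeping of the inequality chain.
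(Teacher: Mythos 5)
Your overall route is the same as the paper's: the first claim follows immediately from the uniform consistency in \Cref{lem:consistency-oobcv}; for the second, you rewrite \eqref{eq:Roob-M} as $\hRoob_{M,k}=\hRoob_{\infty,k}+\tfrac{2}{M}(\hRoob_{1,k}-\hRoob_{2,k})$, use the definition of $\hat{k}$ as minimizer of $\hRoob_{\infty,k}$ and of $\hat{M}$ to get $\hRoob_{\hat{M},\hat{k}}-\hRoob_{\infty,\hat{k}}\le\delta$ (up to the harmless $n^{-1/2}$ slack, which for fixed $\delta>0$ is anyway redundant since $\max\{\delta,n^{-1/2}\}=\delta$ eventually, so you need not argue $n^{-1/2}=\cO(\zeta_n)$), and chain with the uniform $\cO_p(\zeta_n)$ error, exactly as in the paper's triangle-inequality argument.

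The one step that does not hold as you state it is the ``monotonicity transfer.'' You are right that $\sR_{1,k}\ge\sR_{2,k}$ by convexity of the squared loss, but you cannot upgrade this to the exact empirical inequality $\hRoob_{1,k}\ge\hRoob_{2,k}$ for all $k\in\cK_n$ with probability $1-o(1)$: \Cref{ass:kappa,cond:conv-risk-M12} impose no separation between $\sR_{1,k}$ and $\sR_{2,k}$, and when $\sR_{1,k}=\sR_{2,k}$ (ensembling gives no improvement at some $k$) the sign of $\hRoob_{1,k}-\hRoob_{2,k}$ is driven by estimation noise, so the ``bad event'' need not vanish; consequently the claimed identity $\inf_{M,k}\hRoob_{M,k}=\hRoob_{\infty,\hat{k}}$ need not hold exactly. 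Fortunately your chain never needs the exact inequality: uniform consistency gives $\hRoob_{2,k}-\hRoob_{1,k}\le \sR_{2,k}-\sR_{1,k}+\cO_p(\zeta_n)\le\cO_p(\zeta_n)$ uniformly over $k\in\cK_n$, hence $\hRoob_{\infty,k^\star}\le\hRoob_{M^\star,k^\star}+\cO_p(\zeta_n)$, and the extra term is absorbed into the error. The paper sidesteps this issue by arguing at the level of the true conditional risk, where the pairwise convexity inequality between the $1$- and $2$-bagged components is deterministic, so $\inf_{M\in\NN}R_{M,k}=R_{\infty,k}$, and then comparing $\hRoob_{\infty,\hat{k}}$ directly to $\inf_{k\in\cK_n}R_{\infty,k}$ via \Cref{lem:consistency-oobcv}. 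With that one correction (track the $\cO_p(\zeta_n)$ slack instead of asserting exact empirical monotonicity), your proof is sound and matches the paper's argument.
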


    \Cref{thm:limiting-oobcv-for-arbitrary-M-cond} implies that the OOB estimate is close to the true risk because of the uniform consistency from \Cref{lem:consistency-oobcv}.
    Furthermore, the \ECV-tuned ensemble parameters $\hat{M}$ and $\hat{k}$ produce a bagged predictor with a risk $\delta$-close to the optimal predictor.
    The optimality is model-agnostic because it does not directly depend on the feature and response models.
    On the other hand, in real-world applications, the infinite ensemble may not be of interest because of the computational cost.
    Because of the uniform consistency established in \Cref{lem:consistency-oobcv}, one can naturally extend \Cref{thm:limiting-oobcv-for-arbitrary-M-cond} to tuning with restriction on maximum ensemble sizes.
    In \Cref{app:ridge}, we also provide a concrete example of the application of \Cref{thm:limiting-oobcv-for-arbitrary-M-cond} to ridge predictors, which verifies \Cref{ass:kappa,cond:conv-risk-M12} under mild moment assumptions.

    \begin{remark}[From additive to multiplicative optimality]\label{rm:mul-opt}
        \Cref{thm:limiting-oobcv-for-arbitrary-M-cond} provides a guarantee of additive optimality for tuned predictor returned by \Cref{alg:cross-validation}, while it is also useful to consider the multiplicative optimality.
        Towards that end, with the choice of $\widehat{k}$ in Step 5 of~\Cref{alg:cross-validation} and change the choice of $\widehat{M}$ in Step 6 to
        \begin{align*}
            \hat{M} = \left\lceil\frac{2}{\max\{\delta, n^{-1/2}\}} 
            \frac{ \hRoob_{1,\hat{k}} - \hRoob_{2,\hat{k}} }{2\hRoob_{2,\hat{k}} - \hRoob_{1,\hat{k}} }\right\rceil,
        \end{align*}
        then, under the assumption that the irreducible risk $\int \left(y - \EE(y \mid \bx)\right)^2 \, \mathrm{d}P(\bx, y)$ is strictly positive, \Cref{prop:mul-opt} guarantees
        \begin{align}\label{eq:risk-minimization-estimate-multiplicative}
            R_{\hat{M},\hat{k}}  = (1+\delta )\inf_{M\in\NN,k\in\cK_n}R_{M,k}(1 + \cO_p(\zeta_n)).
        \end{align}
        Compared to \eqref{eq:risk-minimization-estimate}, the optimality upper bound on the right-hand side of
        \eqref{eq:risk-minimization-estimate-multiplicative} depends on the scale of the optimal prediction risk.
    \end{remark}

    \subsection{Computational considerations}\label{subsec:practice}
    \Cref{alg:cross-validation} estimates the ensemble parameters $\hat{k}$ and $\hat{M}$ to derive a $\delta$-optimal bagged predictor.
    Here, we compare the computational complexity of \Cref{alg:cross-validation} with other common CV methods.
    For each $k\in\cK_n$, suppose the computational complexity of fitting one base predictor on $k$ subsampled observations and obtain their predicted values on $n-k$ OOB observations is $\cO(C_n)$ (ignoring $k$).
    Then, the computational complexity of estimating $\hat{M}$ and $\hat{k}$ for all three validation methods are given below.
    \begin{itemize}[labelsep=1mm,leftmargin=7mm]
        \item \ECV:
        For each $k\in\cK_n$, we need to fit $M_0$ base predictors.
        Then we estimate $R_{1,k}$ and $R_{2,k}$ in $\cO(M_0(n-k))$ and $\cO(M_0^2(n-2k+i_0))$, respectively, where $i_0=k^2/n$ is the intersect observations between two indices of a simple random sample.
        The computational complexity of risk extrapolation is negligible compared to the above time consumption.
        All in all, it takes $\cO(C_n(|\cK_n|M_0+M_{\max}))$ to obtain tuned bagging parameter by \ECV.

        \item \splitcv:
        Suppose the ratio of training data is $\alpha\in(0,1)$. 
        Similar to \ECV, each base predictor is fitted and evaluated on $\lceil n\alpha\rceil$ observations and we need to fit $M_{\max}$ base predictors.
        We then compute the moving average of the predicted values for $M$ varying from 1 to $M_{\max}$, which gives the predicted values of the $M$-bagged predictors, which takes $\cO(M_{\max})$ operations.
        We note that one can alternatively fit one bagged predictor for each $k$ and each $M$; however, this will cause much more time consumption compared to the simple matrix computation operations we used above.
        All in all, it takes $\cO(|\cK_n|M_{\max}(C_{n\alpha}+n) )$ to obtain the tuned parameter.
        
        \item \kfoldcv: We follow the same strategy for fitting base predictors so that \kfoldcv has roughly $K$ times of complexity as \splitcv. Specifically, it takes $\cO(K|\cK_n|M_{\max}(C_{n/K}+n) )$ to obtain the tuned parameter.
    \end{itemize}    
    In general, we expect $C_n$ to grow much faster than $n$, because fitting one base predictor may involve matrix multiplication operation, which takes $\cO(n^2)$.
    Therefore, the computational complexity of the three methods has the ordering: \ECV $ \leq$ \splitcv $ \leq$ \kfoldcv, provided that $M_0$ is much smaller than~$M_{\max}$.

    Besides the computational efficiency gained by ECV, we also discuss some considerations when the proposed method is used in practice.
    
    \begin{enumerate}[labelsep=1mm,noitemsep,leftmargin=7mm,]
        \item{Maximum ensemble size:} 
        \Cref{alg:cross-validation} determines $\hat{k}$ and $\hat{M}$ by minimizing the estimated risk \eqref{eq:Roob-M} with the infinite ensemble.
        However, it may still be computationally infeasible if $\hat{M}$ is too large.
        In such cases, based on the extrapolated risk estimation in \Cref{alg:cross-validation}, we can also derive the $\delta$-optimal bagged predictor whose ensemble size is no more than a pre-specified number $M_{\max}$, which we call the restricted oracle.
        That is, we choose subsample and ensemble size to restrict the computational cost: $\hat{k} \in \argmin_{k\in\cK_n} \hRoob_{M_{\max},k}$ and $\hat{M} = \left\lceil 2 (\delta + \hRoob_{M_{\max},\hat{k}} - \hRoob_{\infty,\hat{k}})^{-1}  ( \hRoob_{1,\hat{k}} - \hRoob_{2,\hat{k}} )  \right\rceil.$
        On the other hand, it also controls the suboptimality to the oracle:
        \begin{align*}
            \underbrace{R_{\hat{M},\hat{k}} - \min_{k\in\cK_n} R_{\infty,k}}_{\text{suboptimality to the oracle}} &= \underbrace{R_{\hat{M},\hat{k}} - \min_{k\in\cK_n} R_{M_{\max},k}}_{\text{suboptimality to the restricted oracle}} 
            ~+~ 
            \quad
            \underbrace{\min_{k\in\cK_n} R_{M_{\max},k} - \min_{k\in\cK_n} R_{\infty,k}}_{\text{unavoidable budget error}}.
        \end{align*}
        When $\delta=0$, the suboptimality to the restricted oracle vanishes, and the tuned ensemble simply tracks the optimal $M_{\max}$-ensemble.

        \item {To bag or not to bag:}
        The benefit of ensemble learning may be slight in some cases.
        For instance, when the number of samples is much larger than the feature dimensions and the signal-noise ratio is large, ensemble learning can only provide minor improvements over the non-ensemble predictor.
        Suppose that $\hRoob_0$ is the estimated risk of the null predictor, $\min_{k\in\cK_n}\hRoob_{1,k}$ is the optimal estimated risk due to subsampling when $M=1$, and $\min_{k\in\cK_n}\hRoob_{M_{\max},k}$ is the optimal ECV estimate with maximum ensemble size $M_{\max}$.
        Let $\zeta >0$ be a user-specified improvement factor that encodes the desired excess risk improvement in a multiplicative sense.
        Then we can decide to bag if either the null risk is smaller in the sense that $\hRoob_0<\min_{k\in\cK_n}\hRoob_{1,k}$, or the improvement due to ensemble exceeds $\zeta$ times the improvement due to subsampling:
        \begin{align}
            \underbrace{\min_{k\in\cK_n}\hRoob_{1,k} - \min_{k\in\cK_n}\hRoob_{M_{\max},k}}_{\text{improvement due to ensemble}} 
            \quad
            &~>~ 
            \underbrace{\vphantom{\min_{k \in \cK_n} \hRoob_{1,k}} \zeta}_{\text{improvement factor}}
            \times \underbrace{\hRoob_0-\min_{k\in\cK_n}\hRoob_{1,k}}_{\text{improvement due to subsampling}}.
              \label{eq:tobag}
        \end{align}

        \item {Absolute versus normalized tolerances:} The choice of the tolerance threshold $\delta$ is for controlling the absolute suboptimality, but the scale of the prediction risk may be different for different predictors and datasets.
        One can normalize the estimated risk by the null predictor's estimated risk to make the tolerance threshold comparable across different predictors and datasets, or tune based on the multiplicative guarantee \eqref{eq:risk-minimization-estimate-multiplicative}. 
    \end{enumerate}

\section{Numerical illustrations}\label{sec:simulation}

    In this section, we evaluate \Cref{alg:cross-validation} on synthetic data.
    In \Cref{subsec:experiment-risk-extrapolation}, we inspect whether the extrapolated risk estimates $\hRoob_{M,k}$ serve as reasonable proxies for the actual out-of-sample prediction errors for various base predictors on uncorrelated features.
    In \Cref{subsec:ex-k-M}, we further evaluate the risk minimization performance with tuned ensemble parameters $(\hat{M},\hat{k})$.
    Finally, in \Cref{subsec:tuning-random-forests}, we consider tuning $M$ for random forests on correlated features.

\subsection{Validating extrapolated risk estimates}\label{subsec:experiment-risk-extrapolation}
In this simulation, we examine whether our risk extrapolation strategy provides reasonable risk estimates for specific values of ensemble size $M$ and subsample size $k$ based on only the risk estimates for $M = 1$ and $M = 2$.
We evaluate six base predictors (in \cref{fig:predictors}) on data models:
    \begin{enumerate}[(M1)]
        \item\label{model:linear} Linear: A linear model: $y= \bx^{\top}\bbeta + \epsilon$,

        \item\label{model:quad} Quad: A polynomial regression model: $y= \bx^{\top}\bbeta + ((\bx^{\top}\bbeta)^2 - \tr(\bSigma_{\rhoar})/p) + \epsilon$,

        \item\label{model:tanh} Tanh: A single-index regression model: $y= \tanh(\bx^{\top}\bbeta) + \epsilon$,
    \end{enumerate}
    where the features and the coefficients are generated from $\bx\sim \cN_p(\zero,\bSigma_{\rhoar})$ and $\bbeta$ is the average of $\bSigma_{\rhoar}$'s eigenvectors associated with the top-5 eigenvalues.
    Here $\bSigma_{\rhoar} = (\rhoar^{|i-j|})_{1\leq i,j\leq p}$ is the covariance matrix of an auto-regressive process of order 1 (AR(1)) with $\rhoar = 0.5$.
    The additive noise is sampled from $\epsilon\sim\cN(0,\sigma^2)$ with $\sigma=0.5$.
    In this setup, \ref{model:linear} has a signal-noise ratio of 2.4.
    Here, the data aspect ratio $\phi=p/n$ varies from 0.1 (low-dimensional regime) to 10 (high-dimensional regime), and the subsample aspect ratio $\phi_s=p/k$ varies from $0.1$ to $10$ and from $10$ to $100$, respectively.
    The \emph{null risk}, the risk of the null predictor that always outputs zero, can also be estimated at each $\phi$.
    For ridgeless and lassoless predictors, we use rule \eqref{eq:tobag} in \Cref{subsec:practice} to exclude $k$ with exploding risks more than 5 times the estimated null risk.
    
    \begin{wrapfigure}[15]{r}{0.45\textwidth}
        \centering
        \includegraphics[width=0.45\textwidth]{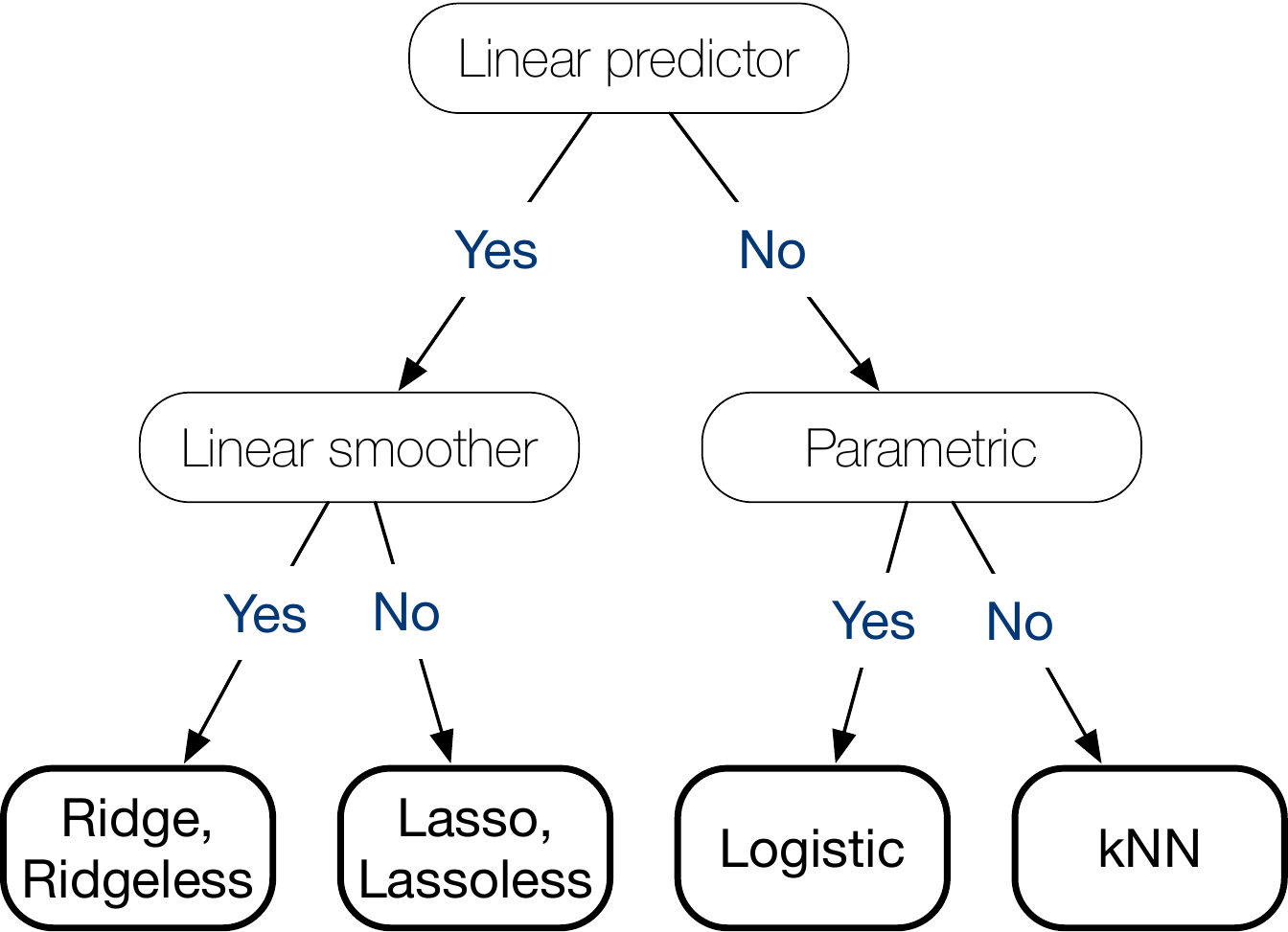}
        \caption{Predictors for regression tasks evaluated in \Cref{subsec:experiment-risk-extrapolation,subsec:ex-k-M}.}\label{fig:predictors}\end{wrapfigure}
The \ECV estimates and the corresponding prediction errors for different ensemble sizes $M$ and subsample aspect ratios $\phi_s$ are then summarized in \Cref{app:ex-result} for bagged and subagged predictors.
As $k$ decreases, the ensemble with subsample size $k$ behaves more like the null predictor.
As a result, the risk curves approach a particular value as the subsample aspect ratio $p/k$ increases.
From \crefrange{fig:est-ridge}{fig:est-kNN}, we observe a good match between the ECV estimates and the out-of-sample prediction errors.
This suggests that \Cref{thm:limiting-oobcv-for-arbitrary-M-cond} potentially applies to various types of predictors.
Comparing the results of bagging and subagging, the risk estimates are very similar, especially in the overparameterized regime when $p>n$.
Thus, we will only present the results using bagging for illustration purposes.

\subsection[Tuning ensemble and subsample sizes]{Tuning ensemble and subsample sizes}\label{subsec:ex-k-M}

Next, we examine the performance of \ECV on predictive risk minimization. 
More specifically, we apply \Cref{alg:cross-validation} using the rule \eqref{eq:tobag} with $\zeta=5$ to tune an ensemble that is close to the optimal $M_{\max}$-ensemble up to additive error $\delta$, where the maximum ensemble size $M_{\max}$ is $50$ and optimality threshold $\delta$ ranges from 0.01 to 1.
With the same predictors and data used in \Cref{subsec:experiment-risk-extrapolation}, their out-of-sample mean squared errors are evaluated on the same test set.

\begin{figure}[!t]
    \centering
    \includegraphics[width=\textwidth]{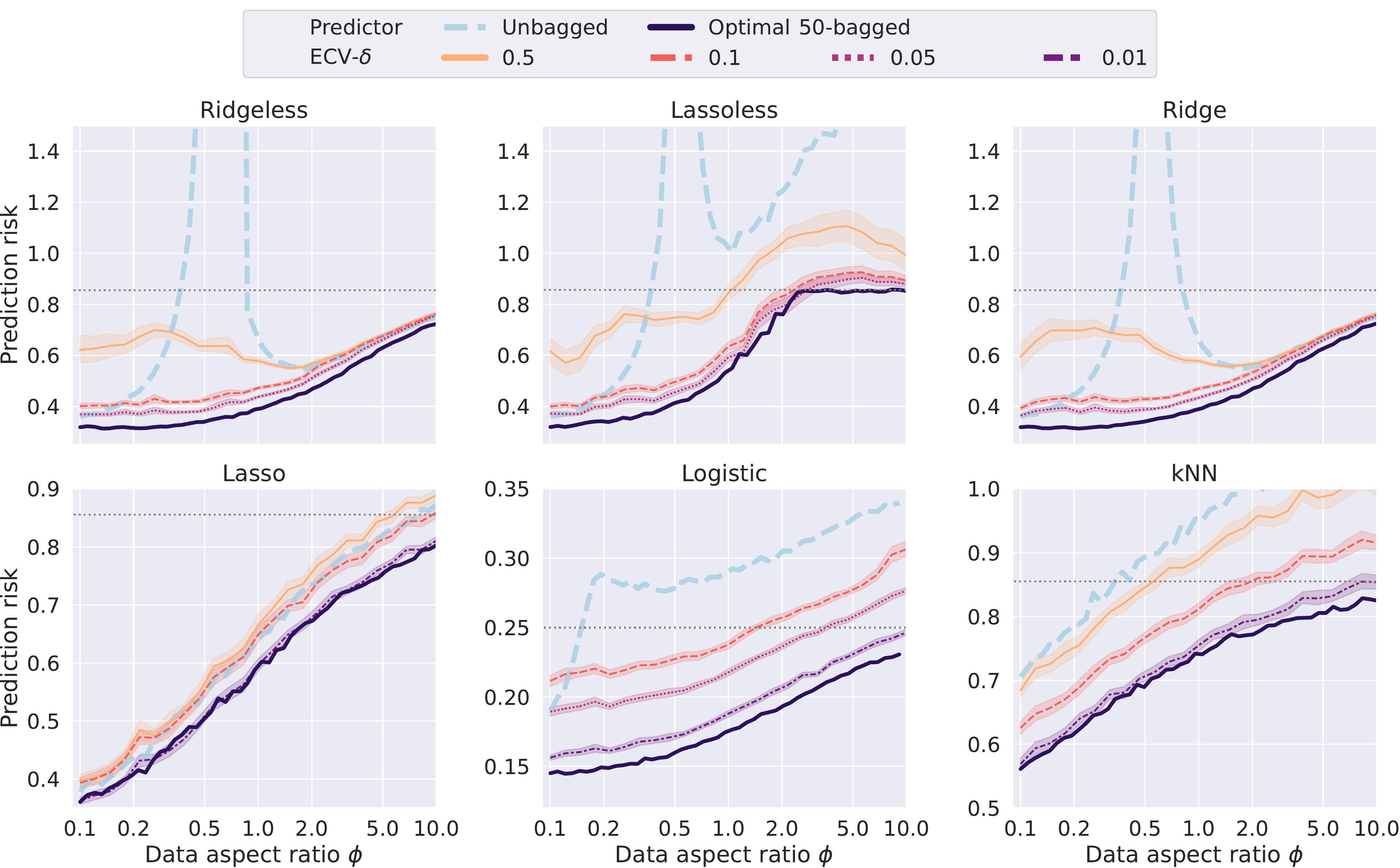}
    \caption{Prediction risk for different bagged predictors by \ECV, under model \ref{model:quad} with $\sigma=\rhoar=0.5$, $M_0=10$, and $M_{\max}=50$, for varying $\phi$ and tolerance threshold $\delta$.
    An ensemble is fitted when \eqref{eq:tobag} is satisfied with $\zeta=5$.
    The null risks and the risks for the non-ensemble predictors are marked as gray dotted lines and blue thick dashed lines, respectively.
    The points denote finite-sample risks averaged over 100 dataset repetitions, and the shaded regions denote the values within one standard deviation, with $n=1,000$ and $p=\lfloor n\phi\rfloor$.}
    \label{fig:oobcv-quad-bagging}
\end{figure}

As summarized in \cref{fig:oobcv-quad-bagging}, the thick dashed lines represent the non-ensemble prediction risk, and the thick solid lines represent the prediction risk of optimal $50$-bagged predictors using a finer grid.
Note that the former may be non-monotonic in the data aspect ratio $\phi$, but the latter is increasing in $\phi$.
The finite-sample prediction errors of the \ECV-tuned predictor are shown as thin lines.
As we can see, when $\delta$ decreases, the prediction errors of \ECV get closer to those of the optimal 50-bagged predictor.
The slight discrepancy between the \ECV-tuned risks with the least $\delta$ and the oracle risks comes from the fact that a coarser grid $\cK_n$ is used for \ECV tuning.
Overall, the results suggest that the ECV-tuned ensemble parameters $(\hat{M},\hat{k})$ give risks close to the oracle choices for various predictors within the desired optimality threshold $\delta$ in finite samples.
Lastly, though \ECV is proposed for regression tasks, the numerical results in \Cref{app:subsec:imbalance} support its superiority over \kfoldcv in imbalanced binary classification scenarios.

\subsection[Tuning ensemble sizes of random forests]{Tuning ensemble sizes of random forests}\label{subsec:tuning-random-forests}

When the data aspect ratio $p/n$ is too small, tuning both the subsample size $k$ and the ensemble size $M$ may be unnecessary.
In such cases, tuning the ensemble size $M$ (in the sense that how large $M$ is sufficient to have good performance) is a more substantial and practical consideration.
In this experiment, we apply \Cref{alg:cross-validation} to tune only the ensemble size of random forests.

Since the most crucial advantage of the random forest model is its flexibility to incorporate highly correlated variables while avoiding multi-collinearity issues, we consider the nonlinear model \ref{model:quad} with a non-isotropic AR(1) covariance matrix.
For a given dataset, we examine two strategies to estimate the conditional prediction risks.
The first utilizes the OOB observations according to \Cref{alg:cross-validation}, while the other uses a hold-out subset to estimate the risks.
Similar to \citet{lopes2019estimating}, $\lfloor n/6\rfloor$ observations are randomly selected as the evaluation set for the hold-out estimates.
As suggested by \citet{hastie2009elements}, each decision tree uses $\lfloor p/3\rfloor$ randomly selected features with a minimum node size of 5 as the default without pruning.
To build each tree, we fix the subsample size $k=n(1-1/\log n)$ observations for subagging.
The results are shown in \cref{fig:ar1-rf}, where the standard deviation of the estimates are also visualized as error bars.
In the underparameterized regime when $n>p$, we observe that ECV and hold-out estimates have similar performance.
Both of them are close to the out-of-sample errors in this case.
However, in the overparameterized regime when $n<p$, the hold-out estimates suffer from biases due to sample splitting.
On the contrary, the ECV estimates are still accurate and have smaller variability compared to the hold-out estimates.
In the right panel of \cref{fig:ar1-rf}, we see that ECV estimates provide a valid extrapolation path from $M=20$ to $M=500$ in the high-dimensional scenarios.

Finally, we conduct a sensitivity analysis of the hyperparameter $M_0$, the correlative strength $\rhoar$ of the feature covariance matrix, and the covariance structures in \Crefrange{app:subsubsec:M0}{app:subsubsec:cov}.
The results suggest that \ECV is relatively robust under various scenarios and various choices of hyperparameters.
In \Cref{app:subsubsec:mtry}, we illustrate the utility of \ECV for tuning the number of features drawn when splitting each node of random forests.

\begin{figure}[!t]
    \centering
    \includegraphics[width=0.8\textwidth]{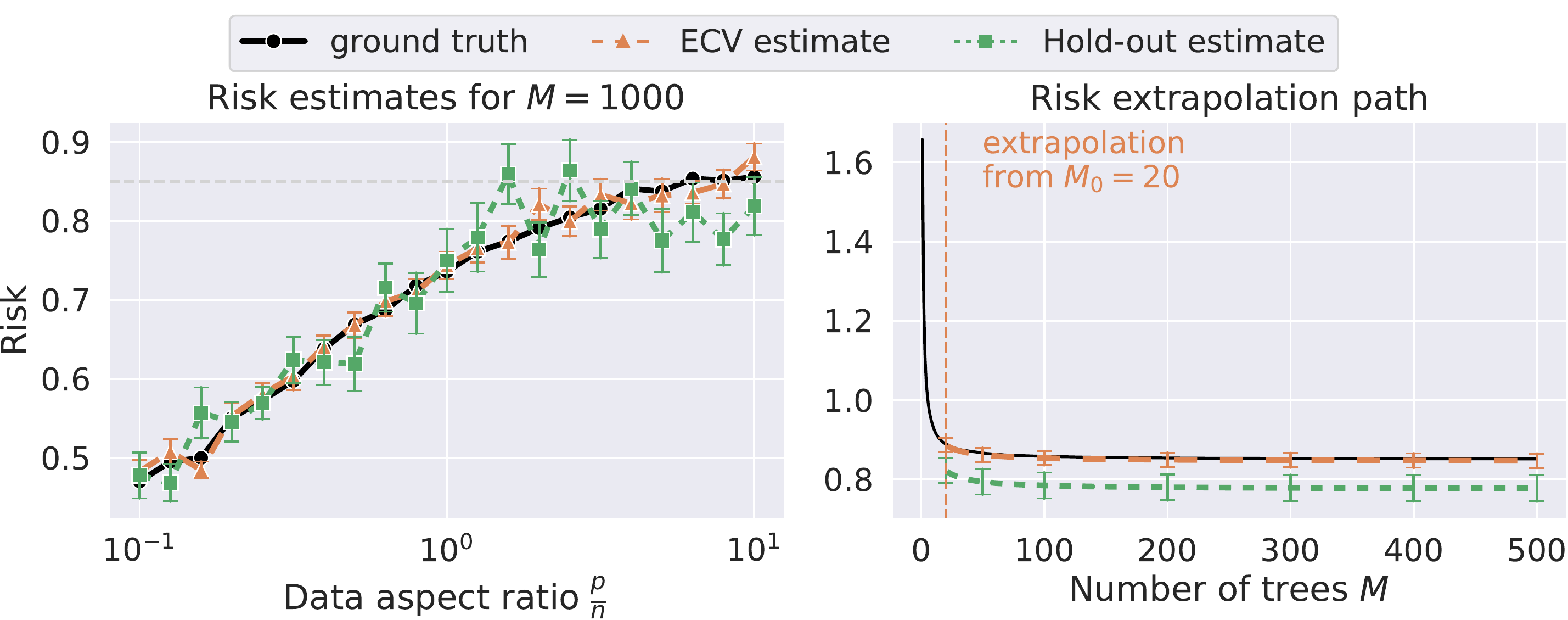}
    \caption{Risk extrapolation for random forest based on the first $M_0=20$ trees, under model \ref{model:quad} with $\sigma=\rhoar=0.5$.
    The left panel shows the risk estimates for random forests with $M=500$ trees and varying data aspect ratios, and the gray dash line $y=0.85$ denotes the risk of the null predictor; the right panel shows the full risk extrapolation path in $M$ when $p/n=8$.
    The error bar denotes one standard deviation across 100 simulations, and the ground truth is estimated from $2,000$ test observations, with $n=500$.}
    \label{fig:ar1-rf}
\end{figure}

\section{Applications to single-cell multiomics}\label{sec:app-sc}
    In genomics, cell surface proteins act as primary targets for therapeutic intervention and universal indicators of particular cellular processes. 
    More importantly, immunophenotyping of cell surface proteins has become an essential tool in hematopoiesis, immunology, and cancer research over the past 30 years \citep{hao2021integrated}. 
    However, most single-cell investigations only quantify the transcriptome without cell-matched measures of related surface proteins due to technical limitations and financial constraints \citep{zhou2020surface,du2022robust}. 
    The specific cell types and differentially abundant surface proteins are determined after thoroughly analyzing the transcriptome. 
    This has led researchers to investigate how to reliably predict protein abundances in individual cells using their gene expressions.
    Specifically, the effectiveness of ensemble methods has been illustrated by \citep{heckmann2018machine,li2019joint,xu2021ensemble} on the protein prediction problem. 
    Yet, in practice, because of the lack of theoretical results and pragmatic guidelines, the ensemble and subsample sizes are generally determined by ad hoc criteria.

    In this section, we apply the proposed method to real datasets in single-cell multi-omics  \citep{hao2021integrated}. See \Cref{app:sc} for details on the datasets and the preprocessing steps.
    Based on these real-world datasets, we compare three different cross-validation methods for tuning both the ensemble size and the subsample size of random forests: (1) \kfoldcv: the $K$-fold CV ($K=5$); 
    (2) \splitcv: sample-split or holdout CV (the ratio of training to validation observations is 5:1); and (3) \ECV: the proposed extrapolated CV.
    The grid of subsample sizes $\cK_n$ is generated according to \Cref{alg:cross-validation}.
    To ensure all the CV methods are comparable and fairly evaluated, we evaluate the three CV methods on the same grid $[M_{\max}]\times \cK_n$ for $M_{\max}=50$.
    Decision trees are used as the base predictors to predict the abundance of each protein based on the gene expressions of subsampled cells.
    After the tuning parameters $(\hat{M},\hat{k})$ are obtained, we refit the ensemble on the entire training set and evaluate it on the test set.
    For each method, the $M_{\max}$ base predictors are fitted once so that the training costs are almost the same for all three.
    The computational complexity of the three methods is discussed in \Cref{subsec:practice}.
    Because different proteins may have different variances, we measure the overall protein prediction accuracy by the normalized mean squared error (NMSE), which is the ratio of the mean squared error to the empirical variance on the test set.
    Our target is to select a $\delta$-optimal random forest so that its NMSE is no more than $\delta=0.05$ away from the best random forest with 50 trees.

     To illustrate our proposed method, we visualize the prediction risk estimate and out-of-sample error for surface protein CD103 in \cref{fig:heatmap_5,fig:heatmap}.
    Because the response is centered, the empirical variance of the response serves as an estimate of the null risk, i.e., the risk of the null predictor that always outputs zeros.
    A value of NMSE less than one indicates that the predictor performs better than the null risk.
    From \cref{fig:heatmap_5}, we see that the ECV extrapolated estimates in the left panel are largely consistent with the actual prediction errors in the right panel.
    The out-of-sample error can still be considerable when $M=10$ as shown in \cref{fig:heatmap_5}.
    As the ensemble size $M$ increases, both become more stable for various subsample sizes.
    Further, we observe that the tuned ensemble and subsample sizes are close to the optimal ones on the test dataset in finite samples.
    The tune subsample size $k$ is much smaller than the total sample size $n$.
    This indicates that our proposed method tracks the out-of-sample optimal parameter well.

    \begin{figure}[!t]
        \centering
            \includegraphics[width=\textwidth]{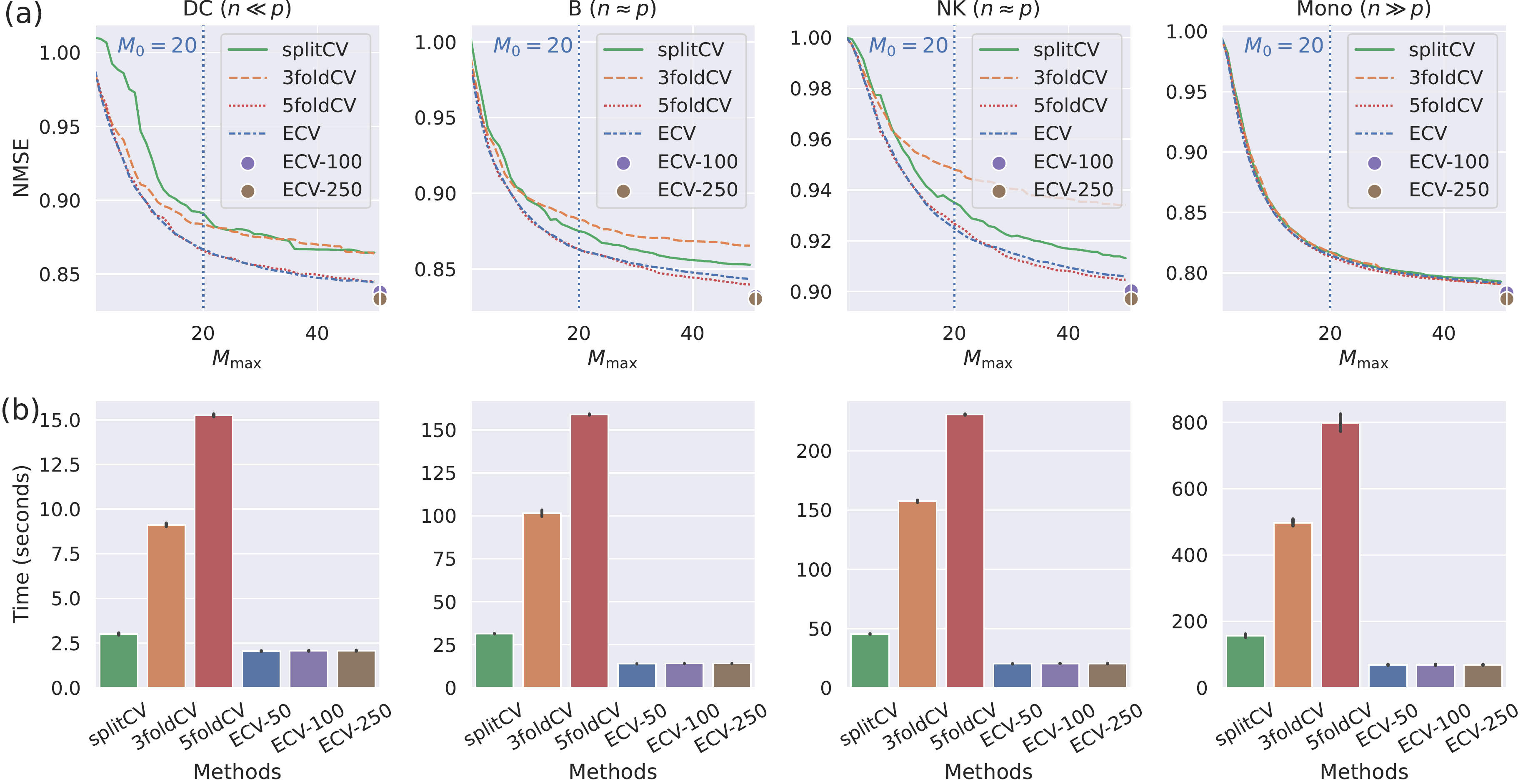}
        \caption{Performance of CV methods on predicting the protein abundances in different cell types. 
        {(a)} The average NMSEs of the cross-validated predictors for different methods. \ECV uses $M_0=20$ trees to extrapolate risk estimates, and the points correspond to $M_{\max}\in\{100,250\}$.
        {(b)} The average CV time consumption in seconds.
        }
        \label{fig:sc}
    \end{figure}
    
    We compare the performance of different CV methods on various cell types.
    As shown in \cref{fig:sc}(a), \kfoldcv is very sensitive to the choice of the number of fold $K$.
    \splitcv and \kfoldcv with $K=3$ have the worst out-of-sample performance among all methods.
    In the low-dimensional dataset of the Mono cell type, all methods have similar performance.
    However, \ECV significantly improves upon \splitcv with insufficient sample sizes, as in the DC, B, and NK cell types.
    In all cases, the out-of-sample NMSEs of \ECV are comparable to \kfoldcv with $K=5$.
    Yet, \ECV is more flexible to tuning large ensemble sizes, such as $M_{\max}=100$ and $250$.
    This suggests that the extrapolated risk estimates of \ECV based on only $M_0=20$ trees are accurate for tuning ensemble parameters on these datasets.

    Regarding time consumption, recall that the base predictors are only fitted once (\Cref{subsec:practice}) so that the comparison is fair for all methods. 
    From \cref{fig:sc}(b), we see that tuning by \ECV is significantly faster than \kfoldcv because we are extrapolating the risks from $20$ trees rather than estimating them for every $M > 20$.
    Though \ECV has similar time complexity as \splitcv when the sample sizes are small, it uses less than 50\% of the time used by \splitcv as the sample size increases.
    Thus, \ECV achieves better computational efficiency than the alternative CV methods in a variety of settings.

\section{Discussion}\label{sec:conclusion}

    This paper addresses the challenge of tuning the ensemble and subsample sizes for randomized ensemble learning.    
    While previous work has extensively focused on the statistical properties of bagging and random forests, ensemble tuning remains an area that has received comparatively less attention.
    To bridge this gap, this paper introduces a method that efficiently provides $\delta$-optimal ensemble parameters (for the tuned risk) without requiring an exhaustive search over all feasible parameter combinations, a common limitation of conventional cross-validation methods.
    
    Our method hinges on the extrapolation of risk estimates, which is afforded due to a special decomposition of the squared risk and the consistent risk estimators of each component.
    Our method is sample-efficient and can be naturally extended to tuning hyperparameters other than subsample sizes, as shown in \Cref{app:subsubsec:mtry}.
    Furthermore, our algorithm guarantees the $\delta$-optimality of the tuned ensemble risk in relation to the oracle ensemble risk.

    To demonstrate the practical utility of \ECV, we apply it to the task of predicting proteins in single-cell data.
    In scenarios with small sample sizes, \ECV achieves smaller out-of-sample errors without sample splitting compared to traditional \splitcv.
    On the other hand, it drastically reduces the time complexity compared to $K$-fold CV and maintains comparable accuracy without repeatedly fitting numerous predictors.
    In summary, \ECV exhibits both statistical and computational efficiency in protein prediction~tasks.

    We next point out some limitations of this work and propose possible future directions.
    Future directions for this work include extending the theoretical framework to other types of loss functions, such as smooth loss functions, by applying the "sandwich" sub-optimality gap approach \citep[Proposition 3.6]{patil2022bagging}. Additionally, exploring variance estimation and extrapolation for the proposed risk estimators could provide a more comprehensive understanding of their uncertainty. Integrating the concept of performing CV with confidence \citep{lei2020cross} could address the important issue of quantifying confidence in a tuned model for model selection.

    \paragraph{Acknowledgments}
    We thank the editor, the associate editor, and two anonymous reviewers for their valuable and constructive comments, which led to various improvements in this paper.
    This work used the Bridges-2 system at the Pittsburgh Supercomputing Center (PSC) through allocations BIO220140 and MTH230020 from the Advanced Cyberinfrastructure Coordination Ecosystem: Services \& Support (ACCESS) program.
    This project was partially funded by the National Institute of Mental Health (NIMH) grant R01MH123184.

\bigskip

{

}
\end{bibunit}

\clearpage

\begin{bibunit}[apalike]

\beginsupplement

\begin{center}
{\Large
{\bf
Supplementary material for \\``\titleRLB''
}}
\end{center}

This document acts as a supplement to the paper ``\titleRLB.'' 
The section numbers in this supplement begin with the letter ``S'' and the equation numbers begin with the letter ``S'' to differentiate them from those appearing in the main paper.

\addcontentsline{toc}{section}{Notation and organization}
\section*{Notation and organization}

\subsection*{Notation}
\label{sec:notation}

Below, we provide an overview of the notation used in the main paper and the supplement.
\begin{enumerate}[labelsep=1mm,leftmargin=7mm]
    \item \underline{General notation}: We denote scalars in non-bold lower or upper case (e.g., $n$, $\lambda$, $C$), vectors in lower case (e.g., $\bx$, $\bbeta$), and matrices in upper case (e.g., $\bX$).
    For a real number $x$, $(x)_{+}$ denotes its positive part, $\lfloor x \rfloor$ its floor, and $\lceil x \rceil$ its ceiling.
    For a natural number $n$, $n!=\prod_{i=1}^n i$ denotes the $n$ factorial.
    For a vector $\bbeta$, $\| \bbeta \|_{2}$ denotes its $\ell_2$ norm.
    For a pair of vectors $\bv$ and $\bw$, $\langle \bv, \bw \rangle$ denotes their inner product.
    For an event $A$, $\ind_A$ denotes the associated indicator random variable.
    We use $\cO_p$ and $o_p$ to denote probabilistic big-O and little-o notation, respectively.

    \item \underline{Set notation}: We denote sets using calligraphic letters (e.g., $\cD$), and use blackboard letters to denote some special sets: $\NN$ denotes the set of positive integers, $\RR$ denotes the set of real numbers, $\RR_{\ge 0}$ denotes the set of non-negative real numbers, and $\RR_{> 0}$ denotes the set of positive real numbers.
    For a natural number $n$, we use $[n]$ to denote the set $\{ 1, \dots, n \}$.

    \item \underline{Matrix notation}: For a matrix $\bX \in \RR^{n \times p}$, $\bX^\top \in \RR^{p \times n}$ denotes its transpose.
    For a square matrix $\bA \in \RR^{p \times p}$, $\bA^{-1} \in \RR^{p \times p}$ denotes its inverse, provided it is invertible.
    For a positive semi-definite matrix $\bSigma$, $\bSigma^{1/2}$ denotes its principal square root.
    A $p \times p$ identity matrix is denoted $\bI_p$, or simply by $\bI$, when it is clear from the context.
\end{enumerate}

\subsection*{Organization}

Below, we outline the structure of the rest of the supplement.

\begin{itemize}[labelsep=1mm,leftmargin=7mm]
    \item In \Cref{app:oobcv}, we present proofs of results appearing in \Cref{sec:oobcv}.
    \begin{itemize}
        \item \Cref{app:prop:squared_risk_decom} proves \Cref{prop:squared_risk_decom}.
        \item \Cref{app:prop:bounded-variance-error-control-mul-form} proves \Cref{prop:bounded-variance-error-control-mul-form}.

    \item \Cref{app:subsec:M12-cv}
    proves \Cref{lem:consistency-oobcv},
    conditional on certain helper lemmas presented in the next section.

    \end{itemize}
    
    \item 
    \Cref{app:risk} provides intermediate concentration results 
    used in the proof of \Cref{lem:consistency-oobcv}.
    
    \item In \Cref{app:extrapolation}, we present proof of results in \Cref{subsec:risk-extrapolation}.
    \begin{itemize}
        \item \Cref{app:thm:limiting-oobcv-for-arbitrary-M-cond} proves \Cref{thm:limiting-oobcv-for-arbitrary-M-cond}.
        \item \Cref{app:subsec:mul-opt} extends additive optimality in \Cref{thm:limiting-oobcv-for-arbitrary-M-cond} to multiplicative optimality
        stated in \Cref{rm:mul-opt}.
        \item \Cref{app:ridge} specialize \Cref{thm:limiting-oobcv-for-arbitrary-M-cond} to ridge predictor under weaker assumptions.
    \end{itemize}
    \item In \Cref{sec:appendix-concerntration}, we collect various technical helper lemmas
    related to concentrations and convergences along with their proofs that are used in various proofs in 
    \Crefrange{app:oobcv}{app:extrapolation}.

    \item In \Cref{app:ex-result}, we present additional numerical results
    for \Cref{sec:simulation} and \Cref{sec:app-sc}.
    \begin{itemize}
        \item \Cref{app:ex-risk-extrapolation} presents additional illustrations for subagging and bagging in \Cref{subsec:experiment-risk-extrapolation}.
        
        \item \Cref{app:ex-k-M} presents additional illustrations for subagging and bagging in \Cref{subsec:ex-k-M}.

        \item \Cref{app:subsec:imbalance} presents results of ECV on imbalanced classification.
        
        \item \Cref{app:tuning-random-forests} presents additional illustrations for bagging in \Cref{subsec:tuning-random-forests}.
        
        \item \Cref{app:sc} presents additional illustrations for \Cref{sec:app-sc}.
    \end{itemize}
\end{itemize}

\section{Related work on cross-validation}\label{sec:related-work-general-cv}
    Different cross-validation (CV) approaches have been proposed for parameter tuning and model selection \citep{allen_1974, stone_1974,stone_1977,geisser_1975}.
    We refer readers to \cite{arlot_celisse_2010,zhang2015cross} for a review of different CV variants used in practice.
    The simplest version of CV is the sample-split CV \citep{hastie2009elements}, which holds out a specific portion of the data to evaluate models with different parameters.
    By repeated fitting of each candidate model on multiple subsets of the data, $K$-fold CV extends the idea of the sample splitting and reduces the estimation uncertainty.
    When $K$ is small, the risk estimate may inherit more uncertainty; however, it can be computationally prohibitive when $K$ is~large. Asymptotic distributions of suitably normalized $K$-fold CV are obtained in~\citet{austern_zhou_2020}, under some stability conditions on the predictors.

    In a high-dimensional regime where the number of variables is comparable to the number
    of observations, the commonly-used small values of $K$ such as $5$ or $10$
    suffer from bias issues in risk estimation \citep{rad_maleki_2020}.
    Leave-one-out cross-validation (LOOCV), i.e., the case when $K = n$, alleviates the bias issues in risk estimation, whose theoretical properties have been analyzed in recent years by \citet{kale_kumar_vassilvitskii_2011,kumar_lokshtanov_vassilviskii_vattani_2013,rad_zhou_maleki_2020}.
    However, LOOCV, in general, is computationally expensive to evaluate, and there has been some work on approximate LOOCV to address the computational issues \citep{wang2018approximate,stephenson_broderick_2020,wilson_kasy_mackey_2020,rad_maleki_2020}.
    Another line of research about CV is on statistical inference; see, for example, \citet{wager2014confidence,lei2020cross,bates2021cross}.
    Central limit theorems for CV error and a consistent estimator of its variance are derived in \citet{bayle_bayle_janson_mackey_2020}, which assumes certain stability assumptions, similar to \citet{kumar_lokshtanov_vassilviskii_vattani_2013, celisse_guedj_2016}. 
    Their results yield asymptotic confidence intervals for the prediction error and apply to $K$-fold CV and LOOCV.
    A naive application of these traditional CV methods for ensemble learning to tune $M$ and $k$ requires fitting the ensembles of arbitrary sizes $M$, leading to a higher computational cost.

\section{Proofs of results in \Cref{sec:oobcv}}\label{app:oobcv}

\subsection[Proof of \Cref{prop:squared_risk_decom}]{Proof of \Cref{prop:squared_risk_decom} (Squared risk decomposition)}\label{app:prop:squared_risk_decom}
\begin{proof}[Proof of \Cref{prop:squared_risk_decom}.]
    We start by expanding the squared risk as:
    \begin{align*}
        &R(\tf_{M,k} ; \, \cD_n,\{I_{\ell}\}_{\ell = 1}^{M}) \\
        &= \int\left(y
        -\frac{1}{M}\sum\limits_{\ell=1}^M\hf(\bx;\cD_{I_\ell})\right)^2 \,\rd P(\bx,y) \\
        &=
        \int
        \left(
        \frac{1}{M}
        \sum_{\ell = 1}^{M}
        \big(y - \hf(\bx; \cD_{I_\ell})\big)
        \right)^2
        \, \mathrm{d}P(\bx, y) \\
        &=
        \frac{1}{M^2}
        \sum_{\ell=1}^{M}
        \int \big(y - \hf(\bx; \cD_{I_\ell})\big)^2
        \, \mathrm{d}P(\bx, y)  + 
        \frac{1}{M^2}
        \sum_{i = 1}^{M}
       \sum\limits_{\substack{j=1 \\ j \neq i}}^{M}
        \int
        \big(y - \hf(\bx; \cD_{I_i})\big) \big(y - \hf(\bx; \cD_{I_j})\big)
        \, \mathrm{d}P(\bx, y) \\
        &=
        \frac{1}{M^2}
        \sum_{\ell=1}^{M}
        R(\tf_{1, k}; \cD_n, I_\ell) + 
        \frac{1}{M^2}
        \sum_{i = 1}^{M}
       \sum\limits_{\substack{j=1 \\ j \neq i}}^{M}
        \int
        (y - \hf(\bx; \cD_{I_i})) (y - \hf(\bx; \cD_{I_j}))
        \, \mathrm{d}P(\bx, y) \\
        &\stackrel{(i)}{=}
        \frac{1}{M^2}
        \sum_{\ell=1}^{M}
        R(\tf_{1, k}; \cD_n, I_\ell) \\
        &~
        \qquad  + 
        \frac{1}{M^2}
        \sum_{i = 1}^{M}
       \sum\limits_{\substack{j=1 \\ j \neq i}}^{M}
        \int
        \frac{1}{2}
        \left\{
        4
        \Big(y - \frac{1}{2} \big(\hf(\bx; \cD_{I_i}) + \hf(\bx; \cD_{I_j})\big)  \Big)^2 \right.\\
        &\qquad \qquad  \left.
        - \big(y - \hf(\bx; \cD_{I_i})\big)^2
        - \big(y - \hf(\bx; \cD_{I_j})\big)^2
        \right\}
        \, \mathrm{d}P(\bx, y) \\
        &=
        \frac{1}{M^2}
        \sum_{\ell=1}^{M}
        R(\tf_{1, k}; \cD_n, I_\ell) \\
        &~
        + 
        \frac{1}{M^2}
        \sum_{i = 1}^{M}
       \sum\limits_{\substack{j=1 \\ j \neq i}}^{M}
       \frac{1}{2}
       \left\{
       4 R(\hf_{2, k}; \cD_n; I_{i}, I_{j})
       - R(\tf_{1, k}; \cD_{n}; I_i)
       - R(\tf_{1, k}; \cD_{n}; I_j)
       \right\} \\
        &=
        \frac{1}{M^2}
        \sum_{\ell=1}^{M}
        R(\tf_{1, k}; \cD_n, I_\ell)
        \\
        &~ - \frac{1}{2M^2}
        \sum_{i=1}^{M} 
        \sum\limits_{\substack{j=1 \\ j \neq i}}^{M}
        R(\tf_{1,k}; I_i)
        - \frac{1}{2M^2}
        \sum_{i=1}^{M} 
        \sum\limits_{\substack{j=1 \\ j \neq i}}^{M}
        R(\tf_{1,k}; I_j)
        + 
        \frac{1}{M^2}
        \sum_{i=1}^{M}
        \sum\limits_{\substack{j = 1 \\j \neq i}}^{M}
       2 R(\hf_{2, k}; \cD_n; I_{i}, I_{j}) \\
       &=
       \frac{1}{M^2} \sum_{\ell = 1}^{M} R(\tf_{1, k}; \cD_n; I_{\ell})
       - \frac{1}{2M^2} 
       \cdot 2  \cdot (M - 1) \sum_{\ell=1}^{M}
       R(\tf_{1, k}; I_{\ell})
       + \frac{2}{M^2}
       \sum\limits_{\substack{i, j \in [M] \\ i \neq j}}
       R(\hf_{2, k}; \cD_n; I_i, I_j) \\
       &=
       \left(
       \frac{1}{M^2} 
       - \frac{(M - 1)}{M^2} 
       \right)
       \sum_{\ell = 1}^{M} R(\tf_{1, k}; \cD_n; I_{\ell})
       \\
       &\qquad + \frac{2}{M^2}
       \sum\limits_{\substack{i, j \in [M] \\ i \neq j}}
       R(\hf_{2, k}; \cD_n; I_i, I_j) \\
        &= -\left(\frac{1}{M}-\frac{2}{M^2}\right)\sum_{\ell=1}^M R(\tf_{1,k}; \, \cD_n,\{I_{\ell}\}) + \frac{2}{M^2}\sum\limits_{\substack{i,j\in[M]\\i\neq j}}  R(\tf_{2,k} ; \, \cD_n, \{I_{i},I_j\}).
    \end{align*}
    In the expansion above,
    for equality $(i)$,
    we used the fact that $ab = \{ 4(a/2 + b/2)^2 - a^2 - b^2 \} /2$.
    This finishes the proof.
\end{proof}

\subsection[Proof of \Cref{prop:bounded-variance-error-control-mul-form}]{Proof of \Cref{prop:bounded-variance-error-control-mul-form} (Consistent component risk estimation)}\label{app:prop:bounded-variance-error-control-mul-form}
\begin{proof}[Proof of \Cref{prop:bounded-variance-error-control-mul-form}.]
    Let $\Delta_n = |\hR(f , \cD_I)-R(\hf;\cD_I)|$.
    We will view $p$, $|I|$, and $|I^c|$ as sequences indexed by $n$.
    Define $\hsigma_I= \|(y_0- \hf(\bx_0;\cD_I))^2 \|_{\psi_1 \mid \cD_I}$.
    From \citet[Lemma 2.9, Lemma 2.10]{patil2022bagging}, we have    
    \begin{align}
        \PP\left(\Delta_n \geq C \hsigma_I \max\left\{\sqrt{\frac{A\log n}{|I^c|}},  \frac{A\log n}{|I^c|}\right\}\right) \leq n^{-A}, \label{eq:bound-Delta}
    \end{align}
    for some positive constant $C$.
    Let $\kappa_n = C \hsigma_I \max\left\{\sqrt{\frac{A\log n}{|I^c|}},  \frac{A\log n}{|I^c|}\right\}$.
    
    Since $\hsigma_I =o_p(\sqrt{|I^c|/\log n})$  as $n\rightarrow\infty$, we have that $\kappa_n = o_p(1)$.
    Let $A>0$ be fixed,
    For all $\epsilon>0$, we have that
    \begin{align*}
        \PP(\Delta_n >\epsilon) &= \PP(\Delta_n >\epsilon \geq \kappa_n)  + \PP(\Delta_n>\epsilon,\kappa_n>\epsilon) \\
        &\leq n^{-A}+ \PP(\kappa_n>\epsilon) \rightarrow0,
    \end{align*}
    which implies that $\Delta_n=o_p(1)$.    
    The proof for $\|\cdot\|_{L_2\mid\cD_I}$ follows analogously.
\end{proof}

\subsection[Uniform consistency for (M,k)]{Proof of \Cref{lem:consistency-oobcv} (Uniform risk estimation over $(M, k)$)}\label{app:subsec:M12-cv}
    \begin{figure}[!ht]
        \centering
        \begin{tikzpicture}[scale=0.9, transform shape]

            \node[below of = Rh12cv, draw,
            text width=5cm, node distance=2.5cm, 
            align=center](R12) at (0,0) {$\sup\limits_{k\in\cK_n}|R_{M,k} - \sR_{M,k}|$ \\ $M=1,2$};

            \node[above of=R12,text width=5cm,
            align=center]{\Cref{cond:conv-risk-M12}};
            
            \node[below of = R12,
            text width=5cm,
            align=center,
            node distance=4.0cm, draw](ER12s) {$\EE\left(\sup\limits_{k\in\cK_{n}}|R_{M,k} - \sR_{M,k}| \mid \cD_n\right)$ $M=1,2$};

            \node[below of = ER12s, 
                text width=5cm, 
                align=center,
                node distance=4cm, draw](RM) {$\sup\limits_{M\in\NN,k\in\cK_n}|R_{M,k}-\sR_{M,k}|$};

            \node[right of =R12, draw,
            text width=5cm, node distance=10.0cm,
            align=center](Rh12cv) {$\sup\limits_{k\in\cK_n}|\hRoob_{M,k} - R_{M,k}|$ \\ $M=1,2$};

            \node[above of=Rh12cv,text width=5cm,
            align=center,above=-0.2cm]{\Cref{ass:kappa} and \Cref{prop:consistency-M12}};

            \node[right of = ER12s, 
                text width=5cm, 
                align=center,
                node distance=10.0cm, draw](RhM) {$\sup\limits_{M\in\NN,k\in\cK_n}| \hRoob_{M,k}-\sR_{M,k}|$};

            \node[right of = RM, 
                text width=5cm, 
                align=center,
                node distance=10.0cm, draw](RhR) {$\sup\limits_{M\in\NN,k\in\cK_n}|\hRoob_{M,k}-R_{M,k}|$};

            \draw[-triangle 45] (R12) to node [text width=2.5cm,midway,right,align=center] {
            \Cref{lem:conv:data-risk-M12}} (ER12s);

            \draw[-triangle 45] (ER12s) to node [text width=2.5cm,midway,right,align=center] {\Cref{lem:conv:risk-M}} (RM);
            
            \draw[-triangle 45] (R12.south) to node [text width=2.5cm,midway,right=3.2cm,above=-0.4cm,align=center] {\Cref{prop:consistency-M}} (RhM.north);

            \draw[-triangle 45] (Rh12cv) to node [text width=2.5cm,midway,right,align=center] {} (RhM);

            \draw[-triangle 45] (RhM.south) to node [text width=2.5cm,midway,right,align=center] {} (RM.north);

            \draw[dashed,-triangle 45] (RM) to node [text width=2.5cm,midway,right,align=center] {} (RhR);

            \draw[dashed,-triangle 45] (RhM) to node [text width=2.5cm,midway,right,align=center] {} (RhR);
            
        \end{tikzpicture}
        \caption{Reduction strategy for obtaining concentration results in \Cref{lem:consistency-oobcv}.
        The solid lines indicate basic components presented later in \Cref{app:risk} and the dashed lines indicate the proof strategy in \Cref{app:subsec:M12-cv}.
        Here, $\sR_{M,k}=2\sR_{2,k}-\sR_{1,k}+2(\sR_{1,k}-\sR_{2,k})/M$.}
        \label{fig:proof}
    \end{figure}
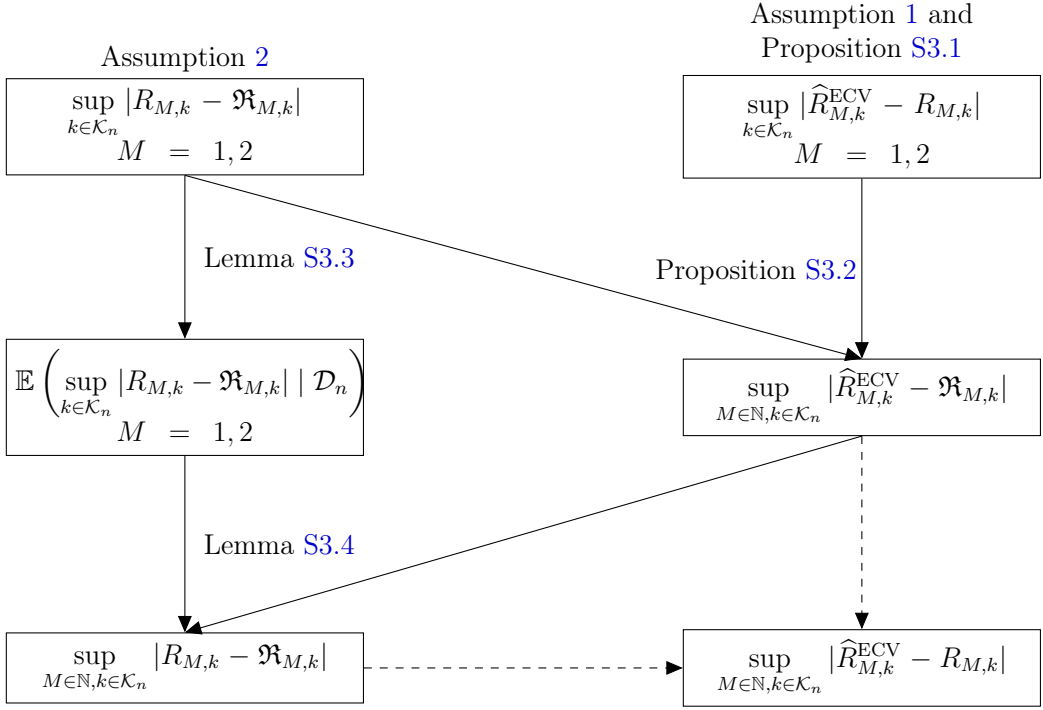
    
    \begin{proof}[Proof of \Cref{lem:consistency-oobcv}.]
        Before we prove \Cref{lem:consistency-oobcv}, we present the proof strategy in \cref{fig:proof}.
        In \cref{fig:proof}, four important auxiliary lemmas and propositions are deferred to the next section.
        For the asymptotic results, we will let $k,p$ be sequences of integers $\{k_n\}_{n=1}^{\infty}$, $\{p_n\}_{n=1}^{\infty}$ indexed by $n$ but drop the subscripts $n$.

        From \Cref{lem:conv:risk-M} we have
        \begin{align*}
             \sup_{M\in\NN,k\in\cK_n} | R_{M,k} - \sR_{M,k} |=\cO_p(n^{\epsilon}(\gamma_{1,n}+ \gamma_{2,n})).
        \end{align*}
        On the other hand, from \Cref{prop:consistency-M} we have
        \begin{align*}
            \sup_{M\in\NN,k\in\cK_n} | \hRoob_{M,k} - \sR_{M,k} |=\cO_p(\hsigma_n \log n /n^{1/2}+n^{\epsilon}(\gamma_{1,n}+ \gamma_{2,n})),
        \end{align*}
        where $\hsigma_n= \max_{m,\ell\in[M_0],k\in\cK_n}\hsigma_{I_{k,\ell}\cup I_{k,m}}$.
        By the triangle inequality, we have
        \begin{align*}
             \sup_{M\in\NN,k\in\cK_n} | \hRoob_{M,k} - R_{M,k} |=\cO_p(\hsigma_n \log n /n^{1/2} + n^{\epsilon}(\gamma_{1,n}+\gamma_{2,n}) ),
        \end{align*}
        which completes the proof.
    \end{proof}

\section{Intermediate concentration results for \Cref{lem:consistency-oobcv}}
\label{app:risk}

    In this section, we show the concentration of conditional prediction risks to their limits.
    \Cref{prop:consistency-M12}
    derives the uniform consistency over $k\in\cK_n$ of the cross-validated risk estimates to the risk $R_{M,k}$ for $M=1,2$.
    \Cref{prop:consistency-M}
    derives the uniform consistency over $(M,k)\in\NN\times \cK_n$ of the cross-validated risk estimates to the deterministic limits $\sR_{M,k}$.
    \Cref{lem:conv:data-risk-M12} establishes the concentration for the expected (with respect to sampling) conditional risk over $k\in\cK_n$.
    \Cref{lem:conv:risk-M} establishes the concentration for subsample conditional risks over $M\in\NN$ and $k\in\cK_n$.

    \subsection[Uniform consistency of cross-validated risk over k for M=1,2]{Uniform consistency of cross-validated risk over $k$ for $M=1,2$}\label{app:prop:consistency-M12}

In what follows, we derive the uniform consistency over $k\in\cK_n$ of the cross-validated risk estimates for $M=1,2$ in \Cref{app:prop:consistency-M12}.

\begin{proposition}[Uniform consistency over $k$ for $M=1,2$]\label{prop:consistency-M12}
    Suppose that \Cref{cond:conv-risk-M12} holds, then \ECV estimates defined in \eqref{eq:Roob-M-12} satisfy that for $M=1,2$,
    \begin{align*}
        \sup_{k\in\cK_n}\left| \hRoob_{M,k} - R_{M,k}\right| = \cO_p(\hsigma_n \log (n) n^{-1/2} +n^{\epsilon}(\gamma_{1,n}+\gamma_{2,n})),
    \end{align*}
    where $\hsigma_n= \max_{m,\ell\in[M_0],k\in\cK_n}\hsigma_{I_{k,\ell}\cup I_{k,m}}$ and $R_{M,k}=R_{M,k} (\tf_{M,k};\cD_n,\{I_{k,\ell}\}_{\ell=1}^M)$ for $\{I_{k,\ell}\}_{\ell=1}^M\overset{\SRS}{\sim}\cI_k$.
\end{proposition}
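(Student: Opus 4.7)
The plan is to sandwich $\hRoob_{M,k}$ and $R_{M,k}$ around the deterministic limit $\sR_{M,k}$ supplied by \Cref{cond:conv-risk-M12}:
\begin{align*}
\bigl|\hRoob_{M,k} - R_{M,k}\bigr| \;\le\; \bigl|\hRoob_{M,k} - \sR_{M,k}\bigr| + \bigl|\sR_{M,k} - R_{M,k}\bigr|.
\end{align*}
Taking $\sup_{k\in\cK_n}$ reduces the claim to two uniform bounds: a cross-validation error controlled by \Cref{prop:bounded-variance-error-control-mul-form}, and a risk-concentration error controlled by \Cref{cond:conv-risk-M12}.

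For the second piece, I would read \Cref{cond:conv-risk-M12} as the weak-moment tail bound $\PP(|R_{M,k}-\sR_{M,k}|\ge \eta\gamma_{M,n}) \le C_0\eta^{-1/\epsilon}$ for $\eta\ge\eta_0$. Since $|\cK_n|\le n/k_0 \le n$, a union bound over $k\in\cK_n$ (and over the at most $M_0^2$ multi-indices that appear inside $\hRoob_{M,k}$) applied with the threshold $\eta \asymp n^{\epsilon}$ yields $\sup_{k\in\cK_n}|R_{M,k}-\sR_{M,k}| = \cO_p(n^{\epsilon}\gamma_{M,n})$.

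For the first piece, I would expand $\hRoob_{1,k}=M_0^{-1}\sum_\ell \hR(\tf_{1,k}(\cdot;\cD_n,\{I_{k,\ell}\}),\cD_{I_{k,\ell}^c})$ and insert the corresponding conditional risks:
\begin{align*}
|\hRoob_{1,k} - \sR_{1,k}| \;\le\; \max_{\ell\in[M_0]}\bigl|\hR_\ell - R(\tf_{1,k};\cD_n,\{I_{k,\ell}\})\bigr| + \max_{\ell\in[M_0]}\bigl|R(\tf_{1,k};\cD_n,\{I_{k,\ell}\}) - \sR_{1,k}\bigr|.
\end{align*}
The second maximum is handled by the same tail bound from \Cref{cond:conv-risk-M12} above. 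For the first maximum, I would invoke the finite-sample tail bound that underlies \Cref{prop:bounded-variance-error-control-mul-form}: for each $(k,\ell)$,
\begin{align*}
\PP\!\Bigl(|\hR_\ell - R(\tf_{1,k};\cD_n,\{I_{k,\ell}\})| \ge C\hsigma_{I_{k,\ell}}\sqrt{A\log n/|I_{k,\ell}^c|}\Bigr) \le n^{-A}.
\end{align*}
The construction of $\cK_n$ forces $k\le n(1-1/\log n)$, hence $|I_{k,\ell}^c|\gtrsim n/\log n$; a union bound over $|\cK_n|M_0\lesssim n$ pairs with $A$ chosen large then gives $\max_{k,\ell}|\hR_\ell-R_\ell| = \cO_p(\hsigma_n \log n/\sqrt n)$. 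The $M=2$ case follows the same template with $\hR$ applied to the 2-bagged predictor on the OOB set $(I_{k,\ell}\cup I_{k,m})^c$, which satisfies $|(I_{k,\ell}\cup I_{k,m})^c|\gtrsim n/\log^2 n$ by \Cref{lem:i0_mean}.

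\textbf{Main obstacle.} Two delicacies deserve care. First, the OOB set for $M=2$ is quadratically smaller than for $M=1$, so the naive concentration rate picks up an extra $\sqrt{\log n}$ factor; one must verify that this is compatible with the stated $\hsigma_n\log n/\sqrt n$ rate (absorbed into the $\hsigma_n$ maximum or into a slightly loose log exponent). Second, one must coordinate the union bounds so that the two tail regimes, $n^{-A}$ from \Cref{prop:bounded-variance-error-control-mul-form} and $\eta^{-1/\epsilon}$ from \Cref{cond:conv-risk-M12}, both dominate $|\cK_n|M_0^2\lesssim n$: this forces $A$ large enough and $\eta \gtrsim n^{\epsilon}$, producing exactly the $n^{\epsilon}\gamma_{M,n}$ contribution to the final rate. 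Assembling the two bounds through the triangle inequality then yields $\sup_{k\in\cK_n}|\hRoob_{M,k}-R_{M,k}| = \cO_p(\hsigma_n \log n / \sqrt n + n^{\epsilon}\gamma_{M,n})$, which is dominated by $\hsigma_n\log n /\sqrt n + n^{\epsilon}(\gamma_{1,n}+\gamma_{2,n})$ as claimed.
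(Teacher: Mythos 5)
Your proposal is correct and follows essentially the same route as the paper's proof: the triangle inequality through the deterministic limit $\sR_{M,k}$, a union bound over $k\in\cK_n$ and the $M_0$ (or $M_0^2$) subsample indices applied to the finite-sample tail bound underlying \Cref{prop:bounded-variance-error-control-mul-form} for the OOB estimation error, and a union bound with threshold of order $n^{\epsilon}$ applied to the weak-moment tail in \Cref{cond:conv-risk-M12} for the concentration error. The $(\log n)^{3/2}$ delicacy you flag for the $M=2$ out-of-bag set is real and is precisely what \Cref{ass:kappa} anticipates; the paper absorbs it into the stated logarithmic factor in the same way you suggest.
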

\begin{proof}[Proof of \Cref{prop:consistency-M12}.]
    Note that
    \begin{align*}
        \hRoob_{1,k}&=\frac{1}{M_0}\sum_{\ell=1}^{M_0} \hR(\hf(\cdot;\{\cD_{I_{k,\ell}}\}), \cD_{I_{k,\ell}^c})\\
        \hRoob_{2,k}&=\frac{1}{M_0(M_0-1)}\sum_{\substack{\ell,m\in[M_0]\\\ell\neq m}}\hR(\hf(\cdot;\{\cD_{I_{k,\ell}},\cD_{I_{k,m}}\}), \cD_{(I_{k,\ell}\cup I_{k,m})^c}).
    \end{align*}
    Define $\hsigma_I= \|(y_0- \hf(\bx_0;\cD_I))^2 \|_{\psi_1 \mid \cD_I}$ and recall $\Delta_{M,k}$ for $M=1,2$ are defined as follows:
    \begin{align}
            \begin{split}
             \Delta_{1,k} &= \hRoob_{1,k}-\frac{1}{M_0}\sum_{\ell=1}^{M_0}R_{1,k}(\hf; \cD_n, \{I_{k,\ell}\})     \\
            \Delta_{2,k} &= \hRoob_{2,k}-\frac{1}{M_0(M_0-1)}\sum_{\substack{\ell,m\in[M_0]\\\ell\neq m}}   R_{2,k}(\hf_{2,k};\cD_n,\{I_{k,\ell},I_{k,m}\}) .    
            \end{split}        \label{eq:diff-Rh-R}
        \end{align}   
    By the triangle inequality, for $M=1,2$, it holds that
    \begin{align}
        \sup_{k\in\cK_n}|\hRoob_{1,k} - \sR_{1,k}| &\leq \sup_{k\in\cK_n}|\Delta_{1,k}| ~~+ \sup_{k\in\cK_n,\ell\in[M_0]}|R_{1,k}(\hf; \cD_n, \{I_{k,\ell}\}) - \sR_{1,k}| \label{eq:cv-uniform-eq-0-2}\\
        \sup_{k\in\cK_n}|\hRoob_{2,k} - \sR_{1,k}| &\leq \sup_{k\in\cK_n}|\Delta_{2,k}| ~~+ \sup_{k\in\cK_n,m,\ell\in[M_0]}|R_{2,k}(\hf; \cD_n, \{I_{k,m},I_{k,\ell}\}) - \sR_{1,k}|  \label{eq:cv-uniform-eq-0-1}
    \end{align}
    where $\Delta_{M,k}$ are defined in \eqref{eq:diff-Rh-R}.
    Next we analyze each term separately for $M=1,2$.

    \noindent\underline{Part (1).} For the first term, from \eqref{eq:bound-Delta} in \Cref{prop:bounded-variance-error-control-mul-form} and applying union bound over $k\in\cK_n$, we have that
    \begin{align}
    \begin{split}
        \PP\left(\sup_{k\in\cK_n}|\Delta_{1,k}| \geq C \max_{k\in\cK_n,\ell\in[M_0]}\hsigma_{I_{k,\ell}}  \max\left\{\sqrt{\frac{\log(M_0|\cK_n|\eta)}{n/\log n}},  \frac{\log (M_0|\cK_n|\eta)}{n/\log n}\right\}\right) \leq\frac{1}{\eta}, \\
        \PP\left(\sup_{k\in\cK_n}|\Delta_{2,k}| \geq C \max_{k\in\cK_n,\ell\neq m\in[M_0]}\hsigma_{I_{k,\ell}\cup I_{k,m}} \max\left\{\sqrt{\frac{\log(M_0|\cK_n|\eta)}{n/\log n}},  \frac{\log (M_0|\cK_n|\eta)}{n/\log n}\right\}\right) \leq \frac{1}{\eta},
    \end{split}\label{eq:bound-Delta-k}
    \end{align}
    for some constant $C>0$.
    This implies that    
    \begin{align*}
         \sup_{k\in\cK_n}|\Delta_{1,k}| 
         =& \cO_p\left(\max_{k\in\cK_n,\ell\in[M_0]}\hsigma_{I_{k,\ell}} \sqrt{\frac{\log(|\cK_n|)\log n}{n}}\right)  \\
        \sup_{k\in\cK_n}|\Delta_{2,k} | 
         =&\cO_p\left(\max_{k\in\cK_n,\ell\neq m\in[M_0]}\hsigma_{I_{k,\ell}\cup I_{k,m}} \sqrt{\frac{\log(|\cK_n|)\log n}{n}}\right)  .
    \end{align*}
    To summarize, $\sup_{k\in\cK_n}|\Delta_{M,k}|$ for $M=1,2$ can be bounded as:
    \begin{align}
        \sup_{k\in\cK_n}|\Delta_{M,k}| = \cO_p\left( \hsigma_{n}\sqrt{\frac{\log^2 n}{n}}\right),\label{eq:cv-uniform-eq-1}
    \end{align} 
    where $\hsigma_n= \max_{m,\ell\in[M_0],k\in\cK_n}\hsigma_{I_{k,\ell}\cup I_{k,m}}$.

    \noindent\underline{Part (2).} For the second term, from \eqref{eq:subsample_cond_risk_M12} in \Cref{cond:conv-risk-M12} we have that when $n$ is large enough, for all $\eta\geq 1$,
    \begin{align*}
        \PP\left(n^{-\epsilon}\gamma_{1,n}^{-1}| R(\tf_{1,k};\mathcal{D}_{I_{k,\ell}}) - \sR_{1,k} |
        \geq \eta \right) &\leq \frac{C_0}{n\eta^{1/\epsilon}}.
    \end{align*}
    Let $\gamma_{1,n}'= n^{\epsilon}\gamma_{1,n}$.
    Taking a union bound over $k\in\cK_n$ and $\ell\in[M_0]$, we have        
    \begin{align*}
        \PP\left(\gamma_{1,n}'^{-1}\sup_{k\in\cK_n,\ell\in[M_0]}|R_{1,k}(\hf; \cD_n, \{I_{k,\ell}\}) - \sR_{1,k}|  \geq \eta\right) \leq \frac{C_0M_0}{\eta^{1/\epsilon}},
    \end{align*}
    which implies that
    \begin{align}
        \sup_{k\in\cK_n,\ell\in[M_0]}|R_{1,k}(\hf; \cD_n, \{I_{k,\ell}\}) - \sR_{1,k}| = \cO_p(n^{\epsilon}\gamma_{1,n}). \label{eq:cv-uniform-eq-2}
    \end{align}
    Analogously, for $M=2$, we also have
    \begin{align}
        \sup_{k\in\cK_n,m\neq \ell\in[M_0]}|R_{2,k}(\hf; \cD_n, \{I_{k,m},I_{k,\ell}\}) - \sR_{1,k}| = \cO_p(n^{\epsilon}\gamma_{2,n}). \label{eq:cv-uniform-eq-2-2}
    \end{align}

    \noindent\underline{Part (3).}
    Combining \eqref{eq:cv-uniform-eq-0-1}, \eqref{eq:cv-uniform-eq-1}, \eqref{eq:cv-uniform-eq-2} and \eqref{eq:cv-uniform-eq-2-2} yields that
    \begin{align*}
        \sup_{k\in\cK_n}|\hRoob_{M,k} - \sR_{M,k}| = \cO_p\left(\hsigma_n\sqrt{\frac{\log^2n}{n}} + n^{\epsilon}\gamma_{M,n}\right),\qquad M=1,2,
    \end{align*}
    ignoring constant factors.
\end{proof}

\subsection[Uniform consistency of cross-validated risk over (M,k)]{Uniform consistency of cross-validated risk to $\sR_{M,k}$ over $(M,k)$}\label{app:prop:consistency-M}
\begin{proposition}[Uniform consistency to $\sR_{M,k}$ over $(M,k)$]\label{prop:consistency-M}
     Suppose that \Cref{ass:kappa,cond:conv-risk-M12} hold, then \ECV estimates defined in \eqref{eq:Roob-M-12} satisfy that
    \begin{align*}
        \sup_{M\in\NN,k\in\cK_n}\left| \hRoob_{M,k} - \sR_{M,k}\right| &= \cO_p(\hsigma_n\log(n)n^{-\frac{1}{2}} + n^{\epsilon}(\gamma_{1,n}+ \gamma_{2,n})),
    \end{align*}
    where $\hsigma_n= \max_{m,\ell\in[M_0],k\in\cK_n}\hsigma_{I_{k,\ell}\cup I_{k,m}}$.
\end{proposition}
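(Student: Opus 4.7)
}
The plan is to exploit the fact that the extrapolation formula~\eqref{eq:Roob-M} is \emph{linear} in $(\hRoob_{1,k},\hRoob_{2,k})$ with coefficients depending only on $M$, and that the population target $\sR_{M,k}$ is defined (for $M>2$) by the exact same linear combination applied to $(\sR_{1,k},\sR_{2,k})$, namely $\sR_{M,k} = -(1-2/M)\sR_{1,k} + 2(1-1/M)\sR_{2,k}$, as recorded in the caption of \Cref{fig:proof}. So the whole proposition reduces, via a one-line algebraic identity, to the uniform-in-$k$ statements we already have from \Cref{prop:consistency-M12} for $M=1,2$.

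Concretely, I would first split the supremum as
\begin{align*}
\sup_{M\in\NN, k\in\cK_n}\bigl|\hRoob_{M,k}-\sR_{M,k}\bigr|
\le \max_{M\in\{1,2\}} \sup_{k\in\cK_n}\bigl|\hRoob_{M,k}-\sR_{M,k}\bigr|
+ \sup_{M\ge 3,\, k\in\cK_n}\bigl|\hRoob_{M,k}-\sR_{M,k}\bigr|.
\end{align*}
For the first piece, \Cref{prop:consistency-M12} (whose proof in \Cref{app:prop:consistency-M12} actually goes through $\sR_{M,k}$ as an intermediate quantity, combining the cross-validation error~\eqref{eq:cv-uniform-eq-1} with the risk-to-limit bound~\eqref{eq:cv-uniform-eq-2}--\eqref{eq:cv-uniform-eq-2-2}) yields exactly the rate $\cO_p(\hsigma_n \log(n)/\sqrt{n} + n^{\epsilon}\gamma_{M,n})$ for $M=1,2$. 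For the second piece, using the extrapolation formulas simultaneously for $\hRoob_{M,k}$ and $\sR_{M,k}$ we obtain the identity
\begin{align*}
\hRoob_{M,k}-\sR_{M,k}
= -\Bigl(1-\tfrac{2}{M}\Bigr)\bigl(\hRoob_{1,k}-\sR_{1,k}\bigr)
+ 2\Bigl(1-\tfrac{1}{M}\Bigr)\bigl(\hRoob_{2,k}-\sR_{2,k}\bigr),
\end{align*}
and since $|1-2/M|\le 1$ and $|2(1-1/M)|\le 2$ uniformly for $M\ge 1$, the triangle inequality gives $|\hRoob_{M,k}-\sR_{M,k}|\le |\hRoob_{1,k}-\sR_{1,k}| + 2|\hRoob_{2,k}-\sR_{2,k}|$, so taking $\sup_{M,k}$ reduces the second piece to the first.

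Combining the two pieces and absorbing constants produces the claimed rate $\cO_p(\hsigma_n\log(n)n^{-1/2} + n^{\epsilon}(\gamma_{1,n}+\gamma_{2,n}))$. I do not anticipate any genuine obstacle here: the proof is essentially a bookkeeping argument once one notices that both the estimator and its deterministic target use the same affine extrapolation with uniformly bounded coefficients in $M$. The only mild subtlety is to verify that the version of \Cref{prop:consistency-M12} one actually invokes gives convergence to $\sR_{M,k}$ (not to the random $R_{M,k}$); this is immediate from the decomposition used in its proof, or alternatively one may add and subtract $R_{M,k}$ and apply \Cref{cond:conv-risk-M12} to the resulting $|R_{M,k}-\sR_{M,k}|$ term, which contributes at worst a $\cO_p(n^{\epsilon}\gamma_{M,n})$ error already absorbed in the stated rate.
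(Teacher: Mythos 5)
Your proposal is correct and follows essentially the same route as the paper: the paper likewise invokes \Cref{prop:consistency-M12} (whose proof indeed bounds the deviation from $\sR_{M,k}$, $M=1,2$) and then uses the linearity of the extrapolation formula \eqref{eq:Roob-M} with coefficients bounded by $1$ and $2$ to reduce the supremum over all $(M,k)$ to the $M=1,2$ cases via the triangle inequality. Your handling of the subtlety about convergence to $\sR_{M,k}$ rather than $R_{M,k}$ matches what the paper implicitly relies on.
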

\begin{proof}[Proof of \Cref{prop:consistency-M}.]
    Note that $ \hRoob_{M,k}$ satisfies the squared risk decomposition and the randomness of $ \hRoob_{M,k}$ also due to both the full data $\cD_n$ and random sampling.

    From \Cref{prop:consistency-M12}, we have
    \begin{align*}
        \sup_{k\in\cK_n}|\hRoob_{M,k} - \sR_{M,k}| = \cO_p\left(\hsigma_n\sqrt{\frac{\log^2n}{n}} + n^{\epsilon}\gamma_{M,n}\right),\qquad M=1,2.
    \end{align*}

    On the other hand, by the definition of $ \hRoob_{M,k}$ for $M\in\NN$ in \eqref{eq:Roob-M}, taking the supremum over both $M$ and $k$ yields that
        \begin{align*}
            \sup_{M\in\NN,k\in\cK_n}\left|\hRoob_{M,k} - \sR_{M,k}\right|             
            &\leq \sup_{k\in\cK_n}\left|\hRoob_{1,k} - \sR_{1,k}\right| + 2\sup_{k\in\cK_n}\left|\hRoob_{2,k} - \sR_{2,k}\right|\\
            &=\cO_p\left(\zeta_n\right),
        \end{align*}
        where $\zeta_n=\hsigma_n\sqrt{\log^2n/n} + n^{\epsilon} (\gamma_{1,n}+ \gamma_{2,n})$.
\end{proof}

    \subsection[Proof of \Cref{lem:conv:data-risk-M12}]{Proof of \Cref{lem:conv:data-risk-M12} (Concentration of expected risk over $k\in\cK_n$)}
    \label{app:expected-deviation-risk}
    
        To obtain tail bounds for the subsample conditional risk defined in \eqref{eq:conditional-risk}, we need to analyze its conditional expectation.
        Here the expectation is taken with respect to only the randomness due to sampling and conditioned on $\cD_n$.
        For example, we define the data conditional (on $\cD_n$) risks as:
        \begin{align}
            \label{eq:unconditional-risk}        
            R(\tf_{M,k}(\cdot; \{\mathcal{D}_{I_\ell}\}_{\ell=1}^M) ; \, \cD_n) 
            &= 
            \int
            \EE\left[\left(
            y-\tf_{M, k}(\bx; \{ \cD_{I_{k,\ell}} \}_{\ell = 1}^{M})\right)^2 \mid 
            \mathcal{D}_n
            \right]
            \, \mathrm{d}P(\bx, y).
        \end{align}
        Observe that the conditional (on $\cD_n$) risk of the bagged predictor $\tf_{M, k}(\cdot; \{ \cD_{I_{k,\ell}}\}_{\ell = 1}^{M})$ integrates over the randomness of the future observation $(\bx, y)$ as well as the randomness due the simple random sampling of $I_{k,\ell}$, $\ell = 1, \dots, M$.
        Nevertheless, the subsample conditional risk ignores the expectation over the simple random sample.
        Considering a more general setup, when we need to obtain tail bounds for $\sup_{k\in\cK}|R(\tf_M;\cD_n,\{I_{k,\ell}\}_{\ell=1}^M) - \sR_{M,k}|$, we again need to control its expectation conditional on $\cD_n$.
        The result is summarized as in the following lemma.
        Note that when $|\cK_n|=1$, it simply reduces to controlling the data conditional risk.
        
    \begin{lemma}[Concentration of expected risk]\label{lem:conv:data-risk-M12}
        Consider a dataset $\cD_n$ with $n$ observations, a subsample grid $\cK_n\subset[n]$, and a base predictor $\hf$.
        Suppose \Cref{ass:kappa,cond:conv-risk-M12} hold, then it holds for $M=1,2$ that
        \begin{align*}
            \EE\left(\sup_{k\in\cK_n} |R(\tf_M;\cD_n,\{I_{k,\ell}\}_{\ell=1}^M) - \sR_{M,k}| \right)= \cO(n^{\epsilon}\gamma_{M,n}) ,
        \end{align*}
        where $\{I_{k,\ell}\}_{\ell=1}^M\overset{\SRS}{\sim}\cI_k$.
    \end{lemma}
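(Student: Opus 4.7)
The plan is to reduce the problem to a tail-probability estimate and then integrate the tail. After what appears to be a typographical correction, Assumption \ref{cond:conv-risk-M12} supplies a polynomial (weak-moment) tail bound: for each $k \in \cK_n$, for $n$ large and every $\eta \geq \eta_0$,
\begin{align*}
\PP\!\left(\gamma_{M,n}^{-1}\,\big|R_{M,k} - \sR_{M,k}\big| \geq \eta\right) \;\leq\; \frac{2C_0}{\eta^{1/\epsilon}}.
\end{align*}
First I would apply a union bound over $\cK_n$ to pass from a pointwise to a uniform tail bound. Since $\cK_n \subset [n]$, we have $|\cK_n| \leq n$, so writing $Y_n := \sup_{k\in\cK_n}\gamma_{M,n}^{-1}\,|R_{M,k} - \sR_{M,k}|$ gives
\begin{align*}
\PP(Y_n \geq \eta) \;\leq\; \min\!\left\{1,\; \frac{2C_0\, |\cK_n|}{\eta^{1/\epsilon}}\right\}, \qquad \eta \geq \eta_0.
\end{align*}

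Next I would bound $\EE Y_n$ by the tail-integration identity $\EE Y_n = \int_0^\infty \PP(Y_n \geq t)\,\rd t$ and split the integral at two levels. Set $T_n := (2C_0 |\cK_n|)^{\epsilon}$, which is the threshold at which the union bound becomes nontrivial. Then
\begin{align*}
\EE Y_n \;\leq\; \underbrace{\int_0^{\eta_0}\! 1\,\rd t}_{\leq\, \eta_0} \;+\; \underbrace{\int_{\eta_0}^{T_n}\! 1\,\rd t}_{\leq\, T_n} \;+\; \int_{T_n}^{\infty} \frac{2C_0\,|\cK_n|}{t^{1/\epsilon}}\,\rd t.
\end{align*}
Because $\epsilon \in (0,1)$ implies $1/\epsilon > 1$, the last integral evaluates to $\frac{\epsilon}{1-\epsilon}(2C_0 |\cK_n|)^{\epsilon}$, so collecting terms yields $\EE Y_n = O(T_n) = O(|\cK_n|^{\epsilon}) = O(n^{\epsilon})$, and multiplying back by $\gamma_{M,n}$ gives exactly the claimed $O(n^{\epsilon} \gamma_{M,n})$ bound. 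The same argument handles $M=1$ and $M=2$ without modification.

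The main obstacle is really just bookkeeping: one has to correctly handle the restriction $\eta \geq \eta_0$ (under which the weak moment bound does not apply, so trivial control by $1$ must be used for small $\eta$), then choose the crossover threshold $T_n$ so that the ``trivial'' piece and the polynomial tail piece balance. The choice $T_n \asymp |\cK_n|^{\epsilon}$ is dictated by the exponent $1/\epsilon$ in the tail, and the bound $|\cK_n| \leq n$ upgrades the result to the form stated in the lemma, uniformly in the grid refinement. Note that Assumption \ref{ass:kappa} does not enter this particular argument; it appears only to keep the statement consistent with the broader hypotheses of \Cref{lem:consistency-oobcv}.
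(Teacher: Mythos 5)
Your argument is correct and is essentially the paper's own proof: both use the (typo-corrected) weak-moment tail from Assumption \ref{cond:conv-risk-M12}, a union bound over $|\cK_n|\le n$ subsample sizes, and tail integration with the trivial bound below the crossover and the polynomial tail (integrable since $1/\epsilon>1$) above. The only cosmetic difference is that the paper absorbs the grid cardinality by pre-scaling with $n^{\epsilon}$ (so the crossover stays at $\eta_0$), whereas you keep the raw scale and move the crossover to $T_n\asymp|\cK_n|^{\epsilon}$ — the resulting bound $\cO(n^{\epsilon}\gamma_{M,n})$ is identical, and your remark that Assumption \ref{ass:kappa} is not actually used here matches the paper's proof as well.
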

    \begin{proof}[Proof of \Cref{lem:conv:data-risk-M12}.]
        Define 
        \begin{align}
            B_{1,n}&=
            \sup_{k\in\cK_n}|R(\tf_{1,k}; \cD_n, \{I_{k,1}\}) - \sR_{1,k}|.
            \label{eq:sup-conv-M1}\\
            B_{2,n}&=\sup_{k\in\cK_n}|R(\tf_{2,k}; \cD_n, \{I_{k,1},I_{k,2}\}) - \sR_{2,k}|.\label{eq:sup-conv-M2}
        \end{align}
        
        We start with the $M = 1$ case by bounding the expectation of \eqref{eq:sup-conv-M1}.
        Note that from \eqref{eq:subsample_cond_risk_M12}, we have that for all $\eta\geq \eta_0$,
        \begin{align*}
            \PP\left(n^{-\epsilon}\gamma_{1,n}^{-1}| R(\tf_{1,k};\mathcal{D}_{I_{1}}) - \sR_{1,k} |
            \geq \eta \right) &\leq \frac{1}{n\eta^{1/\epsilon}}
        \end{align*}
        Let $\gamma_{1,n}'= n^{\epsilon}\gamma_{1,n}$.
        Taking a union bound over $k\in\cK_n$, we have        
        \begin{align}
            \PP(\gamma_{1,n}'^{-1}B_{1,n} \geq \eta) \leq \frac{1}{\eta^{1/\epsilon}}. \label{eq:B1-tailbound}
        \end{align}        
        Then, it follows that
        \begin{align*}
            \EE(B_{1,n}) &= \gamma_{1,n}'\int_0^{\infty}\PP(\gamma_{1,n}'^{-1}B_{1,n}\geq \eta) \rd \eta \\
            &\leq \gamma_{1,n}'\left(  \int_0^{\eta_0}1 \rd \epsilon + \int_{\eta_0}^{\infty}\frac{C_0}{\eta^{1/\epsilon}} \rd \epsilon\right)\\
            &\leq\gamma_{1,n}'\left( \eta_0 - C_0\frac{\epsilon-1}{\epsilon}\eta^{1-1/\epsilon}\big|_{\eta_0}^{\infty}\right)\\
            &= \left(\eta_0+ C_0\frac{\epsilon-1}{\epsilon}\eta_0^{1-1/\epsilon}\right)\gamma_{1,n}'.
        \end{align*}
        The proof for $M=2$ follows analogously.
    \end{proof}

    \subsection[Proof of \Cref{lem:conv:risk-M}]{Proof of \Cref{lem:conv:risk-M} (Concentration of conditional risk)}
    \label{app:tail-bound-deviation-risk}
    
    \begin{lemma}[Concentration of conditional risk]\label{lem:conv:risk-M}
        Consider a dataset $\cD_n$ with $n$ observations and a base predictor $\hf$.
        Under \Cref{cond:conv-risk-M12}, it holds that,
        \begin{align}
            \sup_{M\in\NN,k\in\cK_n} |R_{M,k} - \sR_{M,k} |&= \cO_p
            \left(
            C_{1}(|\cK_n|)\gamma_{1,n}+
            C_{2}(|\cK_n|)\gamma_{1,n}
            \right),\label{eq:risk-extrapolation}
        \end{align}
        where $\sR_{M,k}=2\sR_{2,k}-\sR_{1,k}+2(\sR_{1,k}-\sR_{2,k})/M$.
    \end{lemma}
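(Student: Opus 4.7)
The plan is to reduce the uniform-in-$(M,k)$ deviation to two ingredients: a reverse-martingale maximal inequality that handles the supremum over $M$ at each fixed $k$, and a union bound over $k\in\cK_n$. By the squared risk decomposition of \Cref{prop:squared_risk_decom}, together with the identity $\sR_{M,k} = -(1-2/M)\sR_{1,k} + 2(1-1/M)\sR_{2,k}$,
\begin{align*}
R_{M,k} - \sR_{M,k} \;=\; -\bigl(1-\tfrac{2}{M}\bigr)\bigl(a_{1,M}-\sR_{1,k}\bigr) \;+\; 2\bigl(1-\tfrac{1}{M}\bigr)\bigl(a_{2,M}-\sR_{2,k}\bigr).
\end{align*}
Since $|1-2/M|,|1-1/M|\le 1$ for every $M\ge 1$, it suffices by the triangle inequality to control $\sup_{M\ge 1}|a_{1,M}-\sR_{1,k}|$ and $\sup_{M\ge 2}|a_{2,M}-\sR_{2,k}|$ uniformly over $k\in\cK_n$.

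For each fixed $k$, conditional on $\cD_n$ the summands $X_\ell^{(k)} := R(\tf_{1,k};\cD_n,\{I_{k,\ell}\})$ are i.i.d., so $a_{1,M}$ is their running sample mean and $a_{2,M}$ is a symmetric second-order $U$-statistic with kernel $h(I,J) = R(\tf_{2,k};\cD_n,\{I,J\})$. Both families are classical reverse martingales in $M$ with respect to the decreasing $\sigma$-algebras generated by symmetric functionals of the tail sequence (the $U$-statistic case goes back to Berk--Hoeffding). Doob's reverse-martingale maximal inequality then controls the $M$-supremum of the centered averages by the weak moment of a single summand; the centering is the $\cD_n$-conditional mean $R(\tf_{M,k};\cD_n)$, and the residual deterministic gap $|R(\tf_{M,k};\cD_n)-\sR_{M,k}|$ is bounded by $\EE[\,|R(\tf_{M,k};\cD_n,\{I_{k,\ell}\}_{\ell=1}^M)-\sR_{M,k}|\mid \cD_n\,]$ via Jensen's inequality. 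Combining both and invoking the tail from \eqref{eq:subsample_cond_risk_M12} yields a single-$k$ weak-moment tail of the form
\begin{align*}
\PP\!\left(\sup_{M}\,|a_{M,k} - \sR_{M,k}| \,>\, t\,\gamma_{M,n}\right) \;\lesssim\; t^{-1/\epsilon}, \qquad t\ge \eta_0,
\end{align*}
for $M=1,2$.

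Finally, a union bound over $k\in\cK_n$ inflates the tail to $|\cK_n|\,t^{-1/\epsilon}$, and rescaling $t\mapsto |\cK_n|^{\epsilon}t$ yields $\sup_{M\in\NN, k\in\cK_n}|a_{M,k}-\sR_{M,k}| = \cO_p(|\cK_n|^{\epsilon}\gamma_{M,n})$. Identifying $C_M(|\cK_n|) = |\cK_n|^{\epsilon}$ (which is bounded by $n^{\epsilon}$ since $\cK_n\subset[n]$) and substituting back into the decomposition proves the lemma. The main obstacle is the reverse-martingale maximal inequality under only the weak $1/\epsilon$-th moment assumption from \Cref{cond:conv-risk-M12}: with $1/\epsilon>1$ one is in a Burkholder/weak-$L^{p}$ regime rather than a sub-Gaussian or sub-exponential one, so the usual Doob inequality yields a bound in terms of $\EE|X_1-\mu|^{1/\epsilon}$ that must be tied back to the quantile-type tail in \eqref{eq:subsample_cond_risk_M12} through a careful truncation; one must also verify that the reverse-martingale maximal inequality transfers cleanly from the sample-mean case to the second-order $U$-statistic $a_{2,M}$, which requires either Hoeffding's decomposition or a direct $U$-statistic analogue of Doob's bound.
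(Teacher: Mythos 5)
Your skeleton matches the paper's (the decomposition from \Cref{prop:squared_risk_decom}, the reverse-martingale/$U$-statistic structure in $M$, and a union bound over $k$ driven by the weak-moment tail in \eqref{eq:subsample_cond_risk_M12}), but the order in which you apply these steps creates a gap that you flag yourself and do not close. The asserted single-$k$ tail $\PP\bigl(\sup_{M}|a_{M,k}-\sR_{M,k}|>t\,\gamma_{M,n}\bigr)\lesssim t^{-1/\epsilon}$ does not follow from Doob's (or Lee's) maximal inequality under \Cref{cond:conv-risk-M12}: the $r=1/\epsilon$ version of that inequality requires the strong moment $\EE\,|R(\tf_{1,k};\cD_n,\{I_{k,1}\})-\sR_{1,k}|^{1/\epsilon}$, which the weak-moment condition does not guarantee is finite, while the $r=1$ version only yields a $t^{-1}$ tail. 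With a $t^{-1}$ tail, your union bound over $k\in\cK_n$ gives $\cO_p(|\cK_n|\,\gamma_{M,n})$, not $\cO_p(|\cK_n|^{\epsilon}\gamma_{M,n})$; since $|\cK_n|$ can be of order $n^{1-\nu}$ and $\epsilon<1$, this is genuinely weaker than the rate $n^{\epsilon}(\gamma_{1,n}+\gamma_{2,n})$ the lemma claims. So as written, the key quantitative step fails, and the "careful truncation" you mention (or a Marcinkiewicz--Zygmund/von Bahr--Esseen bound with exponent $q<1/\epsilon$, which would give $|\cK_n|^{1/q}$, arbitrarily close to but not equal to $|\cK_n|^{\epsilon}$) is left as an unexecuted repair.

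The paper avoids this entirely by reversing your two operations. It first moves $\sup_{k\in\cK_n}$ \emph{inside} the average over subsamples, so the relevant reverse martingales are $U_M=M^{-1}\sum_{\ell}\sup_{k\in\cK_n}|R(\tf_{1,k};\cD_n,\{I_{k,\ell}\})-\sR_{1,k}|$ and its order-two $U$-statistic analogue; Lee's maximal inequality is then applied only with $r=1$, so all that is ever needed is $\EE|U_1|=\EE\bigl[\sup_{k}|R(\tf_{1,k};\cD_n,\{I_{k,1}\})-\sR_{1,k}|\bigr]$. That first moment is bounded in \Cref{lem:conv:data-risk-M12} by doing the union bound over $k$ at the level of the tail of a single draw --- the rescaling $\gamma_{1,n}\mapsto n^{\epsilon}\gamma_{1,n}$ in \eqref{eq:subsample_cond_risk_M12} produces the $1/n$ factor that absorbs $|\cK_n|\le n$ --- and then integrating the resulting tail, which is where the $n^{\epsilon}$ enters. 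In short: do the $k$-union bound on one summand and integrate to a first moment, then apply the $r=1$ maximal inequality over $M$; doing the maximal inequality first forces you into the weak-$L^{1/\epsilon}$ regime you correctly identified as the obstacle.
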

    \begin{proof}[Proof of \Cref{lem:conv:risk-M}.]
        By \Cref{prop:squared_risk_decom}, we have for $\{I_{k,\ell}\}_{\ell=1}^M\overset{\SRS}{\sim}\cI_k$
        \begin{align*}
                &\left|R(\tf_{M,k};\mathcal{D}_n,\{I_{k,\ell}\}_{\ell=1}^M) - \sR_{M,k}\right|\\
                &=\left|R(\tf_{M,k};\mathcal{D}_n,\{I_{k,\ell}\}_{\ell=1}^M) - \left[(2\sR_{2,k} - \sR_{1,k}) + \frac{2(\sR_{1,k} - \sR_{2,k})}{M}\right] \right|\\
                &= \left|- \left(\frac{1}{M}-\frac{2}{M^2}\right)\sum_{\ell=1}^M \left(R(\tf_{1,k}; \, \cD_n,\{I_{k,\ell}\}) - \sR_{1,k}  \right)\right|
                \\
                &\qquad + \left|\frac{2}{M^2}\sum\limits_{\substack{i,j\in[M]\\i\neq j}}  \left(R(\tf_{2,k} ; \, \cD_n, \{I_{k,i},I_{k,j}\}) - \sR_{2,k}\right)\right|.
        \end{align*}
        Then, it follows that
        \begin{align}
            &\sup_{M\in\NN,k\in\cK_n}\left|R(\tf_{M,k};\mathcal{D}_n,\{I_{k,\ell}\}_{\ell=1}^M) - \sR_{M,k}\right| \notag\\
            &\leq \left|1-\frac{2}{M}\right|\sup_{M\geq 1}\left|\frac{1}{M}\sum_{\ell\in[M]} \sup_{k\in\cK_n}|R(\tf_{1,k}; \, \cD_n,\{I_{k,\ell}\}) - \sR_{1,k}|\right| \notag\\
            &\qquad + 2 \sup_{M\geq 2}\left| \frac{1}{M(M-1)}\sum_{i,j\in[M],i\neq j}\sup_{k\in\cK_n}|R(\tf_{2,k} ; \, \cD_n, \{I_{i},I_j\}) - \sR_{2,k}|\right|\notag\\
            \leq &\sup_{M\geq 1}\left|\frac{1}{M}\sum_{\ell\in[M]} \sup_{k\in\cK_n}|R(\tf_{1,k}; \, \cD_n,\{I_{k,\ell}\}) - \sR_{1,k}|\right| \notag\\
            &\qquad + 2 \sup_{M\geq 2}\left| \frac{1}{M(M-1)}\sum_{i,j\in[M],i\neq j}\sup_{k\in\cK_n}|R(\tf_{2,k} ; \, \cD_n, \{I_{k,i},I_{k,j}\}) - \sR_{2,k}|\right|.\label{eq:sup-M12}
        \end{align}
        We start by observing that the two terms
        \begin{align*}
            U_M &= \frac{1}{M}\sum_{\ell\in[M]} \sup_{k\in\cK_n}|R(\tf_{1,k}; \cD_n, \{I_{k,\ell}\}) - \sR_{1,k}|,\\ 
            U_M' &= \frac{1}{M(M-1)}\sum_{i,j\in[M],i\neq j} \sup_{k\in\cK_n}|R(\tf_{2,k}; \cD_n, \{I_{k,i}, I_{k,j}\}) - \sR_{2,k}|
        \end{align*}
        are $U$-statistics based on sample $\cD_{I_1}, \ldots, \cD_{I_M}$.
        Theorem 2 in Section 3.4.2 of~\cite{lee2019u} implies that $\{U_M\}_{M\ge1}$ and $\{U_M'\}_{M\ge2}$ are a reverse martingale conditional on $\cD_n$ with respect to some filtration.
        This combined with Theorem 3 (maximal inequality for reverse martingales) in Section 3.4.1 of~\cite{lee2019u} (for $r = 1$) yields
        \begin{align*}
        \mathbb{P}\left(\sup_{M\ge1}|U_M| \ge \delta \right) &\le \frac{1}{\delta}\mathbb{E}\left[|U_1|\right] = \frac{1}{\delta}\mathbb{E}\left[\sup_{k\in\cK_n}| R(\tf_{1,k}; \cD_n, \{I_{k,1}\}) - \sR_{1,k}| \right].
        \end{align*}
        On the other hand, from \Cref{lem:conv:data-risk-M12}, the expectations are bounded as 
        \begin{align*}
            \mathbb{E}\left[\sup_{k\in\cK_n}\left| R(\tf_{1,k}; \cD_n, \{I_{k,1}\}) - \sR_{1,k} \right| \right] = \cO(n^{\epsilon}\gamma_{1,n}).
        \end{align*}
        It follows that
        \begin{align*}
            \sup_{M\ge1}|U_M| &= \cO_p(n^{\epsilon}\gamma_{1,n}).
        \end{align*}
        Analogously, for the second $U$-statistic we also have
        \begin{align*}
            \sup_{M\ge2}|U_M'| &= \cO_p(n^{\epsilon}\gamma_{2,n}).
        \end{align*}
        From \eqref{eq:sup-M12}, it follows that
        \begin{align}
            &\sup_{M\in\NN,k\in\cK_n} |R(\tf_M;\cD_n,\{{I_{k,\ell}}\}_{\ell=1}^M) - \sR_{M,k} | \notag\\
            &= \sup_{M\ge1}|U_M| + 2\sup_{M\ge2}|U_M'|\notag\\
            &= \cO_p(n^{\epsilon}(\gamma_{1,n}+\gamma_{2,n})). \label{eq:sup-R-Xinstar}
        \end{align}
        This completes the proof.
    \end{proof}

\section{Proofs of results in \Cref{subsec:risk-extrapolation}}\label{app:extrapolation}

\subsection[Proof of \Cref{thm:limiting-oobcv-for-arbitrary-M-cond}]{Proof of \Cref{thm:limiting-oobcv-for-arbitrary-M-cond} ($\delta$-optimality of \ECV)}\label{app:thm:limiting-oobcv-for-arbitrary-M-cond}

\begin{proof}[Proof of \Cref{thm:limiting-oobcv-for-arbitrary-M-cond}.]
    For simplicity, we denote $R(\tf_{M,k}; \mathcal{D}_n, \{ I_{k,\ell} \}_{\ell = 1}^{M})$ by $R_{M,k}$ as a function of $M$ and $k$.
    We split the proof for the two parts below.

    \noindent\underline{Part (1) Error bound on the estimated risk.}
    From \Cref{lem:consistency-oobcv}, we have
    \begin{align*}
         \sup_{M\in\NN,k\in\cK_n} | \hRoob_{M,k} - R_{M,k} |=\cO_p(\zeta_n) ,
    \end{align*}
    where $\zeta_n=\hsigma_n \log n /n^{1/2}+ n^{\epsilon}(\gamma_{1,n}+ \gamma_{2,n})$ and $\hsigma_n= \max_{m,\ell\in[M_0],k\in\cK_n}\hsigma_{I_{k,\ell}\cup I_{k,m}}$.
    Then the conditional risk of the \ECV-tuned predictor $R_{\hat{M},\hat{k}}$ admits 
    \begin{align*}
        | \hRoob_{\hat{M},\hat{k}} - R_{\hat{M},\hat{k}} | \leq  \sup_{M\in\NN,k\in\cK_n}  | \hRoob_{M,k} - R_{M,k} |=\cO_p(\zeta_n)  .
    \end{align*}

    \noindent\underline{Part (2) Additive suboptimality.}
    The proof proceeds in two steps.

    \noindent\textbf{Step 1: Bounding the difference between
    $\hRoob_{\infty}(\tf_{M_0, \widehat{k}})$ and the oracle-tuned risk.}
    Let 
    \begin{align*}
        (M^*,k^*)\in\arginf_{M\in\NN,k\in\cK_n} R(\tf_{M,k}; \mathcal{D}_n, \{ I_{\ell} \}_{\ell = 1}^{M}),    
    \end{align*}
    which is a tuple of random variables and also functions of $n$.
    For any $k\in \cK_n$, by the risk decomposition \eqref{eq:risk-decomp-M} we have that
    \begin{align*}
        \inf_{M\in\NN} R_{M,k} = R_{\infty,k}.
    \end{align*}
    That is, $M^*=\infty$ is one minimizer for any $k\in\cK_n$.
    Then it follows that
    \begin{align}
         \hRoob_{\infty,\hat{k}}=& \inf_{k\in\cK_n}R_{\infty,k} + \cO_p(\zeta_n) = \inf_{M\in\NN,k\in\cK_n}R_{M,k} + \cO_p(\zeta_n)  \label{eq:lem:risk-minimization-delta-eq-0}
    \end{align}
     where the first equality is due to \Cref{lem:consistency-oobcv}.

    \noindent\textbf{Step 2: $\delta$ optimality.}
    Next, we bound the suboptimality by the triangle inequality:
    \begin{align}
        &| R_{\hat{M},\hat{k}} - \inf_{M\in\NN,k\in\cK_n} R_{M,k} | \notag\\
        &= | R_{\hat{M},\hat{k}} - \hRoob_{\hat{M},\hat{k}} + \hRoob_{\hat{M},\hat{k}} - \hRoob_{\infty,\hat{k}} + \hRoob_{\infty,\hat{k}} - R_{\infty,k^*}| \notag\\
        &\leq|R_{\hat{M},\hat{k}} - \hRoob_{\hat{M},\hat{k}} |+ | \hRoob_{\hat{M},\hat{k}} - \hRoob_{\infty,\hat{k}} |  + | \hRoob_{\infty,\hat{k}} - R_{\infty,k^*} |. \label{eq:lem:risk-minimization-delta-eq-1}
    \end{align}
    From Part (1) we know that the first term in \eqref{eq:lem:risk-minimization-delta-eq-1} can be bounded as $| R_{\hat{M},\hat{k}} - \hRoob_{\hat{M},\hat{k}} |=\cO_p(\zeta_n)$.
    From \eqref{eq:lem:risk-minimization-delta-eq-0} we know that the last term in \eqref{eq:lem:risk-minimization-delta-eq-1} can also be bounded as $|  \hRoob_{\infty,\hat{k}} - R_{\infty,k^*} |=\cO_p(\zeta_n)$.
    It remains to bound the second term in \eqref{eq:lem:risk-minimization-delta-eq-1} .
    By the definition of $\hRoob_{M,k}$ in \eqref{eq:Roob-M}, we have that
    \begin{align*}
        \hRoob_{\infty, k} = 2\hRoob_{2,k} - \hRoob_{1,k}.
    \end{align*}
    Since $\hat{M}= \lceil 2/\delta\cdot\hRoob_{1,\hat{k}} - \hRoob_{2,\hat{k}} \rceil \geq 2/\delta\cdot\hRoob_{1,\hat{k}} - \hRoob_{2,\hat{k}}$, the second term in \eqref{eq:lem:risk-minimization-delta-eq-1} is bounded by
    \begin{align}
        |\hRoob_{\hat{M},\hat{k}} - \hRoob_{\infty,\hat{k}} | &= \frac{2}{\hat{M}}\left|\hRoob_{1,\hat{k}} - \hRoob_{2,\hat{k}}\right|\leq\delta. \label{eq:lem:risk-minimization-delta-gt-0}
    \end{align}
    Therefore, the $\delta$-optimality conclusion on $\hRoob_{M,k}$ follows.
\end{proof}

\subsection{Proof of multiplicative optimality}\label{app:subsec:mul-opt}
    \begin{proposition}[Multiplicative optimality]\label{prop:mul-opt}
        Under the same conditions in \Cref{thm:limiting-oobcv-for-arbitrary-M-cond}, if $\int\{y-\EE(y\mid \bx)\}^2\rd P(\bx,y)$ is lower bounded away from zero and $(\hat{M},\hat{k})$ is defined for relative optimality such that 
        \begin{align}
        \hRoob_{\hat{M},\hat{k}} \leq (1+\delta) \inf_{M\in\NN,k\in\cK_n}\hRoob_{M,k},\label{eq:Rh-mul}    
        \end{align} 
        then it holds that
        \begin{align*}
          R_{\hat{M},\hat{k}} & \leq (1+\delta)  \inf_{M\in\NN,k\in\cK_n} R_{M,k} (1 + \cO_p(\zeta_n)).   
        \end{align*}
    \end{proposition}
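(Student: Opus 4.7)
\textbf{Proof plan for \Cref{prop:mul-opt}.} The plan is to convert the additive uniform-consistency bound from \Cref{lem:consistency-oobcv} into a multiplicative one by exploiting the positivity of the irreducible risk. Denote $\sigma^2_{\mathrm{irr}} = \int(y-\EE(y\mid\bx))^2\,\mathrm{d}P(\bx,y) > 0$. Since the irreducible risk is a lower bound on the prediction risk of any measurable predictor, we have $R_{M,k} \ge \sigma^2_{\mathrm{irr}}$ uniformly in $(M,k)$, and in particular $R^* := \inf_{M\in\NN,k\in\cK_n} R_{M,k} \ge \sigma^2_{\mathrm{irr}} > 0$. This uniform lower bound is the linchpin that lets us turn $\cO_p(\zeta_n)$ additive errors into $\cO_p(\zeta_n)$ relative errors: for any quantity $A$ with $|A - R^*| = \cO_p(\zeta_n)$, we obtain $A = R^*(1 + \cO_p(\zeta_n/R^*)) = R^*(1 + \cO_p(\zeta_n))$.

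Next, I will apply \Cref{lem:consistency-oobcv} in two places. First, letting $\hRoob^* := \inf_{M,k}\hRoob_{M,k}$, the uniform consistency gives $|\hRoob^* - R^*| \le \sup_{M,k}|\hRoob_{M,k} - R_{M,k}| = \cO_p(\zeta_n)$, so by the positivity argument above $\hRoob^* \le R^*(1 + \cO_p(\zeta_n))$. Combined with the defining inequality \eqref{eq:Rh-mul} for $(\hat M,\hat k)$, this yields
\begin{align*}
\hRoob_{\hat M,\hat k} \;\le\; (1+\delta)\,\hRoob^* \;\le\; (1+\delta)\,R^*\bigl(1 + \cO_p(\zeta_n)\bigr).
\end{align*}

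Second, applying uniform consistency at the (random) tuned pair $(\hat M,\hat k)$ gives $R_{\hat M,\hat k} \le \hRoob_{\hat M,\hat k} + \cO_p(\zeta_n)$. Substituting the previous display and factoring out $(1+\delta)R^*$ produces
\begin{align*}
R_{\hat M,\hat k} \;\le\; (1+\delta)\,R^*\bigl(1 + \cO_p(\zeta_n)\bigr) + \cO_p(\zeta_n)
\;=\; (1+\delta)\,R^*\!\left[1 + \cO_p(\zeta_n) + \frac{\cO_p(\zeta_n)}{(1+\delta)R^*}\right],
\end{align*}
and using $R^* \ge \sigma^2_{\mathrm{irr}}$ once more collapses the bracket to $1 + \cO_p(\zeta_n)$, which is the claimed inequality.

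The main obstacle is not analytical but conceptual bookkeeping: one must verify that each additive $\cO_p(\zeta_n)$ remainder, once divided by the ensemble-dependent scale $R^*$ or $\hRoob^*$, remains $\cO_p(\zeta_n)$ with no hidden dependence on the selected $(\hat M,\hat k)$. The strict-positivity assumption on the irreducible risk is what makes this safe uniformly; without it, the quotient $\cO_p(\zeta_n)/R^*$ could blow up in overparameterized regimes where the Bayes-optimal risk degenerates. Everything else — the existence of $R^*$, the interchange of $\inf$ with the $\cO_p$ remainder, and the absorption of the additive $\cO_p(\zeta_n)$ into the multiplicative factor $(1+\cO_p(\zeta_n))$ — follows from \Cref{lem:consistency-oobcv} and elementary manipulations.
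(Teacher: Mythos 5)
Your proposal is correct and follows essentially the same route as the paper's proof: start from the defining inequality \eqref{eq:Rh-mul}, bound $\inf_{M,k}\hRoob_{M,k}$ by $\inf_{M,k}R_{M,k}+\cO_p(\zeta_n)$ via the uniform consistency of \Cref{lem:consistency-oobcv}, use the strictly positive irreducible-risk lower bound to turn additive errors into the multiplicative factor $(1+\cO_p(\zeta_n))$, and finally transfer from $\hRoob_{\hat M,\hat k}$ to $R_{\hat M,\hat k}$ by another application of the same uniform bound. Your treatment is, if anything, slightly more explicit than the paper's about why the final additive $\cO_p(\zeta_n)$ can be absorbed into the multiplicative factor.
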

    \begin{proof}[Proof of \Cref{prop:mul-opt}.]        
    By definition of \eqref{eq:Rh-mul}, we have
    \begin{align*}
        \hRoob_{\hat{M},\hat{k}} &=(1+\delta) \inf_{M\in\NN,k\in\cK_n}\hRoob_{M,k}\\
        &\leq (1+\delta)  \left(\inf_{M\in\NN,k\in\cK_n}R_{M,k} + \cO_p(\zeta_n)\right)\\
        &= (1+\delta)  \inf_{M\in\NN,k\in\cK_n}  R_{M,k}(1 + \cO_p(\zeta_n)).
    \end{align*}
    where the inequality is from \Cref{thm:limiting-oobcv-for-arbitrary-M-cond} and the last equality is from the assumption that the risks are lower bounded.
    Further, since from \Cref{lem:consistency-oobcv}, $\sup_{M\in\NN,k\in\cK_n}|R_{M,k} - \hRoob_{M,k}|=\cO_p(\zeta_n)$, we have
    \begin{align*}
        R_{\hat{M},\hat{k}} & \leq \hRoob_{\hat{M},\hat{k}} + \cO_p(\zeta_n) \leq (1+\delta)  \inf_{M\in\NN,k\in\cK_n} R_{M,k} (1 + \cO_p(\zeta_n)).
    \end{align*}
    This finishes the proof.
    \end{proof}

\subsection{Concrete example: bagged ridge predictors}\label{app:ridge}

    Application of \Cref{thm:limiting-oobcv-for-arbitrary-M-cond} to a specific data model and a base predictor requires verification of \Cref{ass:kappa,cond:conv-risk-M12}.
    As an illustration, we verify all the conditions for ridge predictors.
    Consider a dataset $\mathcal{D}_n = \{(\bx_1, y_1), \ldots, (\bx_n, y_n)\}$ consisting of random vectors in $\RR^{p} \times \RR$.
    Let $\bX\in\RR^{n\times p}$ denote the corresponding feature matrix whose $j$-th row contains $\bx_j^\top$, and let $\by\in\RR^n$ denote the corresponding response vector whose $j$-th entry contains $y_j$.
    Recall that the \emph{ridge} estimator with regularization parameter $\lambda>0$ fitted on $\cD_I$ for $I\subseteq[n]$ is defined as $\betaridge(\cD_I) = \argmin\limits_{\bbeta\in\RR^p}
    \sum_{j \in I} (y_j  - \bx_j^\top \bbeta)^2 / | I | + \lambda \| \bbeta \|_2^2.$
    The associated ridge base predictor and the subagged predictor are given by $\hat{f}_{\lambda}(\bx;\cD_I) = \bx^{\top}\betaridge(\cD_I)$ and $\tfWR{M}{k}(\bx;\cD_n) = \bx^{\top}\tbetaridge{M}$, where $I\in\cI_k$ and $\{I_{\ell}\}_{\ell=1}^M\overset{\SRS}{\sim}\cI_k$.

We consider Assumptions~\ref{asm:rmt-feat}-\ref{asm:lin-mod} on the dataset $\cD_n$ to characterize the risk, which are standard in the study of the ridge and ridgeless regression under proportional asymptotics; see, e.g., \citet{hastie2022surprises,patil2022bagging}.
   
    \begin{assumption}[Feature model]\label{asm:rmt-feat}
        The feature vectors $\bx_i \in \RR^{p}$, $i = 1, \dots, n$, multiplicatively decompose as $\bx_i = \bSigma^{1/2} \bz_i$, where $\bSigma \in \RR^{p \times p}$ is a positive semi-definite matrix and $\bz_i \in \RR^{p}$ is a random vector containing i.i.d.\ entries with mean $0$, variance $1$, and bounded $k$th moment for $k\geq2$.
        Let $\bSigma = \bW \bR \bW^\top$ denote the eigenvalue decomposition of the covariance matrix $\bSigma$, where $\bR \in \RR^{p \times p}$ is a diagonal matrix containing eigenvalues (in non-increasing order) $r_1 \ge r_2 \ge \dots \ge r_{p} \ge 0$, and $\bW~\in~\RR^{p \times p}$ is an orthonormal matrix containing the associated eigenvectors $\bw_1, \bw_2, \dots, \bw_{p}~\in~\RR^{p}$. 
        Let $H_{p}$ denote the empirical spectral distribution of $\bSigma$ (supposed on $\RR_{> 0}$) whose value at any $r \in \RR$ is given by
        \[
            H_{p}(r)
            = \frac{1}{p} \sum_{i=1}^{p} \ind_{\{r_i \le r\}}.
        \]
        Assume there exists $0 < r_{\min} \le r_{\max} < \infty$ such that $r_{\min}\leq r_1\leq r_p\leq r_{\max}$, and there exists a fixed distribution $H$ such that $H_{p} \rightarrow H$ in distribution as $p  \to \infty$.
    \end{assumption}
    
   \begin{assumption}[Response model]\label{asm:lin-mod}
        The response variables $y_i \in \RR$, $i = 1, \dots, n$, additively decompose as $y_i = \bx_i^\top \bbeta_0 + \varepsilon_i$, where $\bbeta_0 \in \RR^{p}$ is an unknown signal vector and $\varepsilon_i$ is an unobserved error that is assumed to be independent of $\bx_i$ with mean $0$, variance $\sigma^2$, and bounded moment of order $4 + \delta$ for some $\delta > 0$.
        The $\ell_2$-norm of the signal vector $\| \bbeta_0 \|_2$ is uniformly bounded in $p$, and $\lim_{p \to \infty} \| \bbeta_0 \|_2^2 = \rho^2 < \infty$.
        Let $G_{p}$ denote a certain distribution (supported on $\RR_{> 0}$) that encodes the components of the signal vector $\bbeta_0$ in the eigenbasis of $\bSigma$ via the distribution of (squared) projection of $\bbeta_0$ along the eigenvectors $\bw_j, 1 \le j \le p$, whose value at any $r \in \RR$ is given by
        \[
            G_{p}(r)
            = \frac{1}{\| \bbeta_0 \|_2^2} \sum_{i = 1}^{p} (\bbeta_0^\top \bw_i)^2 \, \ind_{\{ r_i \le r \}}.
        \]
        Assume there exists a fixed distribution $G$ such that $G_{p} \rightarrow G$ in distribution as $p  \to \infty$.
   \end{assumption}

    Assumptions~\ref{asm:rmt-feat}-\ref{asm:lin-mod} provide risk characterization for subagged ridge predictors \citep{patil2022bagging}, which establishes the consistency of \splitcv for any fixed ensemble size $M$, without obtaining the convergence rate.
        In \Cref{cor:ridge}, Assumptions~\ref{asm:rmt-feat}-\ref{asm:lin-mod} assume a linear model $y_i=\bx_i^{\top}\bbeta_0+\epsilon_i$ for observation $i$, where the feature vector is generated by $\bx_i=\bSigma^{1/2}\bz_i$, $\bSigma$ is the covariance matrix and $\bz_i$ contains i.i.d. entries with zero mean and unit variance.
        We make mild assumptions about such a data-generating process:
        (1) the noise $\epsilon_i$ and the entries of $\bz_i$ have bounded moments, and
        (2) the covariance and signal-weighted spectrums converge weakly to some distributions as $n,p\rightarrow\infty$.

    Under Assumptions~\ref{asm:rmt-feat}-\ref{asm:lin-mod}, we will show that $\hsigma_n=o(1)$ when using with \MOM.
    To prove the result for \AVG, we need the modified assumptions by replacing the bounded moment conditions on $z_{ij}$ in \Cref{asm:rmt-feat} and $\epsilon_i$ in \Cref{asm:rmt-feat} by bounded $\psi_2$-norm conditions.

    We will analyze the bagged predictors (with $M$ bags) in the proportional asymptotics regime, where the original data aspect ratio ($p / n$) converges to $\phi \in (0, \infty)$ as $n, p \to \infty$, and the subsample data aspect ratio ($p/k$) converges to $\phi_s$ as $k, p \to \infty$. 
    Because $k \le n$, $\phi_s$ is always no less than $\phi$.

       Under these assumptions, the results for ridge predictors are summarized in \Cref{cor:ridge}.
       
        \begin{proposition}[\ECV for ridge predictors]\label{cor:ridge}
            Suppose Assumptions~\ref{asm:rmt-feat}-\ref{asm:lin-mod} in \Cref{app:ridge} hold.
            Then, the ridge predictors with $\lambda>0$ using subagging satisfy \Cref{ass:kappa} with $\hsigma_n=\cO_p(1)$ for \MOM and \Cref{cond:conv-risk-M12} for any $\psi\in(0,1)$ such that $p/n\to\phi\in[\psi,\psi^{-1}]$ and $p/k\to\phi_s\in[\phi, \psi^{-1}]$ as $k,n,p\rightarrow\infty$.
            Consequently, the conclusions in \Cref{thm:limiting-oobcv-for-arbitrary-M-cond} hold.
        \end{proposition}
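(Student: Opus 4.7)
The proof amounts to verifying the variance-proxy condition (\Cref{ass:kappa}) with $\hat\sigma_n = \cO_p(1)$ and the conditional-risk rate condition (\Cref{cond:conv-risk-M12}) for subagged ridge predictors with $\lambda > 0$, after which the conclusion follows directly from \Cref{thm:limiting-oobcv-for-arbitrary-M-cond}. The plan is to exploit the deterministic resolvent bound $\|(\bX_I^\top \bX_I/|I| + \lambda \bI)^{-1}\|_{\mathrm{op}} \le \lambda^{-1}$ afforded by the ridge regularization, together with the bounded-moment hypotheses in \Cref{asm:rmt-feat,asm:lin-mod}, to control both the conditional $L_2$-norm of the squared residual and the concentration of the ridge risk around its Marchenko--Pastur limit, uniformly over $p/k \to \phi_s$ in the compact interval $[\phi,\psi^{-1}]$.

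For \Cref{ass:kappa}, since $\CEN = \MOM$ allows the conditional $\psi_1$-Orlicz norm to be replaced by the conditional $L_2$-norm (per the remark in \Cref{prop:bounded-variance-error-control-mul-form}), it suffices to bound $\EE[(y_0 - \bx_0^\top \widehat\beta_\lambda(\cD_I))^4 \mid \cD_I]$ uniformly in $I \in \cI_k$ and $k \in \cK_n$. Expanding the fourth power, writing $\bx_0 = \bSigma^{1/2}\bz_0$ with $\|\bSigma\|_{\mathrm{op}} \le r_{\max}$, and using bounded moments of $\bz_0$ and $\epsilon_0$, reduces everything to bounding $\EE[\|\widehat\beta_\lambda(\cD_I)\|_2^4]$; this in turn follows from the resolvent bound and Cauchy--Schwarz, yielding $\hat\sigma_I = \cO_p(1)$ with constants independent of $I$. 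The rate conditions $\log n\cdot\hat\sigma_{I_1}/\sqrt{n} \to 0$ and $(\log n)^{3/2}\hat\sigma_{I_1 \cup I_2}/\sqrt{n} \to 0$ are then immediate.

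For \Cref{cond:conv-risk-M12}, existence of deterministic limits $\sR_{1,k}$ and $\sR_{2,k}$ in the proportional regime is provided by the risk characterization for bagged ridge in \citet{patil2022bagging}, expressed through the companion Stieltjes transform of $H$. The nontrivial ingredient is the polynomial tail bound $\PP(\gamma_{M,n}^{-1}|R_{M,k} - \sR_{M,k}| \ge \eta) \le C_0\,\eta^{-1/\epsilon}$ for $\eta \ge \eta_0$. I would split $R_{M,k} - \sR_{M,k}$ into (i) a bias $\EE[R_{M,k}\mid\cD_n,\{I_\ell\}] - \sR_{M,k}$, controlled by quantitative concentration of quadratic forms $\bbeta_0^\top \bSigma(\widehat\bSigma_I + \lambda\bI)^{-1}\bSigma\bbeta_0$ and its noise counterpart $\sigma^2\tr(\bSigma(\widehat\bSigma_I + \lambda\bI)^{-1})/|I|$, and (ii) a stochastic fluctuation around the conditional mean. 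Both pieces are handled via Fuk--Nagaev-type inequalities (\citet{rio2017constants}) tailored to finite-moment data, yielding $\gamma_{M,n} = n^{-\alpha}$ for some $\alpha > \epsilon$. Uniformity over $k \in \cK_n$ is inherited from uniform resolvent bounds over $\phi_s \in [\phi,\psi^{-1}]$; the argument for $M=2$ differs from $M=1$ only in tracking the intersection size $|I_1 \cap I_2|$ concentrated via \Cref{lem:i0_mean}.

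The principal obstacle lies in extracting an explicit polynomial rate $\alpha > \epsilon \in (0,1)$ under only finite-moment (rather than sub-Gaussian) assumptions on $\bz_i$ and $\epsilon_i$: standard random-matrix concentration gives clean sub-Gaussian deviations in the nice case, but here one must patch together heavy-tailed concentration for quadratic forms and carefully track how $\alpha$ and $\epsilon$ depend on the available moment order, while preserving uniformity in the random index sets $\{I_\ell\}_{\ell=1}^M$ and in $k$. The assumption $\lambda > 0$ simplifies matters substantially by capping the resolvent spectrally away from zero, avoiding the additional difficulty of controlling the smallest eigenvalue of $\widehat\bSigma_I$ that would appear in the ridgeless case.
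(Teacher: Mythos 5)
Your verification of \Cref{ass:kappa} is workable and is in the same spirit as (though different in mechanics from) the paper's: the paper does not expand fourth moments by hand but invokes the $L_4$--$L_2$ norm-equivalence implied by the bounded-kurtosis feature model together with Proposition 2.16 of \citet{patil2022mitigating}, which bounds $\hsigma_I$ by $\tau\,(\|y_0-\bx_0^\top\bbeta_0\|_{L_2}+\|\bbeta_0-\hbeta_{\lambda}(\cD_I)\|_{\bSigma})^2$, and then uses the known risk limit to conclude $\hsigma_I=\cO_p(1)$; note also that boundedness in probability of $\|\hbeta_{\lambda}(\cD_I)-\bbeta_0\|_{\bSigma}$ is all that is needed, so your stronger target $\EE\|\hbeta_\lambda\|_2^4<\infty$ is unnecessary. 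The genuine gap is in your verification of \Cref{cond:conv-risk-M12}: the polynomial-rate tail bound \emph{is} the entire content of that assumption, and you only assert that it will follow from ``Fuk--Nagaev-type inequalities'' for resolvent quadratic forms under finite moments --- exactly the step you yourself identify as the principal obstacle, and which you never carry out. The paper does not rederive such concentration from scratch. For $M=1$ it imports the finite-sample deviation bound for the ridge risk from \citet[Theorem 5]{hastie2022surprises} (deviations of order $n^{-(1-\epsilon')/2}$ with failure probability $Cn^{-D}$), and separately proves a purely deterministic $L_q$-moment bound on $R_{1,k}$ for $q\le 2+\delta/2$ (the bias-type quadratic form is bounded by $\rho^2$ and the noise term by $\|\bL_I\bvarepsilon\|_2^2/(k\lambda)$, both thanks to $\lambda>0$); combining the two via a uniform-integrability/H\"older argument and Chebyshev yields precisely the weak-moment form \eqref{eq:subsample_cond_risk_M12} with $\gamma_{1,n}=n^{-(1-\epsilon')/2}$.

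Your claim that the $M=2$ case ``differs from $M=1$ only in tracking the intersection size'' via \Cref{lem:i0_mean} is too optimistic and, as stated, would fail: the cross term $\int (y-\hf(\bx;\cD_{I_1}))(y-\hf(\bx;\cD_{I_2}))\,\rd P(\bx,y)$ for overlapping subsamples has no off-the-shelf concentration-with-rate result, and a direct heavy-tailed concentration argument for it while controlling the random overlap would be a substantial new piece of random-matrix analysis. The paper sidesteps this entirely: it uses only the in-probability convergence of the $M=2$ risk from \citet[Lemma 3.8]{patil2022bagging}, the convexity domination $0\le R_{2,k}\le \tfrac12\{R_{1,k}(\cdot;I_1)+R_{1,k}(\cdot;I_2)\}$, and Pratt's lemma to upgrade this to $L^{1/\epsilon}$-convergence with a rate inherited from the $M=1$ analysis, after which Markov's inequality delivers \Cref{cond:conv-risk-M12} for $M=2$. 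Unless you either supply the quantitative concentration for both $M=1$ and the overlapping $M=2$ risk, or adopt a transfer device of this kind, your proposal does not establish the proposition.
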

        
        We remark that \Cref{cor:ridge} verifies \Cref{thm:limiting-oobcv-for-arbitrary-M-cond} for \CEN=\MOM, but one can also verify for \CEN=\AVG under slightly different assumptions; see \Cref{app:ridge} for more details.
        It is worth mentioning that \Cref{cond:conv-risk-M12} only requires the conditional risks for the ridge ensemble with $M = 1,2$ converge to their respective conditional (on $\cD_n$) limits, while the limiting forms of them are not required.
        In this regard, Assumptions \ref{asm:rmt-feat}-\ref{asm:lin-mod} can be further relaxed with some efforts, but we do not pursue fine-tuning of assumptions as our intent is only to illustrate the end-to-end applicability of \Cref{thm:limiting-oobcv-for-arbitrary-M-cond}.
        The generality of \Cref{thm:limiting-oobcv-for-arbitrary-M-cond} to general predictors is illustrated empirically in the next section.

    \begin{proof}[Proof of \Cref{cor:ridge}.]
        We split the proof into different parts.
        
        \noindent\underline{Part (1) Bounded moments of the risk for $M=1$.}
        For $I\in\cI_k$, define $\bL_I$ be the diagnoal matrix whose $i$th diagonal entry is one if $i\in I$ and zero otherwise, let $\hSigma_I=\bX^{\top}\bL_I\bX/|I|$.
        We begin with analyzing the risk for $M=1$:
        \begin{align}
            &R(\tf_{1,k};\mathcal{D}_I) \notag\\
            &= \EE\{(y_0-\bx_0^{\top}\bbeta_0)^2\} + \|\hbeta_1(\cD_I) - \bbeta_0\|_{\bSigma^{1/2}}^2 \notag\\
            &= \EE\{(y_0-\bx_0^{\top}\bbeta_0)^2\} + 
            \bbeta_0^{\top}\hSigma_I(\hSigma_I+\lambda\bI_p)^{-2}\hSigma_I\bbeta_0 +
            \frac{1}{k^2}\bvarepsilon^{\top}\bL_I\bX(\hSigma_I+\lambda\bI_p)^{-2}\bX^{\top}\bL_I\bvarepsilon    .  \label{eq:ridge-risk-eq-1}
        \end{align}
        Note that the first term is just a constant, and the last two terms can be bounded as:
        \begin{align*}
            \bbeta_0^{\top}\hSigma_I(\hSigma_I+\lambda\bI_p)^{-2}\hSigma_I\bbeta_0 & \leq \|\hSigma_I(\hSigma_I+\lambda\bI_p)^{-2}\hSigma_I\|_{\oper}\|\bbeta_0\|_2^2 \\
            &\leq \rho^2\max_{j\in[p]}\frac{s_j(\hSigma_I)^2}{(\lambda+s_j(\hSigma_I))^2}\\
            &\leq \rho^2\\
            \frac{1}{k^2}\bvarepsilon^{\top}\bL_I\bX(\hSigma_I+\lambda\bI_p)^{-2}\bX^{\top}\bL_I\bvarepsilon  &\leq \frac{1}{k^2}\|\bL_I\bX(\hSigma_I+\lambda\bI_p)^{-2}\bX^{\top}\bL_I\|_{\oper}\|\bL_I\bvarepsilon\|_2^2 \\
            &\leq \frac{\|\bL_I\bvarepsilon\|_2^2}{k} \max_{j\in[p]}\frac{s_j(\hSigma_I)}{(\lambda+s_j(\hSigma_I))^2}\\
            &\leq \frac{\|\bL_{I}\bvarepsilon\|_2^2}{k\lambda},
        \end{align*}
        where $s_j(\hSigma_I)\geq 0$ is the $j$th eigenvalue of $\hSigma_I$.
        For all $q\leq 2+\delta/2$, it follows that
        \begin{align*}
            \|R(\tf_{1,k};\mathcal{D}_I)\|_{L_q} &\leq \EE\{(y_0-\bx_0^{\top}\bbeta_0)^2\} + \rho^2 + \frac{1}{k\lambda}\|\|\bL_{I}\bvarepsilon\|_2^2\|_{L_q}\\
            &\leq \EE\{(y_0-\bx_0^{\top}\bbeta_0)^2\} + \rho^2 + \frac{1}{\lambda}\max_{j\in[n]}\|\epsilon_j^2\|_{L_q},
        \end{align*}
        which we denote as $C_{0,q}$.
        From the bounded moment assumption \ref{asm:lin-mod}, we know that $\|\epsilon_j^2\|_{L_q}<\infty$ for all $j$.
        Thus, $\|R(\tf_{1,k};\mathcal{D}_I)\|_{L_q}$ is upper bounded by constant $C_{0,q}$ for all $n\in\NN$.
        
        \noindent\underline{Part (2) Uniform integrability of the risk for $M=1$.}
        Under Assumptions \ref{asm:rmt-feat}-\ref{asm:lin-mod}, from \citet[Theorem 4.1]{patil2022bagging} we have that there exist deterministic functions $\sR_1$ and $\sR_{2}$ such that, for all $I\in\cI_k$ and $\{I_1,I_2\}\overset{\SRS}{\sim}\cI_k$,
        \begin{align}
            R(\tf_{1,k};\mathcal{D}_I) - \sR_1(\phi,\phi_s) &\rightarrow 0,\qquad 
            R(\tf_{2,k};\cD_n,\{{I_1},{I_{2}}\}) - \sR_2(\phi,\phi_s) \rightarrow 0, \label{eq:ridge-eq-conv}
        \end{align}
        with probability tending to one, as $k,n,p\rightarrow\infty$, $p/n\rightarrow\phi\in[\psi,\psi^{-1}]$ and $p/k\rightarrow\phi_s\in[\phi,\psi^{-1}]$.
        Furthermore, $\sR_1$ and $\sR_2$ are continuous functions on $\phi_s$ for any $\phi$, which are bounded in the domain.
        To verify the tail bound condition for $M=1$, \citet[Theorem 5]{hastie2022surprises} shows that 
        \begin{align*}
            \PP(n^{(1-\epsilon')/2} |R(\tf_{1,k};\mathcal{D}_I) - \sR_1(\phi,\phi_s)|\geq C_1)\leq Cn^{-D},
        \end{align*}
        where $
        C_1=C(\rho^2+\lambda^{-1})\lambda^{-1}$ for any $D,\epsilon'>0$ and the constant $C$ depends on $\epsilon'$, $D$ and other model parameters.
        For all $q\leq 1+\delta/4$, let $\gamma_{1,n}=n^{-(1-\epsilon')/(2q)}$ and $Z_n=\gamma_{1,n}^{-1}|R(\tf_{1,k};\mathcal{D}_I) - \sR_1(\phi,\phi_s)|$. Then, we have $\PP(Z_n^q\geq C_1^q)\leq 1-Cn^{-D}$, and
        \begin{align*}
            \EE(Z_n^q) &= \EE(Z_n^q\ind\{Z_n<C_1\}) + \EE(Z_n^q\ind\{Z_n\geq C_1\}) \\
            &\leq C_1^q  + \{\EE(Z_n^{2q})\}^{\frac{1}{2}} \PP(Z_n\geq C_1)^{\frac{1}{2}}\\
            &\leq C_1^q +  (C_{0,2q}^{q}+\sR_1(\phi,\phi_s)^{\frac{q}{2}})\gamma_{1,n}^{-q}2^{q-1}C^{\frac{1}{2}} n^{-\frac{D}{2}},
        \end{align*}
        where the first inequality is from Holder's inequality.
        For $\epsilon'\in(0,1)$ fixed, setting $D=(1-\epsilon')/2$ yields that
        \begin{align*}
            \EE(Z_n^q)  &\leq C_1^q +  (C_{0,2q}^{q}+\sR_1(\phi,\phi_s)^{\frac{q}{2}}) 2^{q-1}C^{\frac{1}{2}},
        \end{align*}
        Therefore, $Z_n^q$ is uniformly integrable.
        By Chebyshev's inequality, we further have
        \begin{align}
            \limsup_{n\rightarrow\infty}\sup_{\eta >0 } \eta^q\PP(Z_n \geq \eta) &\leq C_1^q +  (C_{0,2q}^{q}+\sR_1(\phi,\phi_s)^{\frac{q}{2}}) 2^{q-1}C^{\frac{1}{2}},\label{eq:ridge-M1-tail}
        \end{align}
        which implies \Cref{cond:conv-risk-M12} for $\gamma_{1,n} = n^{-(1-\epsilon')/2}$ and for any $\epsilon=1/q$ where $q\leq 1+\delta/4$.

        \noindent\underline{Part (3) Concentration of the risk for $M=2$.}
        To show the existence of $\gamma_{2,n}$, from \citet[Lemma 3.8.]{patil2022bagging}, we have that with probability tending to one, 
        \begin{align*}
            R(\tf_{2,k};\cD_n,\{{I_1},{I_{2}}\}) - \sR_2(\phi,\phi_s) \rightarrow 0. 
        \end{align*}
        By convexity, we have
        \begin{align*}
            0\leq R(\tf_{2,k};\cD_n,\{{I_1},{I_{2}}\}) \leq 2^{-1}(R(\tf_{1,k};\cD_n,\{{I_1}\}) + R(\tf_{1,k};\cD_n,\{{I_2}\})),
        \end{align*}
        which implies that
        \begin{align}
            0\leq R(\tf_{2,k};\cD_n,\{{I_1},{I_{2}}\})^{1/\epsilon} \leq 2^{-1/\epsilon}(R(\tf_{1,k};\cD_n,\{{I_1}\}) + R(\tf_{1,k};\cD_n,\{{I_2}\}))^{1/\epsilon}.\label{eq:ridge-eq-2}
        \end{align}
        We next apply Pratt's lemma to show $L^{1/\epsilon}$-convergence for $M=2$.
        Note that the tail bound condition \eqref{eq:ridge-M1-tail} implies $L^{1/\epsilon}$-convergence:
        \begin{align*}
            \gamma_{1,n}^{-1/\epsilon}\EE\{(R(\tf_{1,k};\cD_n,\{{I_j}\}) - \sR_1(\phi,\phi_s) )^{1/\epsilon}\}\rightarrow 0,\qquad j=1,2
        \end{align*}
        and the $1/\epsilon$-th moment converges:
        \begin{align*}
            \gamma_{1,n}^{-1/\epsilon}\EE(R(\tf_{1,k};\cD_n,\{{I_j}\})^{1/\epsilon} - \sR_1(\phi,\phi_s) ^{1/\epsilon})\rightarrow 0,\qquad j=1,2.
        \end{align*}
        By Pratt's lemma \citep[see, e.g.,][Theorem  5.5]{gut_2005}, we have the $1/\epsilon$-th moment for $M=2$ also converges:
        \begin{align*}
            \gamma_{1,n}^{-1/\epsilon}\EE(R(\tf_{2,k};\cD_n,\{{I_1},{I_{2}}\})^{1/\epsilon} - \sR_2(\phi,\phi_s)^{1/\epsilon}) \to 0.
        \end{align*}
        From \citet[Theorem 5.2]{gut_2005}, we further have $L^{1/\epsilon}$-convergence for $M=2$:
        \begin{align*}
            \gamma_{1,n}^{-1/\epsilon}\EE\{(R(\tf_{2,k};\cD_n,\{{I_1},{I_{2}}\}) - \sR_2(\phi,\phi_s))^{1/\epsilon}\} \to 0.
        \end{align*}
        This implies that there exists a constant sequence of $\gamma_{2,n}'$ such that 
        $\EE\{(R(\tf_{2,k};\cD_n,\{{I_1},{I_{2}}\}) - \sR_2(\phi,\phi_s))^{1/\epsilon}\}\leq \gamma_{2,n}'$ and $\gamma_{2,n}'\geq \gamma_{1,n}^{1/\epsilon}$.
        Hence, we can simply pick $\gamma_{2,n}'=\gamma_{1,n}^{1/\epsilon}$.
        Then, by Markov's inequality, we have that for all $\eta'>0$
        \begin{align*}
            \PP(|R(\tf_{2,k};\cD_n,\{{I_1},{I_{2}}\}) - \sR_2(\phi,\phi_s)|^{1/\epsilon}> \eta' ) \leq \gamma_{2,n}'/\eta',
        \end{align*}
        or equivalently for all $\eta>1$,
        \begin{align*}
            \PP(|R(\tf_{2,k};\cD_n,\{{I_1},{I_{2}}\}) - \sR_2(\phi,\phi_s)|> \eta\gamma_{2,n} ) \leq 1/\eta^{1/\epsilon},
        \end{align*}
        where $\gamma_{2,n}= \gamma_{2,n}'^{\epsilon}$ and $\gamma_{2,n}=o(n^{-\epsilon})$.
        Thus, \Cref{cond:conv-risk-M12} is satisfied under proportional asymptotics.

        \noindent\underline{Part (4) Bounded variance proxy for CV estimates.}
        From results by \citet[Remark 2.19]{patil2022mitigating}, Assumption \ref{asm:rmt-feat} in random matrix theory implies $L_4-L_2$ norm equivalence, since the components of $\bZ$ are independent and have bounded kurtosis.
        Invoking Proposition 2.16 of \citet{patil2022mitigating}, there exists $\tau>0$ such that
        \begin{align*}
            \hsigma_{I} &\leq \tau \inf_{\bbeta\in\RR^p} (\|y_0-\bx_0^{\top}\bbeta\|_{L_2} + \|\bbeta-\hbeta_{\lambda}(\cD_I) \|_{\bSigma})^2.
        \end{align*}
        When $\bbeta = \bbeta_0$, $\|y_0-\bx_0^{\top}\bbeta_0\|_{L_2}=\sigma$ and $\|\bbeta_0-\hbeta_{\lambda}(\cD_I) \|_{\bSigma} \rightarrow \sR_1(\phi,\phi_s)$ with probability tending to one, which is continuous and bounded in $\phi\in [\psi,\psi^{-1}]$ and $\phi_s\in[\phi,\psi^{-1}]$.
        This implies that $\hsigma_I=\cO_p(1)$.
        Analogously, $\hsigma_{I_1\cup I_2}=\cO_p(1)$ and hence it holds for $\CEN=\MOM$.

        Combining the parts above finishes the proof.
    \end{proof}

\section{Helper concentration results}
\label{sec:appendix-concerntration}

\subsection{Size of the intersection of randomly sampled datasets}

In this section, we collect a helper result concerned with convergences that are used in the proofs of \Cref{thm:limiting-oobcv-for-arbitrary-M-cond}.
Before stating the lemma, we recall the definition of a hypergeometric random variable along with its mean and variance;
see \citet{greene2017exponential} for more related details.

    \begin{definition}[Hypergeometric random variable]
        A random variable $X$ follows the hypergeometric distribution $X\sim \operatorname {Hypergeometric} (n,K,N)$ if its probability mass function is given by
        \begin{align*}
            \PP(X=k)=\frac{
         \binom{K}{k}\binom{N-K}{n-k}
         }{\binom{N}{n}
         },\qquad \max\{0,n+K-N\}\leq k\leq \min\{n,K\}.
        \end{align*}
        The expectation and variance of $X$ are given by
        \begin{align*}
            \EE(X) &= \frac{nK}{N},
            \qquad
            \Var(X) = \frac{nK(N-K)(N-n)}{N^2(N-1)}.
        \end{align*}
    \end{definition}

    The following lemma characterizes the limiting proportions of shared observations in two simple random samples when both the subsample size $k$ and the full data size $n$ tend to infinity.
    \begin{lemma}[Asymptotic proportions of shared observations, adapted from \citet{patil2022bagging}]\label{lem:i0_mean}
        For $n\in\NN$, define $\cI_k = \{\{i_1, i_2, \ldots, i_k\}:\, 1\le i_1 < i_2 < \ldots < i_k \le n\}$.
        Let $I_1,I_2\overset{\textup{\texttt{SRSWR}}}{\sim}\cI_k$, define the random variable $i_{0}^{\textup{\texttt{SRSWR}}} =|I_1\cap I_2|$ to be the number of shared samples, and define $i_{0}^{\textup{\texttt{SRSWOR}}}$ accordingly.
        Then $i_0^{\textup{\texttt{SRSWR}}}\sim \textup{\text{Binomial}}(k,k/n)$ and $i_0^{\textup{\texttt{SRSWOR}}} \sim \operatorname {Hypergeometric} (k,k,n)$.
        Let $\{k_m\}_{m=1}^{\infty}$ and $\{n_m\}_{m=1}^{\infty}$ be two sequences of positive integers such that $n_m$ is strictly increasing in $m$, $n_m^{\nu}\leq k_m\leq n_m$ for some constant $\nu\in(0,1)$.
        Then, $i_0^{\textup{\texttt{SRSWR}}}/k_m-k_m/n_m\rightarrow 0$, and $i_0^{\textup{\texttt{SRSWOR}}}/k_m-k_m/n_m\rightarrow 0$ with probability tending to one.
    \end{lemma}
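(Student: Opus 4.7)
The plan is to verify the two distributional claims and then obtain the convergence $i_0/k_m - k_m/n_m \to 0$ by a direct variance-plus-Chebyshev argument.

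First, I would establish the two distributions by conditioning on $I_2$. In the SRSWR case, $I_1$ is a list of $k$ iid uniform draws from $[n]$, and conditional on any realization of $I_2$ (viewed as a fixed multiset of size $k$), each coordinate of $I_1$ independently lies in $I_2$ with probability $k/n$, so the total count is $\text{Binomial}(k,k/n)$; the parameter does not depend on $I_2$, so this is also the unconditional law. In the SRSWOR case, $I_1$ is a uniformly chosen $k$-subset of $[n]$, and conditional on $I_2$ the count $|I_1 \cap I_2|$ is exactly the number of ``marked'' balls in a size-$k$ draw without replacement from an urn with $k$ marked and $n-k$ unmarked balls, which is $\operatorname{Hypergeometric}(k,k,n)$ by definition.

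Next, I would compute means and variances. In both cases $\EE(i_0) = k \cdot (k/n) = k^2/n$, so $\EE(i_0/k) = k/n$. For SRSWR, $\operatorname{Var}(i_0) = k(k/n)(1-k/n)$, hence $\operatorname{Var}(i_0/k_m) = (1-k_m/n_m)/n_m \leq 1/n_m$. For SRSWOR, the standard hypergeometric variance formula gives $\operatorname{Var}(i_0) = k^2(n-k)^2/[n^2(n-1)]$, hence $\operatorname{Var}(i_0/k_m) = (n_m-k_m)^2/[n_m^2(n_m-1)] \leq 1/(n_m-1)$. Both variances tend to zero as $n_m \to \infty$, which is guaranteed since $n_m$ is strictly increasing.

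Finally, Chebyshev's inequality yields, for every $\epsilon > 0$ and either sampling scheme,
\[
\PP\!\left(\left|\frac{i_0}{k_m} - \frac{k_m}{n_m}\right| > \epsilon\right) \;\leq\; \frac{\operatorname{Var}(i_0/k_m)}{\epsilon^2} \;\longrightarrow\; 0,
\]
which is the claimed convergence with probability tending to one. There is no real obstacle here: the condition $n_m^\nu \leq k_m \leq n_m$ is not actually used in the variance bounds above (only $n_m \to \infty$ is needed), but it is retained because the lemma is invoked elsewhere in the paper in settings where $k_m \to \infty$ is also needed. The whole argument is essentially a two-line calculation per case, which matches the adapted version in \citet{patil2022bagging}.
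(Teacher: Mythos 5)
Your proof is correct and takes essentially the approach the paper intends: the paper does not reprove this lemma (it defers to \citet{patil2022bagging}), but the hypergeometric mean and variance recorded immediately before the statement are precisely the ingredients of your argument, and your conditional identification of the Binomial$(k,k/n)$ and Hypergeometric$(k,k,n)$ laws, the exact mean $k_m/n_m$, the variance bounds $1/n_m$ and $1/(n_m-1)$, and the Chebyshev step all check out. Your remark that only $n_m\to\infty$ (and not $n_m^{\nu}\le k_m$) is needed for this particular conclusion is also accurate.
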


\section{Additional experimental details and results}\label{app:ex-result}

\subsection{Risk estimation and extrapolation in \Cref{subsec:experiment-risk-extrapolation} }\label{app:ex-risk-extrapolation}

In these experiments, we use $n=500,p=5,000$ ($\phi=0.1$) and $n=5,000,p=500$ ($\phi=10$) for underparameterized and overparameterized regimes, where the subsample aspect ratio $\phi_s$ varies from 0.1 to 10, and from 10 to 100, respectively.
The out-of-sample prediction errors are computed on $n_{\test}=2,000$ samples, and the results are averaged over 50 dataset repetitions.
The ECV cross-validation estimates \eqref{eq:Roob-M} are computed on $M_0=10$ base predictors.
For the kNN predictor, we use 5 nearest neighbors.
For the logistic predictor, we further binarize the response at the median with the null risk of a predictor always outputs 0.5 being 0.25.

\begin{table}[H]
    \caption{Summary of experimental results in \Cref{subsec:risk-extrapolation}.}
    \centering
    \begin{tabular}{lcccccc}
    \toprule
         \textbf{Predictors} &
        ridgeless & ridge &  lassoless & lasso & logistic & kNN\\
        \midrule
        Figure & \Cref{fig:est-ridgeless} & \Cref{fig:est-ridge} & \Cref{fig:est-lassoless} & \Cref{fig:est-lasso} & \Cref{fig:est-logistic} & \Cref{fig:est-kNN} \\
        \bottomrule
    \end{tabular}
\end{table}

\subsection{Tuning ensemble and subsample sizes in \Cref{subsec:ex-k-M}}
\label{app:ex-k-M}

In these experiments, we use $n=1,000$ and $p=\lfloor n\phi\rfloor$ with data aspect ratio $\phi$ varying from 0.1 to 10.
The \ECV is performed on a grid of subsample aspect ratios $\phi_s$ given in \Cref{alg:cross-validation}, with $M_0=10$ and $M_{\max}=50$.
The out-of-sample prediction errors are computed on $n_{\test}=2,000$ samples, and the results are averaged over 50 dataset repetitions.

\begin{table}[H]
    \caption{Summary of experimental results in \Cref{subsec:ex-k-M}.}
    \centering
    \begin{tabular}{lcccccc}
    \toprule
        \textbf{Procedure} & \multicolumn{3}{c}{bagging} & \multicolumn{3}{c}{subagging}\\\cmidrule(lr){2-4} \cmidrule(lr){5-7}
        \textbf{Model}& \ref{model:linear} & \ref{model:quad} & \ref{model:tanh}& \ref{model:linear} & \ref{model:quad} & \ref{model:tanh}\\
        \midrule
        Figure  & \Cref{fig:oobcv-linear-bagging} & \Cref{fig:oobcv-quad-bagging} & \Cref{fig:oobcv-tanh-bagging}& \Cref{fig:oobcv-linear-subagging} & \Cref{fig:oobcv-quad-subagging} & \Cref{fig:oobcv-tanh-subagging}\\
        \bottomrule
    \end{tabular}
\end{table}

\clearpage
\begin{figure}
    \centering
    \includegraphics[width=\textwidth]{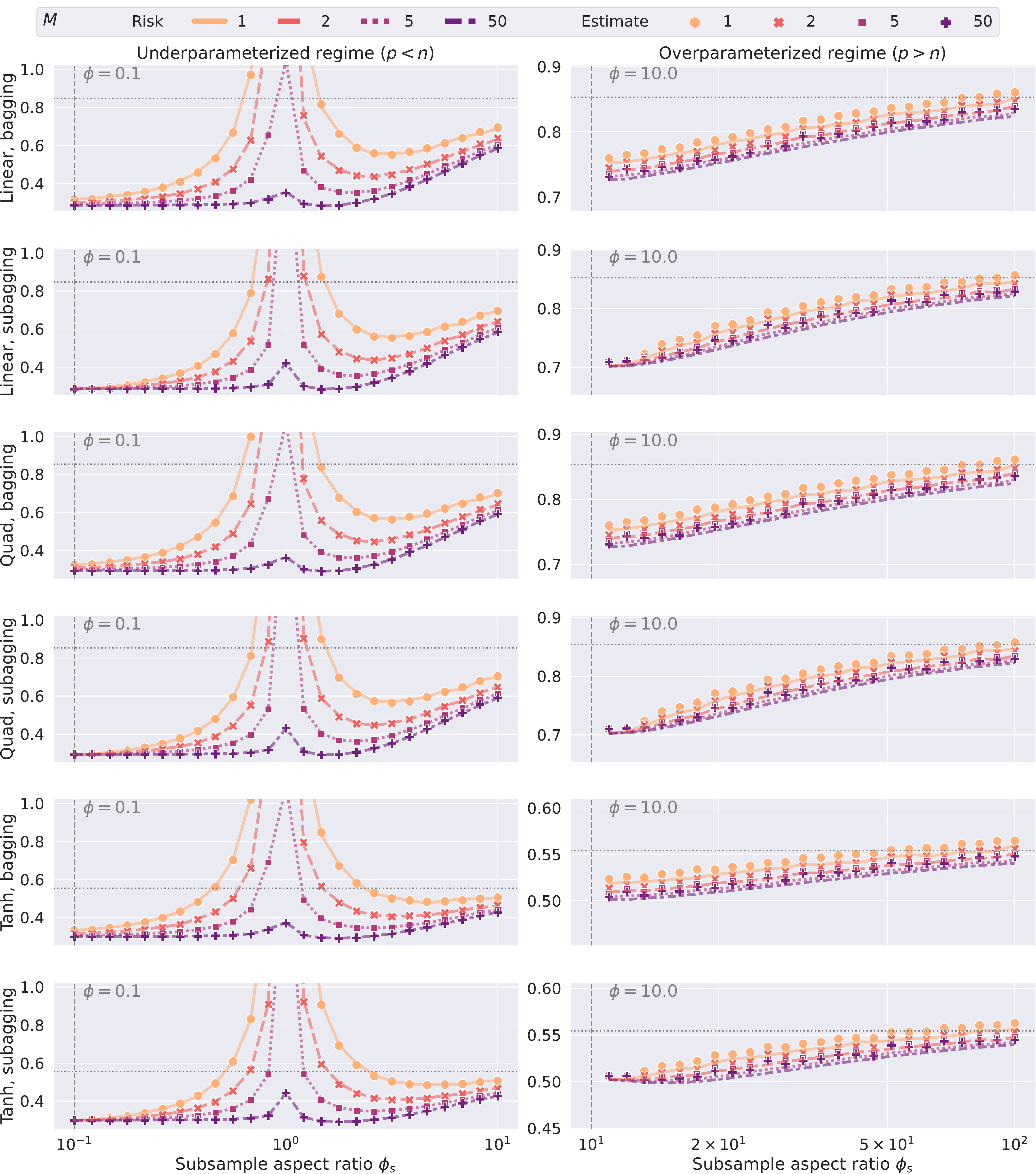}    
    \caption{
    Finite-sample ECV estimate (points) and prediction risk (lines) for ridge predictors ($\lambda=0.1$) using bagging and subagging, under nonlinear quadratic and tanh models (\Cref{subsec:experiment-risk-extrapolation}) for varying subsample size $k=\lfloor p/\phi_s\rfloor$, and the ensemble size $M$.
    For each value of $M$, the points denote the finite-sample ECV cross-validation estimates \eqref{eq:Roob-M}, and the lines denote the out-of-sample prediction error.
    See \Cref{app:ex-risk-extrapolation} for more details.}
    \label{fig:est-ridge}
\end{figure}

\begin{figure}
    \centering
    \includegraphics[width=\textwidth]{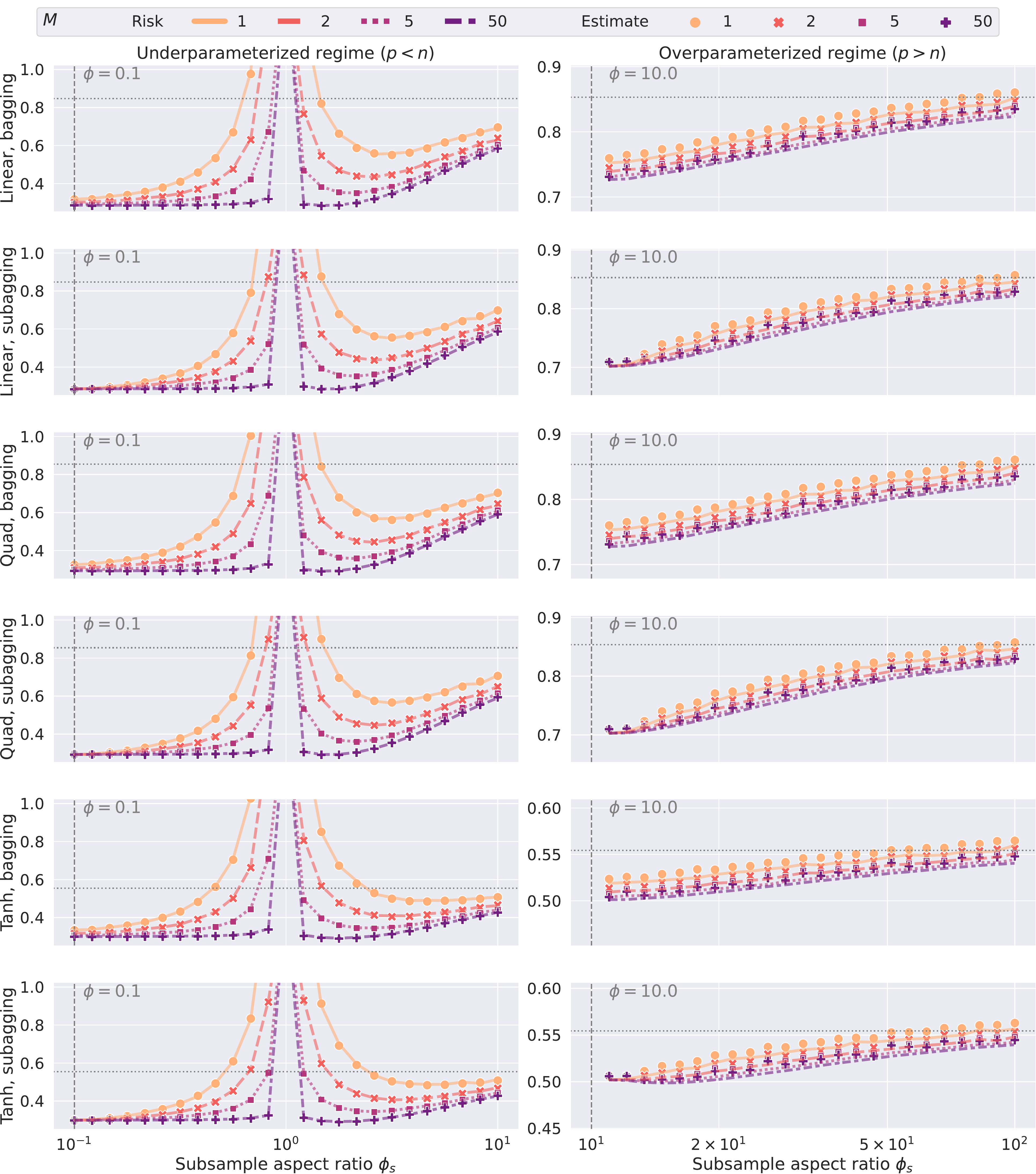}    
    \caption{
    Finite-sample ECV estimate (points) and prediction risk (lines) for ridgeless predictors using bagging and subagging, under nonlinear quadratic and tanh models (\Cref{subsec:experiment-risk-extrapolation}) for varying subsample size $k=\lfloor p/\phi_s\rfloor$, and the ensemble size $M$.
    For each value of $M$, the points denote the finite-sample ECV cross-validation estimates \eqref{eq:Roob-M}, and the lines denote the out-of-sample prediction error.
    See \Cref{app:ex-risk-extrapolation} for more details.}
    \label{fig:est-ridgeless}
\end{figure}

\begin{figure}
    \centering
    \includegraphics[width=\textwidth]{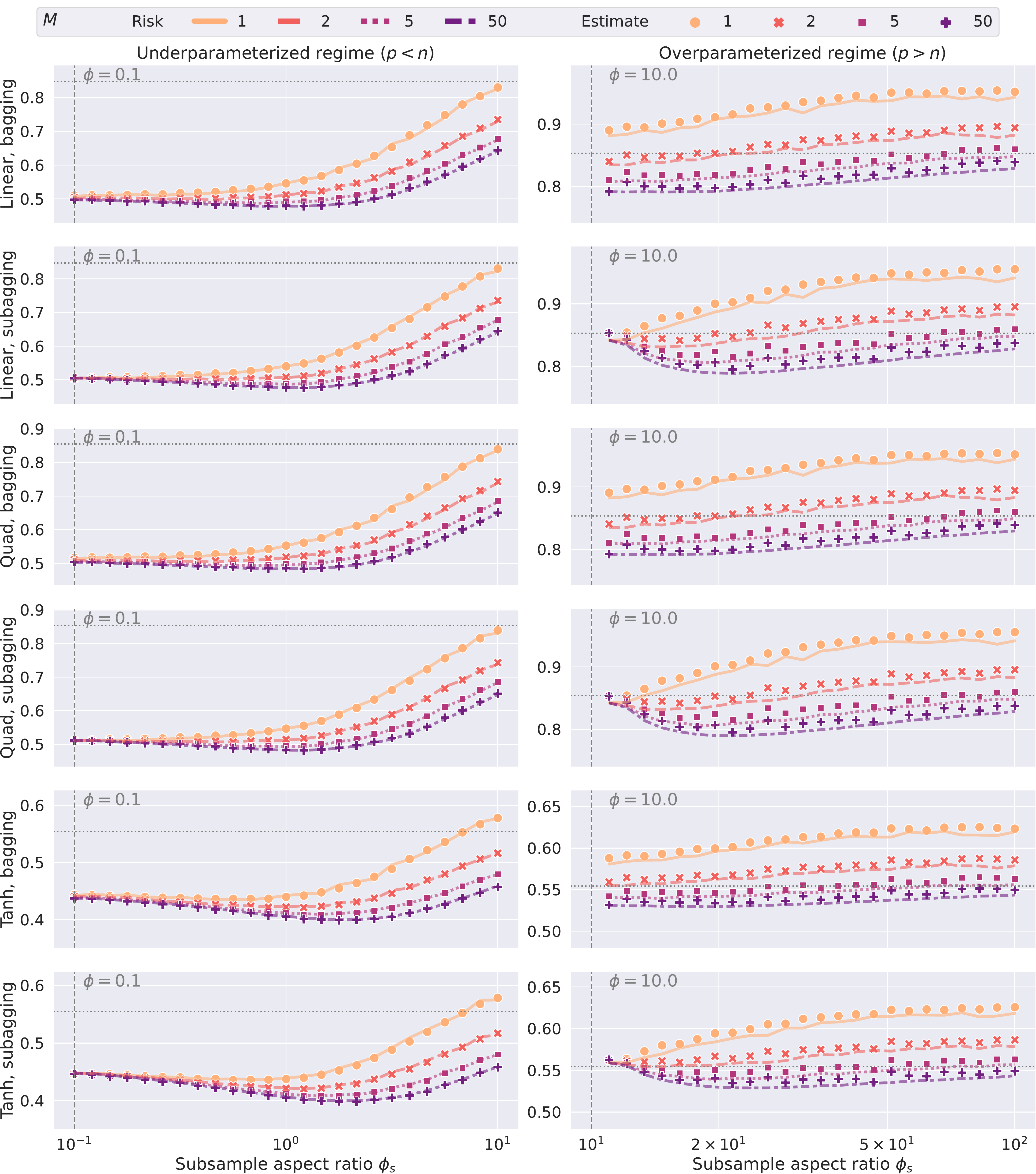}    
    \caption{
    Finite-sample ECV estimate (points) and prediction risk (lines) for lasso predictors ($\lambda=0.1$) using bagging and subagging, under nonlinear quadratic and tanh models (\Cref{subsec:experiment-risk-extrapolation}) for varying subsample size $k=\lfloor p/\phi_s\rfloor$, and the ensemble size $M$.
    For each value of $M$, the points denote the finite-sample ECV cross-validation estimates \eqref{eq:Roob-M}, and the lines denote the out-of-sample prediction error.
    See \Cref{app:ex-risk-extrapolation} for more details.}
    \label{fig:est-lasso}
\end{figure}

\begin{figure}
    \centering
    \includegraphics[width=\textwidth]{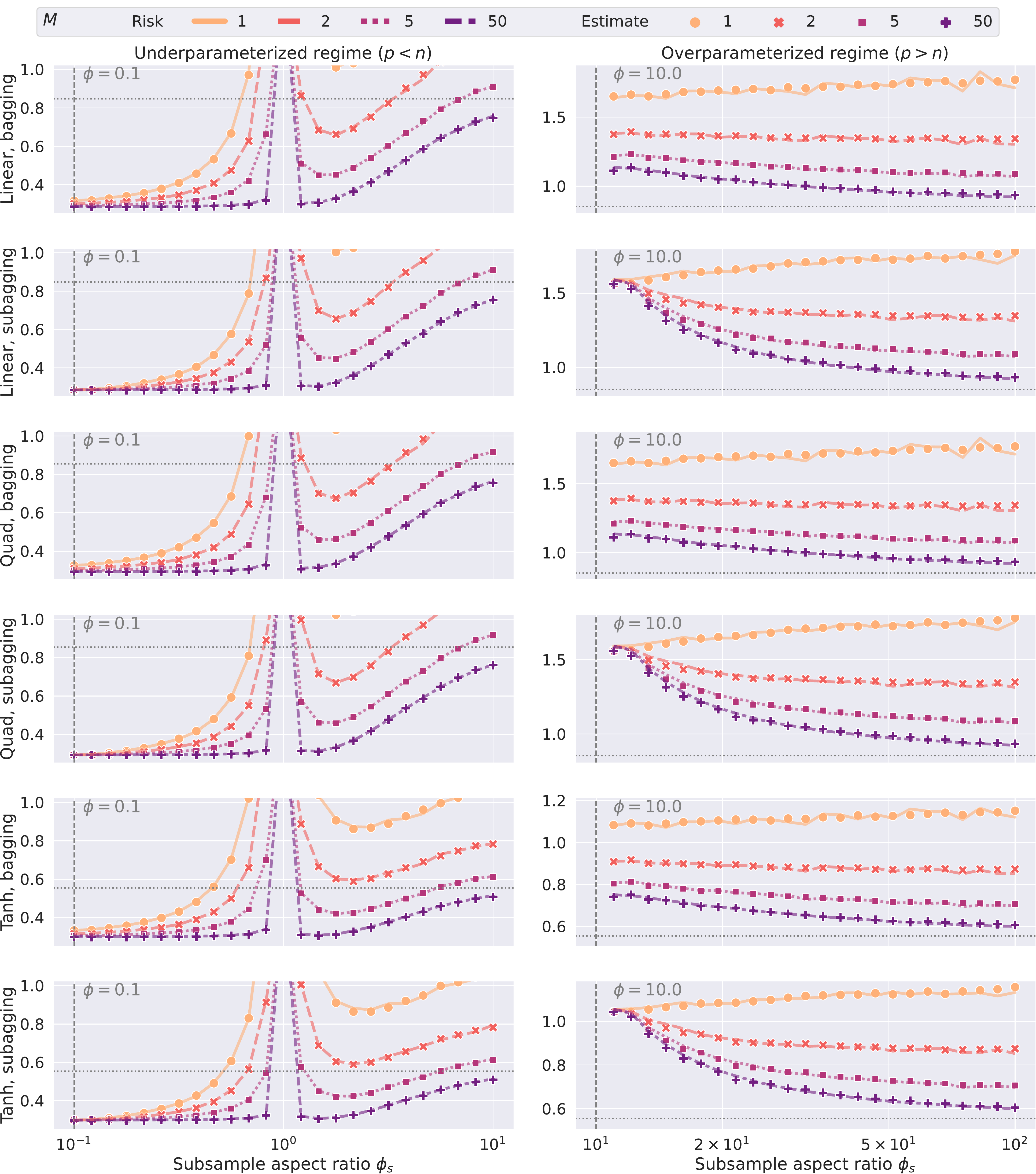}    
    \caption{
    Finite-sample ECV estimate (points) and prediction risk (lines) for lassoless predictors using bagging and subagging, under nonlinear quadratic and tanh models (\Cref{subsec:experiment-risk-extrapolation}) for varying subsample size $k=\lfloor p/\phi_s\rfloor$, and the ensemble size $M$.
    For each value of $M$, the points denote the finite-sample ECV cross-validation estimates \eqref{eq:Roob-M} computed on $M_0=10$ base predictors, and the lines denote the out-of-sample prediction error computed on $n_{\test}=2,000$ samples, averaged over 50 dataset repetitions, with $n=1,000$ and $p=\lfloor n\phi\rfloor$, and $\phi=0.1$ and $10$ for underparameterized ($p<n$) and overparameterized ($p>n$) regimes.}
    \label{fig:est-lassoless}
\end{figure}

\begin{figure}
    \centering
    \includegraphics[width=\textwidth]{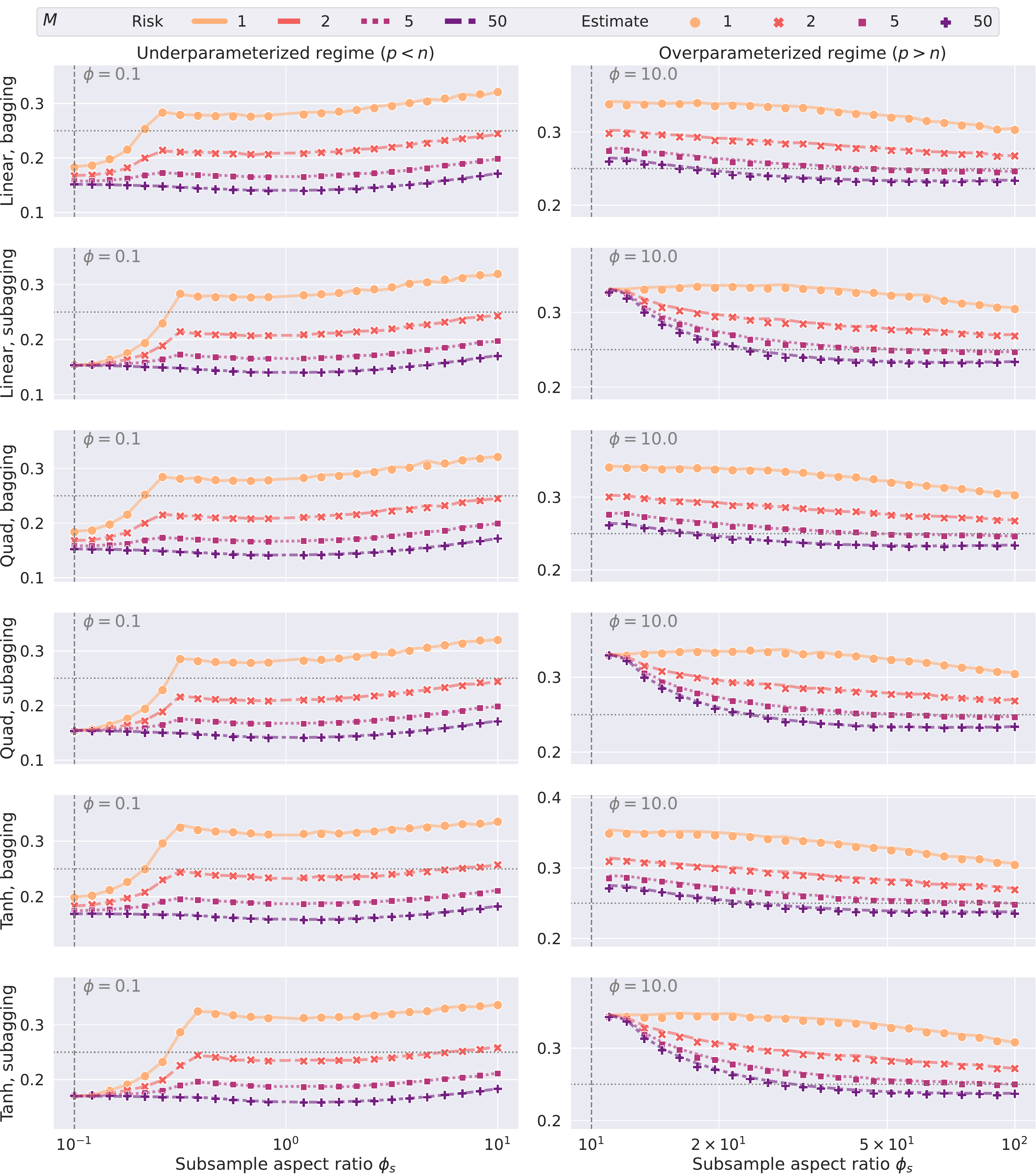}    
    \caption{
    Finite-sample ECV estimate (points) and prediction risk (lines) for logistic predictors using bagging and subagging, under nonlinear quadratic and tanh models (\Cref{subsec:experiment-risk-extrapolation}) for varying subsample size $k=\lfloor p/\phi_s\rfloor$, and the ensemble size $M$.
    For each value of $M$, the points denote the finite-sample ECV cross-validation estimates \eqref{eq:Roob-M}, and the lines denote the out-of-sample prediction error.
    See \Cref{app:ex-risk-extrapolation} for more details.}
    \label{fig:est-logistic}
\end{figure}

\begin{figure}
    \centering
    \includegraphics[width=\textwidth]{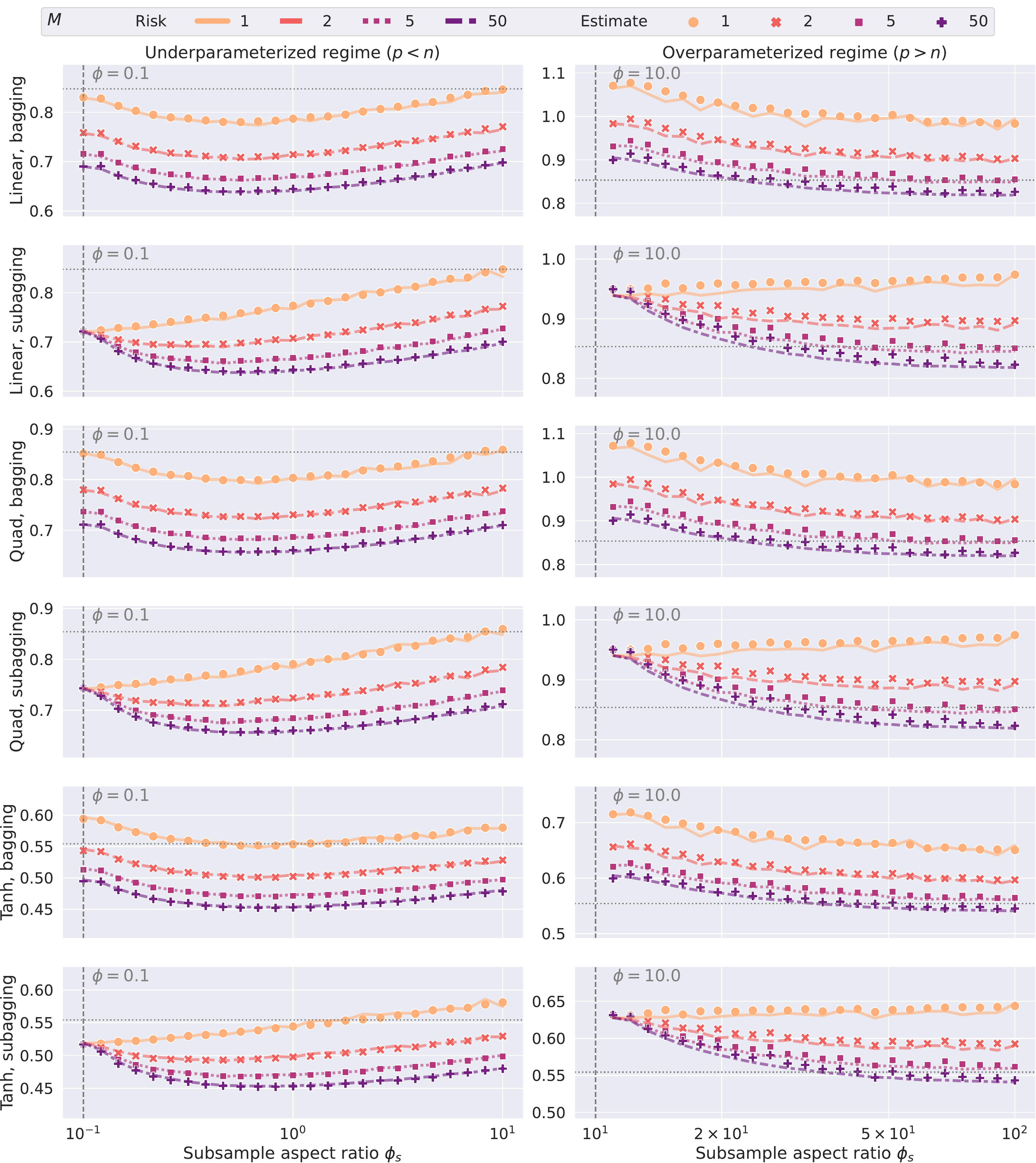}    
    \caption{
    Finite-sample ECV estimate (points) and prediction risk (lines) for kNN predictors using bagging and subagging, under nonlinear quadratic and tanh models (\Cref{subsec:experiment-risk-extrapolation}) for varying subsample size $k=\lfloor p/\phi_s\rfloor$, and the ensemble size $M$.
    For each value of $M$, the points denote the finite-sample ECV cross-validation estimates \eqref{eq:Roob-M}, and the lines denote the out-of-sample prediction error.
    See \Cref{app:ex-risk-extrapolation} for more details.}
    \label{fig:est-kNN}
\end{figure}

\clearpage
\begin{figure}[!t]
    \centering
    \includegraphics[width=0.8\textwidth]{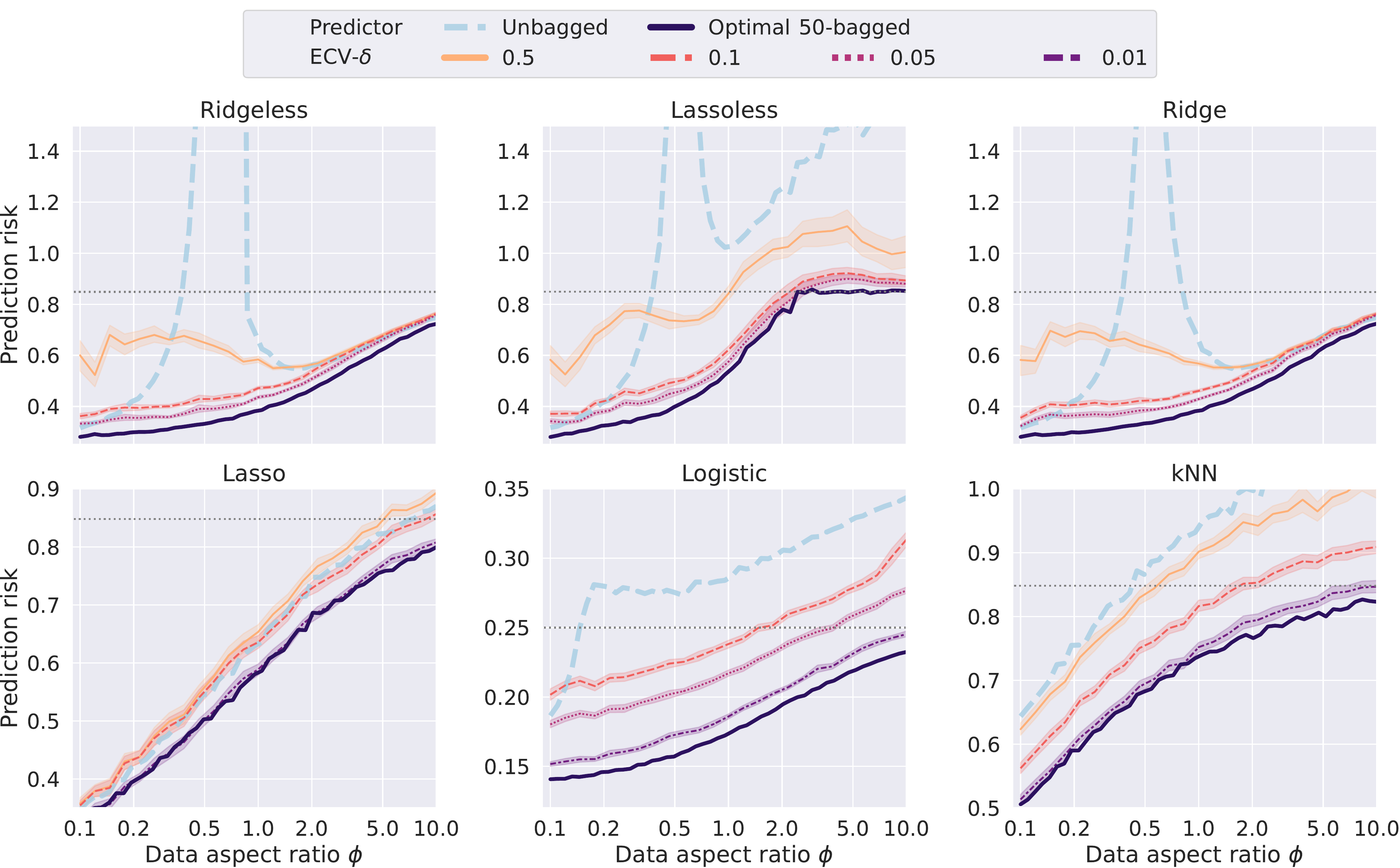}
    \caption{Prediction risk for different bagged predictors by \ECV, under model \ref{model:linear}, for varying $\phi$ and tolerance threshold $\delta$.
    An ensemble is fitted when \eqref{eq:tobag} is satisfied with $\zeta=5$.
    The null risks and the risks for the non-ensemble predictors are marked as gray dotted lines and blue dashed lines, respectively.
    The points denote finite-sample risks, and the shaded regions denote the values within one standard deviation.
    See \Cref{app:ex-k-M} for more details.}
    \label{fig:oobcv-linear-bagging}
\end{figure}

\begin{figure}[!t]
    \centering
    \includegraphics[width=0.8\textwidth]{figures/fig_cv_quad_0.5_bagging.pdf}
    \caption{Prediction risk for different bagged predictors by \ECV, under model \ref{model:tanh}, for varying $\phi$ and tolerance threshold $\delta$.
    An ensemble is fitted when \eqref{eq:tobag} is satisfied with $\zeta=5$.
    The null risks and the risks for the non-ensemble predictors are marked as gray dotted lines and blue dashed lines, respectively.
    The points denote finite-sample risks, and the shaded regions denote the values within one standard deviation.
    See \Cref{app:ex-k-M} for more details.}
    \label{fig:oobcv-tanh-bagging}
\end{figure}

\begin{figure}[!t]
    \centering
    \includegraphics[width=0.8\textwidth]{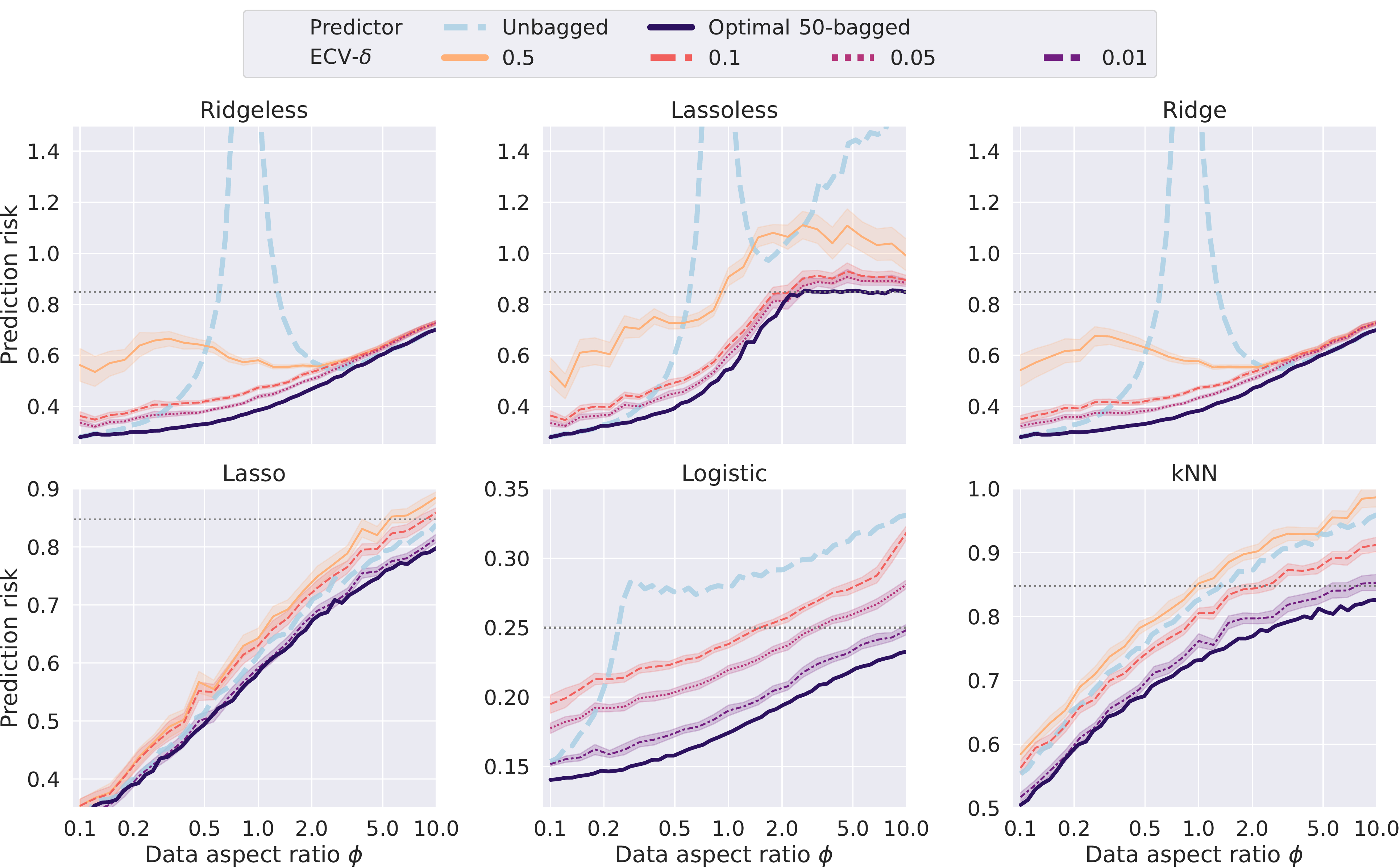}
    \caption{Prediction risk for different subagged predictors by \ECV, under model \ref{model:linear}, for varying $\phi$ and tolerance threshold $\delta$.
    An ensemble is fitted when \eqref{eq:tobag} is satisfied with $\zeta=5$.
    The null risks and the risks for the non-ensemble predictors are marked as gray dotted lines and blue dashed lines, respectively.
    The points denote finite-sample risks, and the shaded regions denote the values within one standard deviation.
    See \Cref{app:ex-k-M} for more details.}
    \label{fig:oobcv-linear-subagging}
\end{figure}

\begin{figure}[!t]
    \centering
    \includegraphics[width=0.8\textwidth]{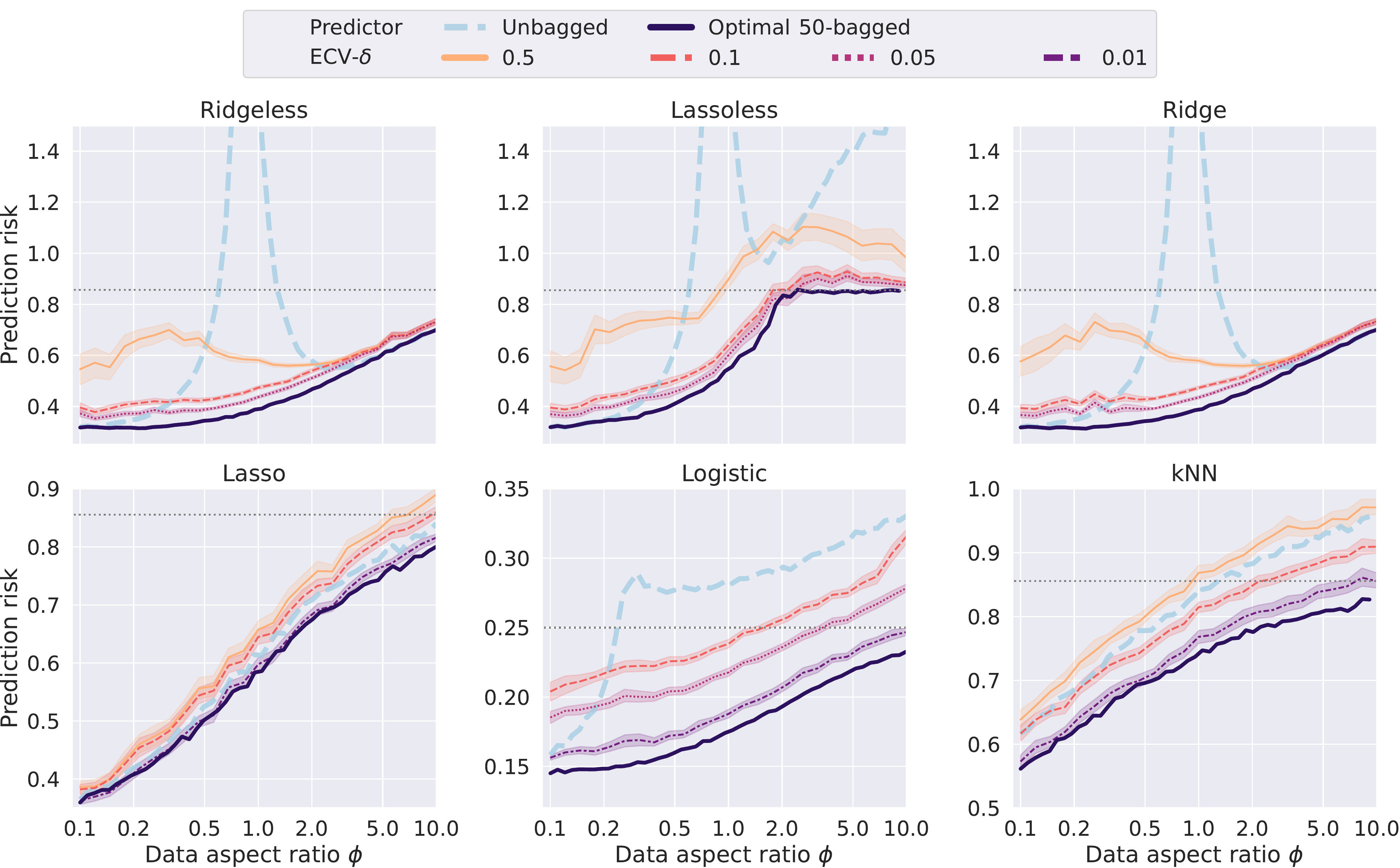}
    \caption{Prediction risk for different subagged predictors by \ECV, under model \ref{model:quad}, for varying $\phi$ and tolerance threshold $\delta$.
    An ensemble is fitted when \eqref{eq:tobag} is satisfied with $\zeta=5$.
    The null risks and the risks for the non-ensemble predictors are marked as gray dotted lines and blue dashed lines, respectively.
    The points denote finite-sample risks, and the shaded regions denote the values within one standard deviation.
    See \Cref{app:ex-k-M} for more details.}
    \label{fig:oobcv-quad-subagging}
\end{figure}

\begin{figure}[!t]
    \centering
    \includegraphics[width=0.8\textwidth]{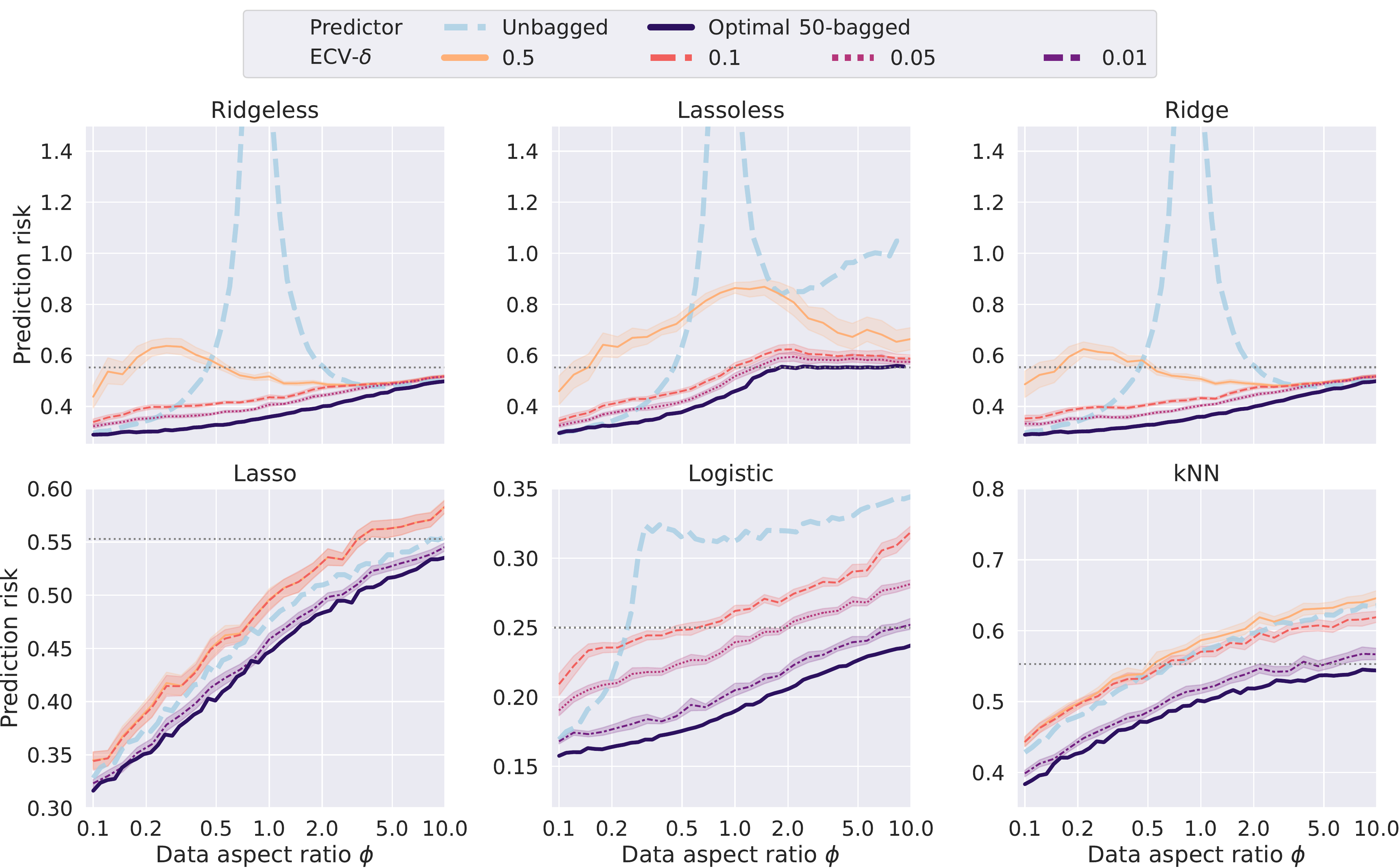}
    \caption{Prediction risk for different subagged predictors by \ECV, under model \ref{model:tanh}, for varying $\phi$ and tolerance threshold $\delta$.
    An ensemble is fitted when \eqref{eq:tobag} is satisfied with $\zeta=5$.
    The null risks and the risks for the non-ensemble predictors are marked as gray dotted lines and blue dashed lines, respectively.
    The points denote finite-sample risks, and the shaded regions denote the values within one standard deviation.
    See \Cref{app:ex-k-M} for more details.}
    \label{fig:oobcv-tanh-subagging}
\end{figure}

\clearpage
\subsection{Imbalanced classification}\label{app:subsec:imbalance}

    For classification tasks, $K$-fold cross-validation is recognized to have suboptimal performance in the case of imbalanced data. 
    In this subsection, we evaluate the performance of ECV and $K$-fold cross-validation under varying degrees of class imbalance.
    When $K$ is large, $K$-fold cross-validation also behaves similarly to the leave-one-out cross-validation.
    However, due to the increased computational complexity, we only examine $ K=3$, $5$, and $10$.
    We evaluate these methods using the following two metrics:
    \begin{itemize}[labelsep=1mm,leftmargin=7mm]
        \item Pointwise prediction error: the absolute error between the risk estimate and the true prediction risk of each ensemble predictor.
        
        \item Tuned prediction error: the absolute error between the prediction risk of the tuned ensemble predictor and the one of the optimal ensemble predictors fitted on all training data.
    \end{itemize}
    From \Cref{fig:clf}, we can see that when $M$ is small, the pointwise prediction error of \kfoldcv increases in both the number of fold $K$ and the class proportion.
    This indicates that \kfoldcv is unstable in the case of unbalanced data.    
    As $M$ increases, the prediction risk gets stabilized and all methods have similar performance.
    On the other hand, \ECV has stable prediction errors across different class proportions and much smaller computational complexity than \kfoldcv, as pointed out in \Cref{subsec:practice}.

    In terms of tuned prediction errors, when $M$ is small, the optimal risks are obtained at either the smallest subsample size in the overparameterized regime or the largest one in the overparameterized regime, as shown in \Cref{fig:est-logistic}.
    Thus, it is easier to achieve good predictive performance once the subsample size is close to the endpoint of the grid.
    As a result, we see that the tuned prediction errors of all methods are smaller than their pointwise prediction errors.
    However, \kfoldcv has increasing prediction error and variability as the class proportion increases, even when $M$ is large.

    Overall, the \ECV method is more accurate, robust, and efficient than the \kfoldcv method for binary classification tasks in the presence of class imbalance.
    
    \begin{figure}[!ht]
        \centering
        \includegraphics[width=0.8\textwidth]{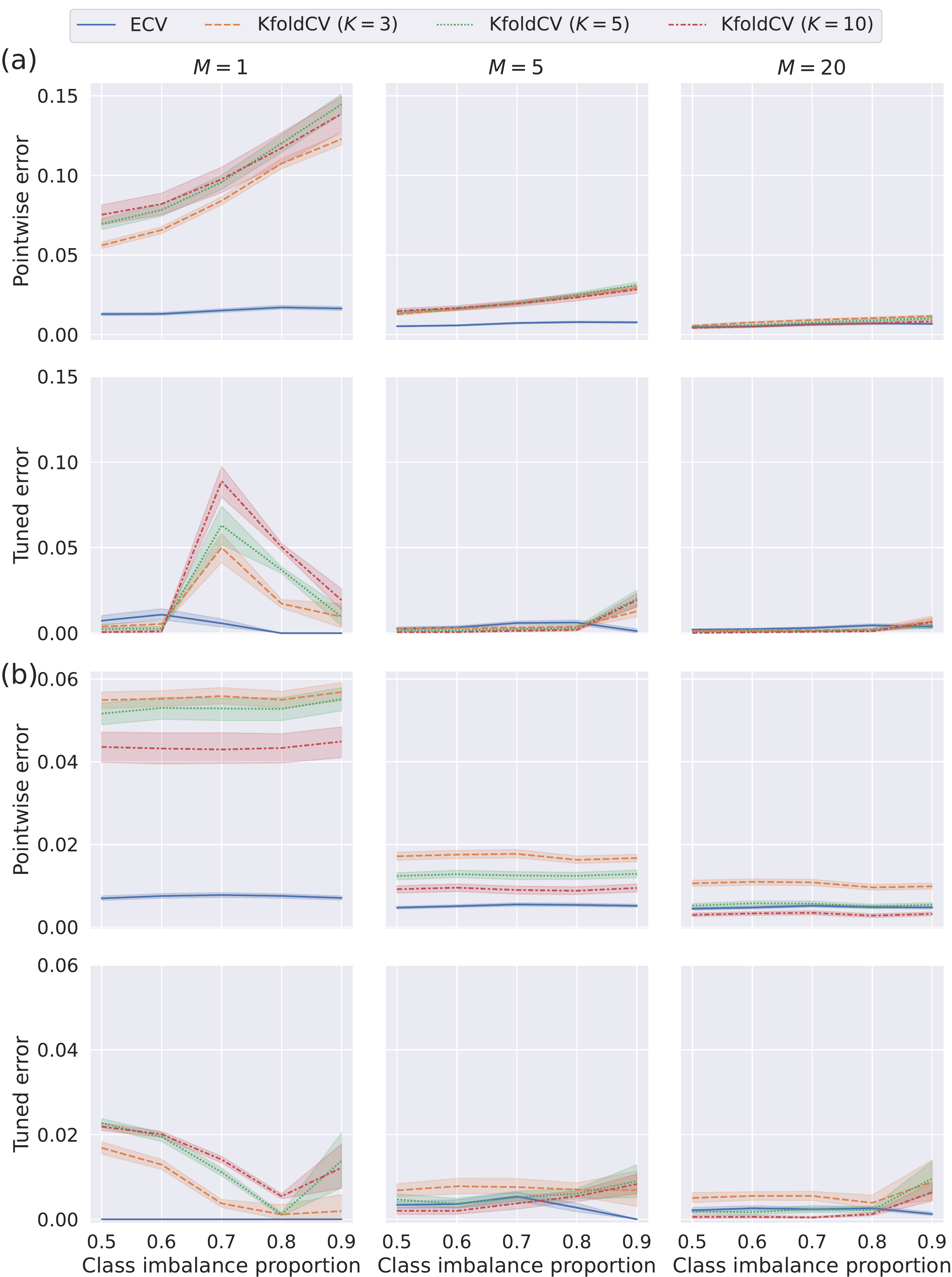}
        \caption{The pointwise and tuned errors for Logistic predictors with varying negative class proportions, in (a) the underparameterized regime ($\phi=0.1$) and (b) the overparameterized ($\phi=10$) regimes, with $n=1,000$ and $p=\lfloor n\phi\rfloor$ across 50 repetitions.
        The setup is under model \ref{model:quad} with $\rhoar=0.5$, where the response is further binarized by thresholding at the specified quantile from 0.5 to 0.9 on the training data.
        \ECV uses $M_0=10$ to extrapolate the estimates.
        }\label{fig:clf}
    \end{figure}

\clearpage
\subsection{Risk extrapolation in \Cref{subsec:tuning-random-forests}}
\label{app:tuning-random-forests}

\subsubsection[Sensitivity analysis of M0]{Sensitivity analysis of $M_0$}\label{app:subsubsec:M0}
    To inspect the performance of \ECV on different values $M_0$, the number of base predictors used for estimation, we conduct sensitivity analysis of $M_0$ in terms of relative pointwise error (defined as the ratio of pointwise error to the null risk) and time complexity.
    Because the scale of the (absolute) prediction errors may be quite different in the underparameterized and overparameterized regimes, we normalize it by the null risk so that it is more informative when comparing the two scenarios.
    The results are shown in \Cref{tab:M0_err}.
    We see that the relative error decreases as $M_0$ increases initially; however, it gets stable when $M_0 \geq 20$ in both the underparameterized and overparameterized scenarios.
    As anticipated, the errors are slightly greater in the overparameterized regime, reflecting the inherent difficulty of estimation in this scenario.
    On the other hand, the time complexity increases in $M_0$ sublinearly due to the parallel computation. 
    We also observe an increase in the time complexity in the overparameterized regime, which is unavoidable because of fitting deeper random forests.
    This indicates that one could choose a relatively large $M_0$, e.g. $M_0=20$ to obtain good accuracy with a small computation cost.

    One can also determine $M_0$ adaptively when building the ensembles in an online learning fashion.
    Since the risk estimate as a function $M_0$ converges to a certain limit, we can stop increasing $M_0$ when the risk estimate changes slowly enough.
    There are various criteria one may consider for early stopping.
    For instance, comparing the variance of risk estimate to a pre-specified threshold, as in \citet{lopes2019estimating,lopes2020measuring}.
    
\begin{table}[!ht]\centering
    \begin{tabular}{cccrrrrr}
    \toprule
    \multicolumn{3}{c}{\multirow{2}{*}{$\phi$}} & \multicolumn{5}{c}{$M_0$} \\\cmidrule(lr){4-8}
    & & & 5 & 10 & 15 & 20 & 25 \\
    \midrule
    \multirow{4}{*}{Pointwise error} & \multirow{2}{*}{0.1} & mean & 0.0874 & 0.0731 & 0.0668 & 0.0638 & 0.0661 \\
    & & sd & 0.0691 & 0.0570 & 0.0628 & 0.0881 & 0.0696 \\\cmidrule(lr){3-8}
     & \multirow{2}{*}{10} & mean & 0.1133 & 0.1311 & 0.1156 & 0.1058 & 0.0945 \\
    & & sd & 0.1090 & 0.1272 & 0.0882 & 0.0939 & 0.0812 \\\cmidrule(lr){2-8}
    \multirow{4}{*}{Tuned error} & \multirow{2}{*}{0.1} & mean & 0.0002 & 0.0006 & 0.0004 & 0.0004 & 0.0001 \\
     &  & sd & 0.0006 & 0.0012 & 0.0011 & 0.0011 & 0.0004 \\
    \cmidrule(lr){3-8}
     & \multirow{2}{*}{10} & mean & 0.0002 & 0.0003 & 0.0002 & 0.0002 & 0.0002 \\
     &  & sd & 0.0006 & 0.0011 & 0.0007 & 0.0007 & 0.0005 \\\cmidrule(lr){2-8}
    \multirow{4}{*}{Time} & \multirow{2}{*}{0.1} & mean & 0.0718 & 0.0870 & 0.1067 & 0.1388 & 0.1728 \\
    & & sd & 0.0295 & 0.0346 & 0.0292 & 0.0381 & 0.0301 \\\cmidrule(lr){3-8}
     & \multirow{2}{*}{10} & mean & 0.9249 & 1.2543 & 1.4492 & 1.9112 & 1.9652 \\
    & & sd & 0.2702 & 0.3360 & 0.3746 & 0.4383 & 0.4035 \\
    \bottomrule
    \end{tabular}
    \caption{The effect of $M_0$ on the relative prediction errors and computational time for random forests under model \ref{model:quad} with $n = 500$ across 50 repetitions.}
    \label{tab:M0_err}
\end{table}

\subsubsection[Sensitivity analysis of rhoar1]{Sensitivity analysis of $\rhoar$}\label{app:subsubsec:rho}
To inspect the impact of $\rhoar$ related to feature correlation, we conduct sensitivity analysis of $\rhoar$ and the results of the relative prediction error (defined as the ratio of prediction error to the null risk) are shown in \Cref{tab:rho}.
When the magnitude of $\rhoar$ is small, the variability in the pointwise prediction errors of \ECV tends to increase.
This phenomenon occurs because when $\rhoar$ approaches zero, the features exhibit almost complete independence. In such cases, random forests struggle to leverage collinearity for mitigating prediction risk.
Overall, the relative error lies between 0.05 and 0.15 in various settings.

\begin{table}[!ht]\centering
    \begin{tabular}{cccrrrrrrr}
    \toprule
    & \multirow{2}{*}{$\phi$} & & \multicolumn{7}{c}{$\rhoar$} \\\cmidrule(lr){4-10}
    & & & -0.75 & -0.5 & -0.25 & 0 & 0.25 & 0.50 & 0.75 \\
    \midrule
    \multirow{4}{*}{Pointwise error} & \multirow{2}{*}{0.1} & mean & 0.0663 & 0.0822  & 0.1119 & 0.1158 & 0.1053 & 0.0916 & 0.0552 \\
    && sd & 0.0471 & 0.0803 & 0.0803 & 0.0948 & 0.0819 & 0.0816 & 0.0714 \\\cmidrule(lr){3-10}
    &\multirow{2}{*}{10} & mean & 0.1322 & 0.1348 & 0.1524 & 0.1277 & 0.1348 & 0.1143 & 0.1154 \\
    && sd & 0.1080 & 0.0927 & 0.1145 & 0.1102 & 0.1145 & 0.1107 & 0.1004 \\\cmidrule(lr){2-10}
    \multirow{4}{*}{Tuned error} & \multirow{2}{*}{0.1} & mean & 0.0004 & 0.0002 & 0.0007 & 0.0006 & 0.0003 & 0.0003 & 0.0004 \\
    && sd & 0.0008 & 0.0007 & 0.0018 & 0.0014 & 0.0009 & 0.0008 & 0.0007 \\\cmidrule(lr){3-10}
    &\multirow{2}{*}{10} & mean & 0.0004 & 0.0004 & 0.0010 & 0.0004 & 0.0005 & 0.0007 & 0.0004 \\
    && sd & 0.0009 & 0.0009 & 0.0022 & 0.0011 & 0.0013 & 0.0016 & 0.0008 \\
    \bottomrule
    \end{tabular}
    \caption{The effect of $\rhoar$ on the relative prediction errors for random forests under model \ref{model:quad} with $n = 500$ and $M_0=20$ with varying values of $\rhoar$ across 50 repetitions.}
    \label{tab:rho}
\end{table}

\subsubsection{Sensitivity analysis of covariance structures}\label{app:subsubsec:cov}
    To inspect the performance of \ECV on more complex covariance structures, we test it on the following three block-diagonal or graph-based correlation structures under data model \ref{model:quad}:
    \begin{itemize}[labelsep=1mm,leftmargin=7mm]
        \item A usual AR1 covariance matrix $\bSigma_{0}$.
        \item A block-diagonal covariance matrix with 2 blocks $(\bSigma_{0}^{(1)},\bSigma_{0.5}^{(2)})$ and $\bSigma_{0}^{(1)},\bSigma_{0.5}^{(2)}\in\RR^{p/2}$.
        \item A block-diagonal covariance matrix with 3 blocks $(\bSigma_{0}^{(1)},\bSigma_{0.5}^{(2)},\bSigma_{-0.5}^{(3)})$ and $\bSigma_{0}^{(1)},\bSigma_{0.5}^{(2)},\bSigma_{-0.5}^{(3)}\in\RR^{p/3}$.
    \end{itemize}
    The results of the relative prediction error (defined as the ratio of prediction error to the null risk) are shown in \Cref{tab:rho}.
    In these more complex situations, the relative pointwise prediction errors are also between 0.05 and 0.15, as we observed in \Cref{app:subsubsec:rho}.
    Thus, we conclude that \ECV is robust across different feature correlation structures as tested in the current experiment.

    \begin{table}[!ht]\centering
        \begin{tabular}{cccrrr}
        \toprule
        & \multirow{2}{*}{$\phi$} & & \multicolumn{3}{c}{Type of block structure} \\\cmidrule(lr){4-6}
        & & & 1 & 2 & 3 \\
        \midrule
        \multirow{4}{*}{Pointwise error} & \multirow{2}{*}{0.1} & mean & 0.1158 & 0.0827 & 0.0725 \\
        & & sd & 0.0729 & 0.0513 & 0.0736 \\\cmidrule(lr){3-6}
        & \multirow{2}{*}{10} & mean & 0.1277 & 0.1287 & 0.1497 \\
        & & sd & 0.1190 & 0.0988 & 0.1003 \\\cmidrule(lr){2-6}
        \multirow{4}{*}{Tuned error} & \multirow{2}{*}{0.1} & mean & 0.0006 & 0.0004 & 0.0004 \\
         &  & sd & 0.0014 & 0.0009 & 0.0009 \\\cmidrule(lr){3-6}
         & \multirow{2}{*}{10} & mean & 0.0004 & 0.0005 & 0.0004 \\
         &  & sd & 0.0011 & 0.0010 & 0.0011 \\
        \bottomrule
        \end{tabular}
        \caption{The effect of covariance structure on the relative prediction errors under model \ref{model:quad} with $n = 500$ and $M_0=20$ across 50 repetitions.}
        \label{tab:cov}
    \end{table}

    \subsubsection{Tuning feature subsampling ratios}\label{app:subsubsec:mtry}

    In the paper, we describe how to tune the ensemble and subsample sizes.
    In practice, there are also other hyperparameters that people would like to tune.
    For instance, the faction of features to consider when looking for the best split of random forests, related to parameter \texttt{max\_features} in Python package \texttt{sckit-learn} \citep{scikit-learn} or parameter \texttt{mtry} in R package \texttt{randomForest} \citep{liaw2002classification}.
    In practice, there are a few common heuristics for choosing a value for \texttt{mtry}. 
    For example, \texttt{mtry} can be set as $\sqrt{p}$, $p/3$, or $\log_2(p)$, where $p$ is the total number of features.
    In the regression setting, $\texttt{mtry}=p/3$ is suggested by \citet{breiman2001random}, while $\texttt{mtry}=p$ was more recently justified empirically in \citet{geurts2006extremely}.
    These heuristics are a good place to start when determining what value to use for \texttt{mtry}, before doing a grid search on \texttt{mtry}.
    Below, we illustrate the utility of ECV in tuning factions of features for random forests.

    We consider data model \ref{model:quad} with $\sigma=5$, $n=100$ and $p=1,000$ (such that data aspect ratio is $\phi=10$).
    We set the maximum ensemble size $M_{\max}$ to be 100, which is the default in \texttt{sckit-learn}, and set $M_0=25$.
    We first fix the subsample size as $k=90$, and tune over factions of features $r\in\cR=\{0.1,0.2,\ldots,0.9,1\}$.
    After $\hat{r}$ is obtained based on the ECV estimates, we further tune $k\in\cK=\{10,20,\ldots,90\}$ by fixing the factions of features as $\hat{r}$.

    The results are shown in \Cref{fig:mtry}.
    For tuning factions of features, we see a huge gain of time-saving from extrapolation of the risk estimate of the ensemble size $M$.
    This suggests that ECV is also beneficial in tuning other parameters other than the subsample size. 
    On the other hand, even after fractions of features are tuned, we observed improvement with further tuning of the subsample size.
    When the ensemble size is $M=10$, the improvement is significant. 
    When $M$ is large, we still see consistent slight improvement compared to the one with only fractions of features tuned.

    \begin{figure}
        \centering
        \includegraphics[width=\textwidth]{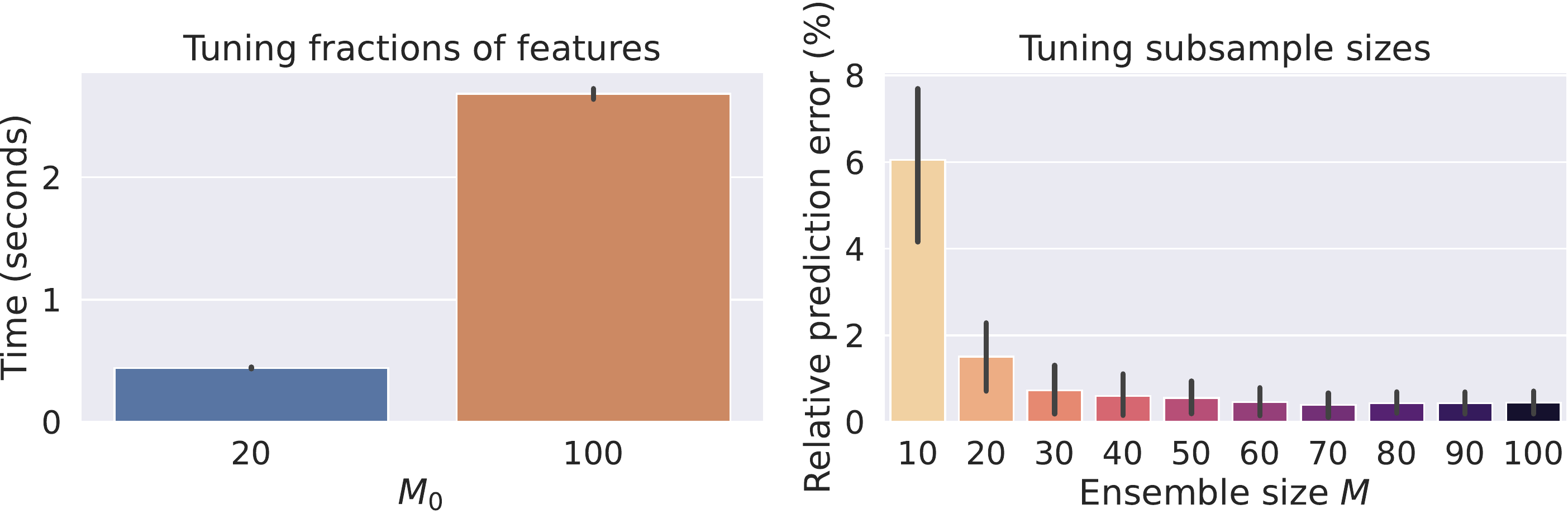}
        \caption{Tuning factions of features, subsample size by ECV.
        The left panel shows that computational time with extrapolation ($M_0=20$) and without extrapolation ($M_0=100$).
        The right panel shows the relative prediction error (the difference between the prediction errors without and with subsample size tunings, normalized by the null risk) in ensemble size $M$.}
        \label{fig:mtry}
    \end{figure}

\subsection{Risk profile in \Cref{sec:app-sc}}
    \label{app:sc}

    \citet{hao2021integrated} collected samples of 50,781 human peripheral blood mononuclear cells (PBMCs) originating from eight volunteers post-vaccination (day 3) of an HIV vaccine.
    The readers can follow the Seurat tutorial: \url{https://satijalab.org/seurat/articles/multimodal_reference_mapping.html} to download the preprocessed dataset.
    This single-cell CITE-seq dataset simultaneously measures 20,729 genes and 228 proteins in individual cells.
    As most genes are not variable across the dataset, we further subset the top 5,000 highly variable genes that exhibit high cell-to-cell variation in the dataset and the top 50 highly abundant surface proteins.
    Then, the genes and proteins are size-normalized to have total counts $10^4$ and log-normalized.
    In Figure \ref{fig:sc_overview}, we visualize the low-dimensional cell embeddings as well as the histograms of gene expressions and protein abundances in the Mono cell type.
    Overall, the gene expressions are extraordinarily sparse and zero-inflated distributed.
    On the contrary, the distribution of protein abundances is close to normal distributions.    
    \begin{figure}[!ht]
        \centering            \includegraphics[width=\textwidth]{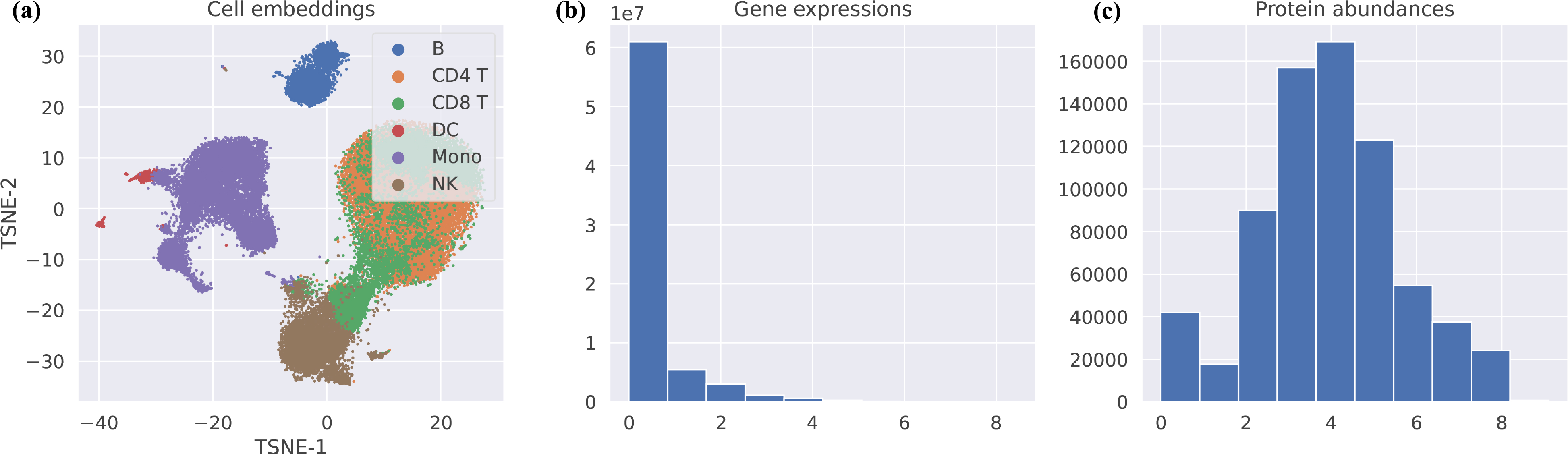}        
        \caption{Overview of the single-cell CITE-seq dataset from \citet{hao2021integrated}.
        {(a)} The 2-dimensional tSNE cell embeddings.
        {(b-c)} The histogram of overall log-normalized gene expressions and protein abundances in the Mono cell type.}
        \label{fig:sc_overview}
    \end{figure}
    
    \begin{table}[!ht]
        \caption{Description of different cell types in the CITE-sep dataset \citep{hao2021integrated}.
      The samples of PBMCs originate from eight volunteers post-vaccination (day 3) of an HIV vaccine.}\label{tab:dataset}
        \centering
        \begin{tabular}{cccc}
            \toprule
            \textbf{Cell type} & \textbf{Training size $n$} & \textbf{Test size $n_{\test}$} & \textbf{Data aspect ratio $p/n$}\\
            \midrule
            DC & 516 & 515 & 9.71\\[0.2em]
            B & 2279 & 2279 & 2.19\\[0.2em]
            NK & 3152 & 3152 & 1.59\\[0.2em]
            Mono & 7156 & 7156 & 0.70\\
            \bottomrule
        \end{tabular}
    \end{table}
    
    As there is the most outstanding level of heterogeneity within T cell subsets, we only consider the non-T cell types for the experiments and randomly split cells in each cell type into the training and test sets with equal probability.
    Different cell types consist of different numbers of cells and have various data aspect ratios.
    As summarized in \Cref{tab:dataset}, the four cell types cover both low-dimensional $(n>p)$ and high-dimensional $(n<p)$ datasets.
    Because the gene expressions exhibit high dimensionality, sparsity, and heterogeneity, predicting the protein abundances based on the transcriptome is thus a challenging problem.

    \begin{figure}[!t]
        \centering
            \includegraphics[width=0.8\textwidth]{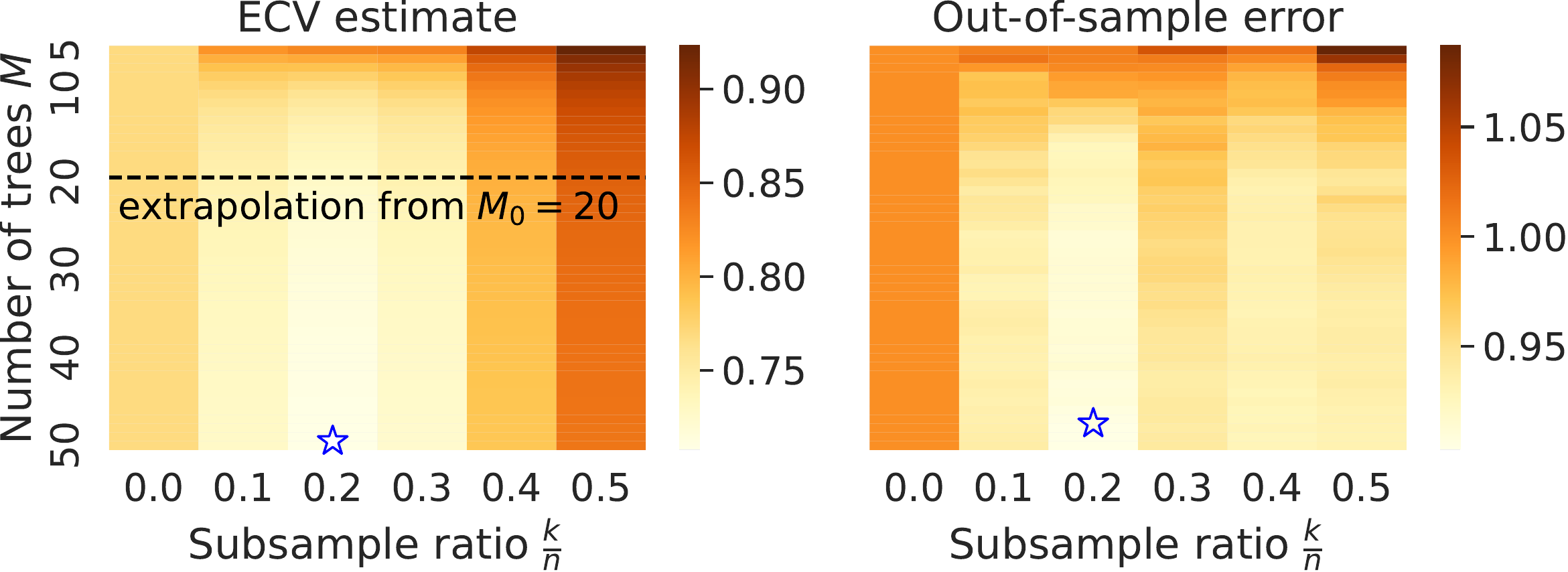}
        \caption{Heatmap of \ECV performance on predicting the abundances of surface protein CD103 in the DC cell type by random forests (subagging).
        The left and right panels show the NMSEs of OOB risk estimates and out-of-sample prediction risk, respectively.
        The values are normalized by the empirical variance of the response estimated from the test set; the darker, the larger value of NMSE.
        The extrapolated risk estimates are based on $M_0=20$ trees.}
        \label{fig:heatmap_5}
    \end{figure}

    \begin{figure}[H]
        \centering
            \includegraphics[width=0.9\textwidth]{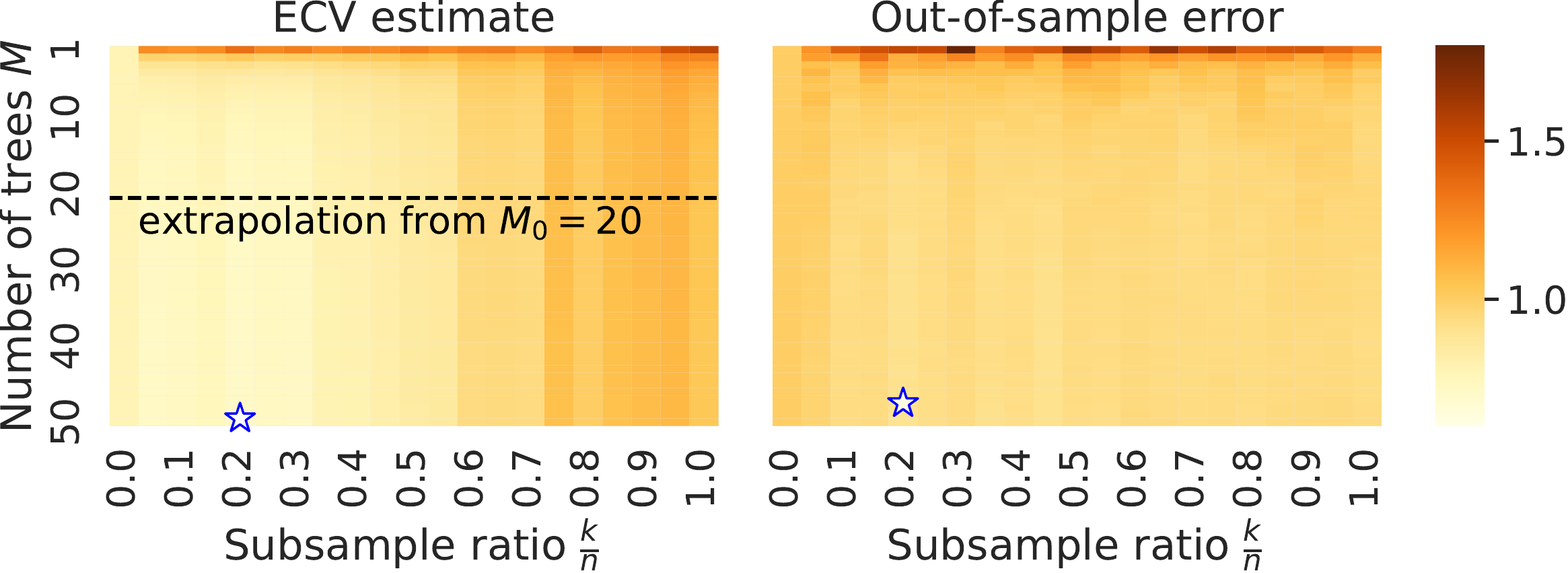}
        \caption{Heatmap of \ECV performance on predicting the abundances of surface protein CD103 in the DC cell type by random forests (subagging) as in \cref{fig:heatmap_5} with $M\in\{1,\ldots, 50\}$ and $k/n$ ranging from 0 to 1 displayed.}
        \label{fig:heatmap}
    \end{figure}

    \begin{figure}[H]
        \centering
            \includegraphics[width=\textwidth]{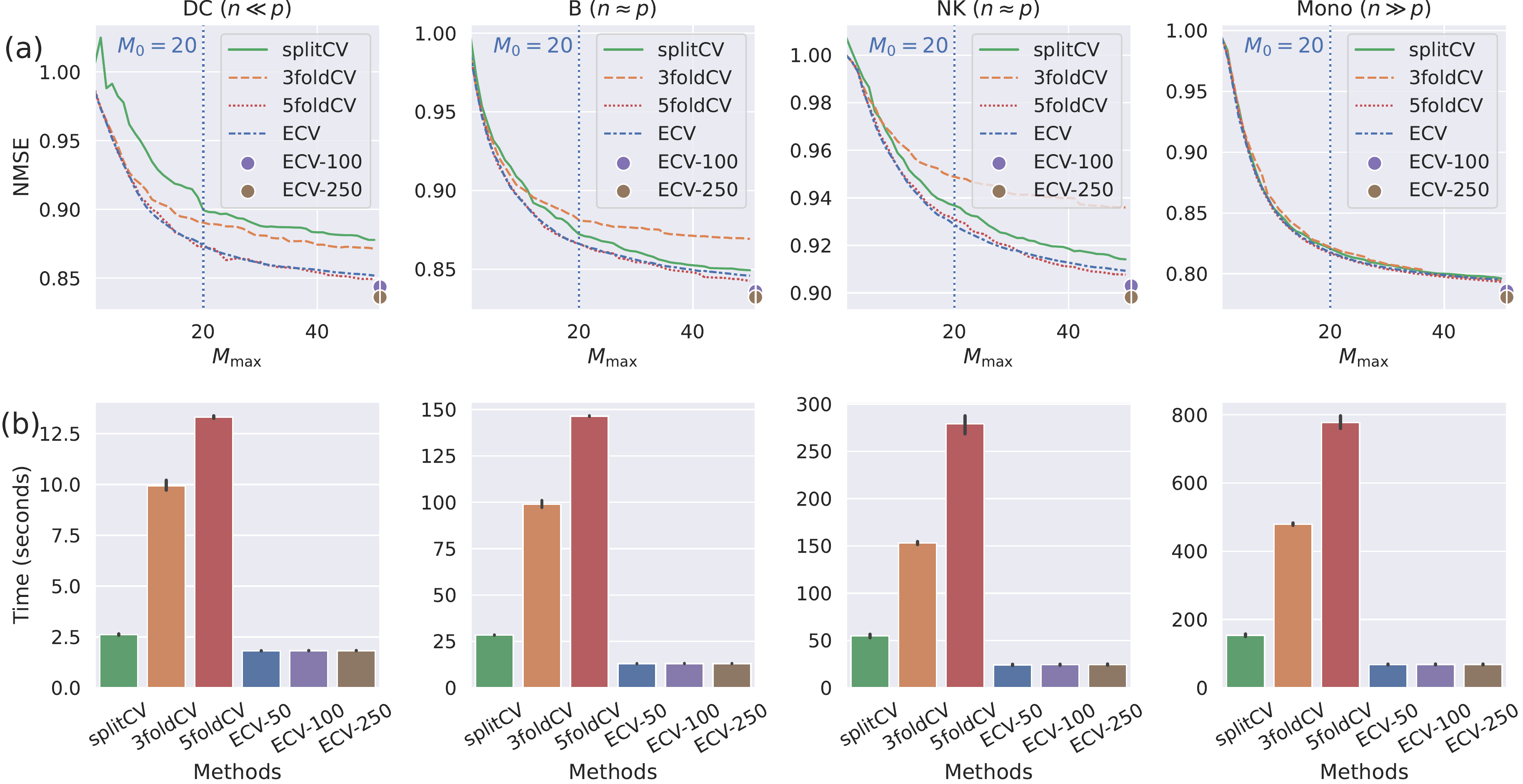}
        \caption{Performance of cross-validation methods on predicting the protein abundances in different cell types. 
        (a) The average normalized mean squared error (NMSE) of the cross-validated predictors for different methods. \ECV (bagging) uses $M_0=20$ trees to extrapolate risk estimates, and the dashed lines represent the performance using $M_{\max}\in\{100,250\}$.
        (b) The average time consumption for cross-validation in seconds.
        }
    \end{figure}

\bigskip

\end{bibunit}

\end{document}